%% file: main.tex
\documentclass[onecolumn,a4paper]{quantumarticle}
\pdfoutput=1

\usepackage[utf8]{inputenc}
\usepackage{amsmath,amssymb,amsthm}
\usepackage{mathtools}
\usepackage{mathrsfs}
\usepackage{bm}
\newcommand{\ket}[1]{\lvert #1 \rangle}
\newcommand{\bra}[1]{\langle #1 \rvert}

\usepackage{graphicx}
\usepackage{booktabs}
\usepackage{pifont}
\usepackage{tikz}
\usetikzlibrary{arrows.meta,calc,decorations.pathmorphing,patterns}
\usepackage[hidelinks]{hyperref}
\usepackage{cleveref}
\crefname{condition}{Condition}{Conditions}
\Crefname{condition}{Condition}{Conditions}
\crefname{assumption}{Assumption}{Assumptions}
\Crefname{assumption}{Assumption}{Assumptions}
\crefname{estimate}{Numerical estimate}{Numerical estimates}
\Crefname{estimate}{Numerical estimate}{Numerical estimates}

\theoremstyle{plain}
\newtheorem{theorem}{Theorem}
\newtheorem{lemma}{Lemma}
\newtheorem{proposition}{Proposition}
\newtheorem{corollary}{Corollary}

\theoremstyle{definition}
\newtheorem{definition}{Definition}
\newtheorem{condition}{Condition}

\theoremstyle{remark}
\newtheorem{remark}{Remark}

\newcommand{\Zt}{\mathbb{Z}_2}
\newcommand{\anyon}[1]{\mathsf{#1}}
\newcommand{\ee}{\anyon{e}}
\newcommand{\mm}{\anyon{m}}
\newcommand{\eps}{\anyon{\epsilon}}
\newcommand{\Xbar}{\overline{X}}
\newcommand{\Ybar}{\overline{Y}}
\newcommand{\Zbar}{\overline{Z}}
\newcommand{\Hxy}{H_{XY}}
\newcommand{\Eacc}{E_{A}}
\newcommand{\Ehat}{\widehat{E}_{A}}
\newcommand{\pacc}{p_{\mathrm{acc}}}
\newcommand{\Dstab}{\Delta_{\mathrm{stab}}}
\newcommand{\Braid}{B}
\newcommand{\magicc}{\mathrm{magic}_c}
\newcommand{\Dins}{D_{\mathrm{ins}}}
\newcommand{\dFR}{d_{\mathrm{FR}}}
\newcommand{\FInst}{\mathsf{Free}}
\newcommand{\Gfrak}{\mathfrak{G}}
\DeclareMathOperator{\idop}{id}
\newcommand{\id}{\idop}
\DeclareMathOperator{\dist}{dist}
\DeclareMathOperator{\Tr}{Tr}

\begin{document}

\title{The resource cost of magic in a code block}

\author{Jiachen Shen}
\email{jshen28@cougarnet.uh.edu}
\affiliation{Department of Electrical and Computer Engineering, University of Houston, Houston, Texas 77204, USA}
\author{Hui Zhong}
\email{zhongh7@miamioh.edu}
\thanks{Corresponding author.}
\affiliation{Department of Computer Science and Software Engineering, Miami University, Oxford, Ohio 45056, USA}

\maketitle

\input{sections/abstract}
\input{sections/intro}
\input{sections/prelim}
\input{sections/results}
\input{sections/classification}
\input{sections/discussion}

\section*{Code availability}
\begin{sloppypar}
The scripts that check the numerical statements of this paper are archived at
\url{https://doi.org/10.5281/zenodo.22162937}, which resolves to the current release, and developed at
\url{https://github.com/Mercury0828/magic-cultivation-artifact}. The numbers quoted here were produced at
commit \texttt{7da798e}. Each script is self-contained, asserts the code parameters it
uses, fixes its random seed where it samples, and exits with a nonzero status on a mismatch, and
\texttt{run\_all.py} runs all six and reports one verdict. The scripts need only Python~3.9 or later
and NumPy. Where a check samples it says so and fixes its seed, while the
two statements that are not sampled, the code parameters and the rank fact behind the ambiguity group,
are asserted outright. They confirm the closed forms, the one-qubit
norm constants, and the two combinatorial facts about the rotated surface code that
\cref{app:standard,app:localclass} quote. No proof in this paper depends on them.
\end{sloppypar}

\section*{Author contributions}
J.S. and H.Z. jointly developed the framework, the theorem statements and proofs, and the manuscript. H.Z. is the
corresponding author. The authors declare no competing interests.

\bibliographystyle{quantum}
\bibliography{refs}

\appendix
\input{appendix/B-thm2}
\input{appendix/E-example}
\input{appendix/F-localclass}

\end{document}

%% file: sections/abstract.tex
% Abstract
\begin{abstract}
We bound the magic of a post-selected logical measurement by the resource that produced it. The setting is one code block with one logical qubit and an adaptive
protocol that measures, feeds forward and accepts. The witness reads the accepted effect against the free set of the resource
theory of magic, outcome by outcome and not on the averaged channel, since a channel can be free
while one of its outcomes measures the magic axis. Our first bound is unconditional. The accepted
magic is at most a constant times the summed distance of the cells from the free set. The second is the main result. When the resource cells sit inside a bounded-spread exact-recovery skeleton, the
recovery puts every insertion history below a threshold onto a single free branch, transcript by
transcript, so only connected clusters reaching the threshold contribute and an exact-component
expansion controls their weight. With a threshold linear in the code distance, polynomially many cells of bounded insertion degree and
per-cell dilation amplitude $O(1/d)$, the accepted magic times the acceptance probability is at most
$\exp[-\Omega(d\log d)]$. Post-selection is disposed of before accepted transcripts are summed, so a
branch of vanishing probability cannot be amplified into a magic effect. The threshold is certified from a circuit, and we run it
on one exact round of stabilizer measurement followed by a split readout, which measures logical $X$
on the accepted fibre and logical $Z$ on the rejected ones. That certifies a threshold equal to the code distance for every single-layer pattern
of weak $Z$-rotations, one per data qubit, so the hypotheses are met by a family and not one design. A member carries magic only if its support contains a logical $Z$ string. One
member attains the exponent, again at the level of the accepted effect. Suppression is set by the threshold and not by the
topology of the block.
\end{abstract}

%% file: sections/intro.tex
% Section 1 Introduction
\section{Introduction}\label{sec:intro}

Fault-tolerant quantum computers built from surface codes are efficient at Clifford operations and pay almost all of
their cost for the non-Clifford gates, which are supplied by magic states, of which $\ket{T}$ is the standard
example~\cite{BravyiKitaev2005,FowlerMariantoni2012}.
Fault tolerance sits alongside distributed architectures~\cite{Zhong2025UNIQ,Zhong2024DistributedQDP} and near-term
noise, error-correction and privacy
management~\cite{Zhong2023TuningQEC,Li2023ProjectionDP,Zhao2024QDPInsights,Ju2023HarnessingNoises} in the effort to
make quantum computing practical, and the layer this paper works in is the fault-tolerant one, where the magic-state
bottleneck lives.
A large-scale computation needs magic states in enormous numbers, and producing each clean one dominates the qubit and
time budget, so any saving in magic-state production is a saving in the whole machine. The standard way to produce
clean magic states is distillation, a long line of protocols that trade many noisy copies for fewer cleaner
ones~\cite{BravyiKitaev2005,BravyiHaah2012,MeierEastinKnill2013,Jones2013,CampbellTerhalVuillot2017}. Distillation is
expensive, and the reason is quantified by the resource theory of magic. That theory measures how far a state sits
from the free stabilizer set and shows that the distance cannot be created by free
operations~\cite{Veitch2014,HowardWallman2017,HowardCampbell2017,SeddonCampbell2019,BravyiGosset2016,HeinrichGross2019}.
The distance is a genuine resource, and a fault-tolerant architecture must pay to import it from outside the free set.
A recent and much cheaper alternative is magic state cultivation, which grows a $\ket{T}$ state in place on a small
surface-code patch and then enlarges the code distance~\cite{GidneyShuttyJones2024,Chen2026}. Cultivation depends on one delicate step, the
measurement of the magic axis $\Hxy=(\Xbar+\Ybar)/\sqrt2$, which is a non-Pauli logical observable whose eigenstates
are the $\ket{T}$-type states. Every surface-code cultivation performs this check through a special piece of structure, either a
fold-transversal Hadamard or a self-dual patch that turns the code back on itself~\cite{Claes2025,Sahay2025},
or a transversal operation between two blocks~\cite{Vaknin2025}.

That structure is usually described as necessary, and the reason given is topological. In the surface code each
logical Pauli is an anyon string, with $\Xbar$ carrying the charge $\mm$, $\Zbar$ carrying $\ee$, and $\Ybar$ carrying
$\eps=\ee\times\mm$. A sharp check of $\Hxy$ superposes $\Xbar$ and $\Ybar$, hence the charges $\mm$ and $\eps$, and
these two braid nontrivially with $\Braid(\mm,\eps)=-1$. The argument then runs that a plain patch supplies no local
coherent charge converter while a fold supplies one. We report it as the standard reading and use none of it.

This paper takes a different scale for the same question, because what a protocol has to pay is set by the resource
theory of magic, and the reading we develop measures each piece of a protocol against the free set of that theory. On
the stabilizer end the answer is already known, since stabilizer operations including post-selection are free, closure of the
free set under them is established~\cite{Veitch2014,HowardCampbell2017,BravyiGosset2016}, and
\cref{prop:exactnative} records what that gives on a patch. Every accepted effect of a stabilizer protocol lies in the
stabilizer-effect octahedron, for any decoder, any post-selection rule and any rate of local stochastic Pauli noise,
and every point of that octahedron is reached by a noiseless member of the class, so the reachable set of the class is
the octahedron exactly and the magic axis lies outside it. Distance, locality, the anyon labels and the exclusion of
folds do no work in the proof, which is why we treat it as a boundary marker and not as a result.

The question it leaves open is the one this paper answers. Call a cell \emph{syndrome-preserving} when its Kraus
operators commute with every stabilizer generator, so the cell creates no excitation. This is the operational form of
the statement that a cell moves no anyon, and it is decided by commutation on the lattice, not by a
homological label. A
syndrome-preserving cell may be a non-Clifford completely positive map, its branch operator need not collapse to a
projector, and the closure argument says nothing about it. Syndrome preservation is therefore not by itself a
freeness condition, and the way to constrain such a protocol is to measure how far each of its cells sits from the
free set and to ask what the total buys. Two
weak non-projective filters along different logical axes already show that the gap is real, since their composition
has accepted Bloch vector of $\ell_1$ norm above one while each cell is close to the identity.

\subsection*{Results}

We prove two bounds and exhibit a protocol that saturates the second. \Cref{fig:protocol} draws one
protocol of the kind they are about, so that the objects named below have something concrete to sit on.

\begin{figure}[t]
\centering
\resizebox{\textwidth}{!}{%
\begin{tikzpicture}[x=1cm,y=1cm,>=Latex,font=\footnotesize]

% ---- data wires ----
\foreach \y in {0,-0.5,-1.0,-1.5}{ \draw[thick] (0,\y) -- (8.5,\y); }
\draw[thick] (8.5,0) -- (8.5,-1.5);
\draw[thick] (8.5,0) -- (8.75,0.05);
\draw[thick] (8.5,-1.5) -- (8.75,-1.75);
\node[anchor=east,align=right] at (-0.15,-0.75) {one code block,\\ $d^2$ data qubits};

% ---- marked cells ----
\foreach \y in {0,-1.0}{
  \draw[thick,fill=gray!12] (1.5,\y-0.19) rectangle (2.45,\y+0.19);
  \node at (1.975,\y) {$e^{i\theta_xZ}$};
}
\draw[<-,gray!60!black] (1.975,-1.21) -- (1.975,-2.05);
\node[anchor=north,align=center] at (1.975,-2.05)
  {marked cells, at most one per data qubit,\\
   each a distance $\eta_x=2\sin(\theta_x/2)$ from the identity};

% ---- syndrome round ----
\draw[thick,fill=gray!5] (3.3,-1.8) rectangle (5.4,0.3);
\node[align=center] at (4.35,-0.75) {one exact\\ syndrome\\ round};
\draw[double,thick] (5.4,-0.75) -- (7.05,-0.75);
\node[anchor=center,fill=white,inner sep=1.5pt] at (6.0,-0.75) {record $s$};

% ---- raw-record cut ----
\draw[dashed,thick,gray!60!black] (6.55,-1.95) -- (6.55,0.75);
\node[gray!60!black,anchor=south,align=center] at (6.55,0.78)
  {raw-record cut\\ {\scriptsize classical from here on}};

% ---- branch ----
\node[draw,thick,rounded corners,align=center] (test) at (7.7,-0.75) {$s=s_{\mathrm{in}}$?};
\node[draw,thick,fill=gray!12,align=center,anchor=west] (acc) at (8.75,0.05)
  {measure $\overline X$, accept $+1$};
\node[draw,thick,align=center,anchor=west] (rej) at (8.75,-1.75)
  {apply $R(s-s_{\mathrm{in}})$, measure $\overline Z$, reject};
\draw[->,thick] (test.north) |- (acc.west) node[pos=0.62,above] {yes};
\draw[->,thick] (test.south) |- (rej.west) node[pos=0.62,below] {no};
\node[gray!60!black,anchor=south,font=\scriptsize] at (10.35,0.4) {the accepted fibre};

\end{tikzpicture}}
\caption{One member of the split-readout family of \cref{def:splitreadout}, drawn as the operations a
machine performs. Weak $Z$-rotations sit on some of the data qubits, one per qubit. A single exact round
of stabilizer measurement follows and writes its outcomes to a classical record. Everything after the
dashed cut reads that record and nothing else. If the record matches the syndrome the block came in
with, the protocol measures the logical $\overline X$ and keeps the $+1$ outcome, and that branch is the
accepted fibre whose effect this paper bounds. Otherwise it corrects, measures the logical $\overline Z$
and rejects. The rejected branch changes no accepted statistic, and \cref{prop:standard} needs it, since it is
what absorbs the leftover logical ambiguity on the other fibres.}
\label{fig:protocol}
\end{figure}

The first is unconditional, and it is written on two quantities. Writing $\varepsilon_x$ for the diamond distance of cell $x$ from the free instruments and
$\magicc(F_A)=\pacc\Dstab(\Ehat)$ for the unnormalised magic of the accepted effect, \cref{thm:resource} gives
$\magicc(F_A)\le(\sqrt3/2)\sum_x\varepsilon_x$, and \cref{cor:resource-free} replaces each $\varepsilon_x$ by the
distance of the cell to the free set itself. Read backwards, an accepted effect at constant distance from the
octahedron with non-negligible acceptance requires a constant amount of resource, distributed however the protocol
likes.

The second is the main result, and it applies when the resource cells sit inside a bounded-spread exact-recovery skeleton
on the block, and when the recovery carries every insertion history whose connected clusters stay below a threshold
$D$ onto a single free branch for each raw transcript. Such a $D$ is a \emph{certified threshold}, and
\cref{thm:suppression} bounds the accepted magic by an exact-component expansion over connected clusters of at least
$D$ cells, which \cref{cor:weak} specialises. For polynomially many cells of bounded insertion degree, a certified $D=\Omega(d)$ and per-cell
amplitude $O(1/d)$,
\[
  \pacc\,\Dstab(\Ehat)\;\le\;e^{-\Omega(d\log d)} ,
\]
so acceptance at an inverse-polynomial rate forces the magic to zero. The exponent of that bound is
$D\log(1/\eta_{\max})$, so the code distance supplies the number of cells that must cooperate and the per-cell
strength supplies the logarithm. Constant-strength cells at the same threshold give the weaker upper bound
$e^{-\Omega(d)}$ and no logarithm.

Post-selection is where the difficulty in this problem sits, and the reason it is harmless here is structural. The
branch hypothesis is imposed transcript by transcript, before accepted transcripts are summed, so an accepted branch
of vanishing probability can have its coefficient amplified without limit and it stays on its free ray while that
happens. Amplification moves a branch along the ray and does not rotate it off.

Three further pieces make the result usable, and the first of them is \cref{prop:absorption}. Given the bounded-spread skeleton and the componentwise
slabs of \cref{cond:components}, it reduces the branch hypothesis to a property of one recovery slab, that each
syndrome fibre of small insertions carries a single recovery coset up to a named logical ambiguity and that the
next free operation absorbs that ambiguity. That is what one reads off a circuit, and it replaces a check over
insertion sets.

The second is \cref{prop:standard}, which runs that criterion on the recovery round a surface-code memory already
performs. One exact round of stabilizer measurement absorbs below the $Z$-distance, with the logical ambiguity equal
to $\{I,\Zbar\}$, so a single layer of weak $Z$-rotations placed one per data qubit, on any set of qubits and at any
angles, certifies the threshold $D=d_Z$ once the split readout of \cref{def:splitreadout} follows, so
$\Dins\ge d_Z$ for all of them. It does so under named hypotheses, an exact round, an identity free skeleton, one
$Z$-diagonal cell per data qubit in a single layer, a syndrome-definite incoming branch,
terminal deviation environments and a fixed linear $Z$-recovery section,
all of which \cref{sec:results-standard} states. Coherent $Z$
over-rotation of arbitrary spatial shape is therefore inside the class, and the threshold there comes from the code
distance and is not posited. \Cref{rem:standard-tight} shows where the argument stops, at a component covering
a minimum-weight logical string.

The third is \cref{thm:construction}, which exhibits a protocol satisfying every hypothesis, one layer
of weak rotations along a bare logical string followed by syndrome extraction and a native Pauli measurement, with
$\Dins=d$, per-cell amplitude $\Theta(1/d)$, acceptance tending to $1/2$ and $-\log\magicc(F_A)=d\log d+O(d)$. The
bound is therefore attained at leading order.

\Cref{sec:results-boundary} gives two protocols outside the hypotheses that show which hypothesis does which job. A
free Clifford circuit can decode the logical qubit onto one physical wire, after which a single cell measuring $\Hxy$
produces the sharp magic projector at acceptance one half, so a small geometric footprint is not a small resource and
the threshold must be measured against the recovery skeleton. And $r$ commuting rotations at each site of a string
contribute $r^d$ insertion histories that all evaluate to the same logical operator, so their amplitudes add in phase
and many weak cells can act as one strong one when they accumulate between recovery rounds.

\subsection*{Scope}

The suppression is a code-distance statement and nothing in its proof is specific to the toric-code phase. The
distance enters through the threshold $\Dins$ and the cluster count enters through the degree bound, while the
braiding relation is never used, so \cref{thm:suppression} reads verbatim on any stabilizer code carrying one logical
qubit whose recovery satisfies \cref{cond:skeleton,cond:degree,cond:h4}. The free set there is the one-qubit
octahedron and the norm conversions are three-dimensional. Which codes
satisfy the recovery hypotheses is a question about recovery circuits, not about topology, and it is open. The anyon picture of the opening paragraphs is motivation for the problem and is not
part of any statement below.

The threshold enters as a named hypothesis, \cref{cond:dins}, and \cref{cor:weak} is a statement about protocols that
meet it. Two results supply it. \Cref{prop:absorption} gives the circuit-level criterion by which a threshold
is certified for a given recovery skeleton, and \cref{prop:standard} runs that criterion on one exact round of
stabilizer measurement and certifies $D=d_Z$ for a single layer of $Z$-diagonal single-qubit deviations placed
anywhere on the block, followed by the split readout of \cref{def:splitreadout}. The same question for cells with support on several
qubits, for cells off the $Z$ axis, and for resource that accumulates across several rounds is open. \Cref{sec:classification} places each known construction by the first hypothesis it leaves.
Every construction there that reaches a constant amount of magic leaves one, and \cref{sec:discussion} says which.

The main theorem is stated on the dilation amplitudes $\eta_x$ and not on the diamond distances $\varepsilon_x$.
Continuity of the Stinespring dilation~\cite{KretschmannSchlingemannWerner2008} relates the two for some choice of
dilations, and \cref{rem:amplitude} records the compatibility the substitution needs, which is automatic for unitary
cells and is a hypothesis otherwise.

\Cref{sec:prelim} fixes the block, the description of an adaptive protocol as an accepted effect, and the magic
witness. \Cref{sec:results} contains the results. \Cref{sec:classification} reads them against the known
constructions and \cref{sec:discussion} sorts what carries what. \Cref{app:thm2} proves the two bounds,
\cref{app:example} works the distance-3 case, and \cref{app:localclass} proves the exact reach of the stabilizer class
and the properties of the weak-string protocol.

%% file: sections/prelim.tex
% Section 2 Preliminaries
\section{Setup and definitions}\label{sec:prelim}

This section fixes the object we study and states the definitions the results rest on. We work with one fixed
piece of hardware, a single planar surface-code patch that stores one logical qubit and is never deformed. On this
patch we ask how much magic an adaptive post-selected protocol can accept, and what its non-stabilizer cells have to
pay for it. To make that question precise we need three ingredients. First, we fix the code, its logical operators
and the measurement we want to perform (\cref{sec:prelim-block}). Second, we describe an arbitrary adaptive,
post-selected protocol as a single accepted effect (\cref{sec:prelim-effects}). Third, we define a witness that tells
a genuine magic measurement apart from a classical imitation (\cref{sec:prelim-witness}). We close with the fault
model (\cref{sec:prelim-fault}), which enters one result only, \cref{prop:replacement}.

We use standard surface-code facts without reproving them, and the stabilizer formalism describes the code and its
logical operators~\cite{Gottesman1997,AaronsonGottesman2004,DehaeneDeMoor2003}, while the toric-code phase and its
gapped boundaries describe the anyons, the boundaries, and the edge condensates~\cite{Kitaev2003,KitaevKong2012,BeigiShorWhalen2011,Barkeshli2019}. Protection and the acceptance decision
rest on the topological-memory and decoding literature~\cite{Dennis2002,DuclosCianciPoulin2010,DelfosseNickerson2021}.
Lattice surgery~\cite{Horsman2012,RaussendorfHarringtonGoyal2007,LitinskiVonOppen2018,Litinski2019} and the
dynamics of Floquet and dynamical codes~\cite{HastingsHaah2021,Vuillot2021,Davydova2023} are the standard ways to move
logical information between blocks, and they are outside the fixed single-patch setting fixed here.
\Cref{tab:notation} collects the symbols used throughout.

\begin{table}[t]
\centering
\caption{Notation used throughout.}
\label{tab:notation}
\resizebox{\textwidth}{!}{%
\begin{tabular}{ll}
\toprule
Symbol & Meaning \\
\midrule
$d$ & code distance of the planar surface-code patch \\
$\{1,\ee,\mm,\eps\}$ & toric-code anyons, $\eps=\ee\times\mm$, with braiding $\Braid$ and $\Braid(\mm,\eps)=-1$ \\
$\Xbar,\Ybar,\Zbar$ & logical Paulis, Wilson strings of charge $\mm,\eps,\ee$ \\
$\Hxy=(\Xbar+\Ybar)/\sqrt2$ & target magic axis, with L\"uders projectors $(I\pm\Hxy)/2$ \\
$K_\tau,\ E_\tau=K_\tau^\dagger K_\tau$ & Kraus operator and effect of a fine transcript $\tau$ \\
$A,\ \Eacc=\sum_{\tau\in A}K_\tau^\dagger K_\tau$ & accepted family and its coarse-grained effect \\
$P=P_{\text{code}},\ F_A=P\Eacc P$ & code projector and the code-space part of the accepted effect \\
$\pacc=\tfrac12\Tr F_A,\ \Ehat=F_A/(2\pacc)$ & acceptance probability and normalised logical effect \\
$\Dstab(\Ehat)$ & magic witness: distance of $\Ehat$ to the stabilizer-effect octahedron \\
$\magicc(F_A)=\pacc\Dstab(\Ehat)$ & unnormalised magic of the accepted effect \\
$\FInst_x,\ \mathbf G$ & free flagged instruments at cell $x$, and the fixed free skeleton \\
$\varepsilon_x(\mathbf G)$ & diamond distance of cell $x$ to its skeleton comparison \\
$\eta_x$ & dilation amplitude of cell $x$, the norm of $V_x-W_x$ \\
$\mathcal R_d,\ m=|\mathcal R_d|$ & marked cells and their number \\
$\Gfrak_d,\ \zeta$ & insertion-dependence graph and its degree bound \\
$\Dins$ & insertion distance: cluster size below which recovery restores a free branch \\
$L_F(\delta),\ \dFR$ & fault distance of the accepted effect at scale $\delta$, and free-replacement distance \\
\bottomrule
\end{tabular}}
\end{table}

\subsection{The surface-code block and the target measurement}\label{sec:prelim-block}

We fix a planar rotated surface code encoding one logical qubit, with fixed rough and smooth boundaries and code
distance $d$~\cite{Kitaev2003,BravyiKitaev1998,FowlerMariantoni2012,BombinMartinDelgado2007}. Data qubits sit on the
faces of a rotated square lattice, and the stabilizer group is generated by weight-four $X$-type and $Z$-type checks in
the bulk, with weight-two checks along the boundaries. The two rough boundaries absorb $\ee$ charge and the two smooth
boundaries absorb $\mm$ charge, and this choice of condensates is what pins the logical operators to definite anyon
type. The distance $d$ is the least weight of a logical operator, and it is the length of the shortest string
connecting a pair of like boundaries.

The bulk realises the toric-code phase $D(\Zt)$, whose anyons are $\{1,\ee,\mm,\eps\}$ with $\eps=\ee\times\mm$. The
braiding form $\Braid$ is nondegenerate, and the only nontrivial values we use are
$\Braid(\ee,\mm)=\Braid(\mm,\eps)=\Braid(\ee,\eps)=-1$, while every charge braids trivially with itself and with the
vacuum.
The logical Pauli operators are Wilson lines of definite anyon type. $\Xbar$ is an $\mm$-string between the two smooth
boundaries, $\Zbar$ is an $\ee$-string between the two rough boundaries, and $\Ybar=i\Xbar\Zbar$ is an $\eps$-string
along the diagonal. The single logical anticommutation $\Xbar\Zbar=-\Zbar\Xbar$ is the lattice image of the toric-code
braiding $\Braid(\ee,\mm)=-1$. Every use of this fact below is through the anticommutation, which is a property of the
logical Pauli group, and never through the braiding form.

The logical measurement we target is the nondemolition L\"uders check of the magic axis
\begin{equation}\label{eq:hxy}
  \Hxy=\frac{\Xbar+\Ybar}{\sqrt2},\qquad
  \text{with L\"uders projectors}\quad \frac{I\pm\Hxy}{2}.
\end{equation}
This is the check that a surface-code cultivation of the $\ket{T}$ state needs~\cite{GidneyShuttyJones2024,Claes2025,Sahay2025}.
The hard part is coherence, because an $\Hxy$ eigenstate keeps a fixed phase between the $\mm$-string sector and the
$\eps$-string sector, and a classical choice between them is not enough. The physical picture usually offered for why
this should be hard is that $\Braid(\mm,\eps)=-1$, so the two sectors braid nontrivially and a coherent $\Hxy$ effect
asks the patch to carry both charges at once. We record that picture as motivation and the results below do not use
it. What they use is the logical anticommutation $\Xbar\Zbar=-\Zbar\Xbar$, the free set of the resource theory of
magic, and the code distance, and \cref{sec:discussion} returns to the difference. \Cref{app:example} works the whole
story on the smallest patch, the distance-3 code, and is a concrete instance to keep in mind while reading the
definitions.

\subsection{Adaptive protocols as accepted effects}\label{sec:prelim-effects}

A cultivation protocol measures, adapts, and then accepts or rejects. It interleaves measurements, classical
feed-forward, and a final acceptance decision, and only accepted runs produce output. We describe it at two grains, of which the finer is the
single \emph{fine transcript} $\tau$, one complete record of measurement outcomes and feed-forward choices. It acts
on the code space by a Kraus operator $K_\tau$, and its accepted effect is $E_\tau=K_\tau^{\dagger}K_\tau$, the positive
operator for which $\langle\psi|E_\tau|\psi\rangle$ is the probability that the run follows transcript $\tau$ and is
accepted, starting from $|\psi\rangle$. The \emph{raw transcript} $y$ of \cref{sec:results} is the measurement record alone, and $\tau$ is $y$
together with the feed-forward choices the record determines, so the two label the same branch.
A decoder groups fine transcripts into an \emph{accepted family} $A$, the set
of transcripts the protocol declares successful. The object the reader should keep in mind throughout is the
coarse-grained effect of that family,
\begin{equation}\label{eq:accepted-effect}
  \Eacc=\sum_{\tau\in A}K_\tau^{\dagger}K_\tau ,
\end{equation}
We abbreviate the code projector as $P=P_{\text{code}}$ and write
\begin{equation}\label{eq:FA}
  F_A\;=\;P\,\Eacc\,P
\end{equation}
for the code-space part of the accepted effect, which is the object every bound below is stated on. Its acceptance
probability on the maximally mixed logical input is $\pacc=\tfrac12\Tr F_A$, which is where the factor $\tfrac12$
comes from, and its normalised logical effect is $\Ehat=F_A/(2\pacc)$. We must work with $\Eacc$, not
any single $K_\tau$. A decoder can collapse exponentially many transcripts into one high-acceptance event, and a
statement proved transcript by transcript can fail for the family they form.

One more point fixes the level at which we work. A protocol defines a full quantum instrument, the map that sends the
input to the accepted output together with the classical acceptance flag. The instrument carries more information than
its effect $\Eacc$. It records the post-measurement state, whereas $\Eacc$ records only the accepted outcome
statistics, so the effect does not by itself determine the output state. We prove an obstruction at the level of the
effect. This is a \emph{necessary condition} for the instrument: a nondemolition $\Hxy$ L\"uders measurement has a
definite accepted effect, the projector $(I+\Hxy)/2$, so ruling out that effect rules out any instrument that would
realise the measurement. The obstruction is a statement about accepted-outcome statistics, which is the level at
which the witness below is posed. It captures exactly the part a
coherent check must get right, and the witness below acts at the same level.

\subsection{The magic witness}\label{sec:prelim-witness}

Not every effect with $\Xbar$ and $\Ybar$ support is a magic measurement, so we need a witness that is not fooled
by classical imitations. The right object is the distance of the normalised logical effect from the set of effects
that native Pauli measurements can build. For one logical qubit, write $\Ehat=\tfrac12(I+\bm v\cdot\bm\sigma)$ with
Bloch vector $\bm v=(v_x,v_y,v_z)$. The native operations are the six Pauli-Wilson-measurement projectors
$\tfrac12(I\pm\Xbar)$, $\tfrac12(I\pm\Ybar)$, $\tfrac12(I\pm\Zbar)$, whose Bloch vectors are the six unit vectors
$\pm\hat e_x,\pm\hat e_y,\pm\hat e_z$. A protocol that measures Pauli axes and mixes the outcomes classically produces
a convex combination of these, so its normalised effect has $\bm v$ in their convex hull, the stabilizer-effect
octahedron $\{|v_x|+|v_y|+|v_z|\le 1\}$. This octahedron is the one-qubit face of the resource theory of
magic~\cite{BravyiKitaev2005,Veitch2014,HowardCampbell2017}, the same polytope of free effects whose extreme points are
the stabilizer measurements. An effect outside it cannot be assembled from Pauli measurements and classical mixing, so
a positive distance to it is a certificate that the check does something no classical combination of Pauli readouts
can.

\begin{definition}[Magic witness]\label{def:dstab}
The \emph{magic} of a normalised one-qubit logical effect $\Ehat=\tfrac12(I+\bm v\cdot\bm\sigma)$ is its
$\ell_1$-distance to the stabilizer-effect octahedron,
\begin{equation}
  \Dstab(\Ehat)=\dist\!\bigl(\bm v,\ \{\,\bm w:\|\bm w\|_1\le1\,\}\bigr).
\end{equation}
\end{definition}

\begin{figure}[t]
\centering
\begin{tikzpicture}[scale=2.2,>=Latex]
  % axes
  \draw[->] (-1.35,0)--(1.45,0) node[right] {\footnotesize $v_x$};
  \draw[->] (0,-1.35)--(0,1.45) node[above] {\footnotesize $v_y$};
  % octahedron cross-section (v_z=0): diamond |v_x|+|v_y|<=1
  \draw[thick,fill=gray!12] (1,0)--(0,1)--(-1,0)--(0,-1)--cycle;
  \node[gray!60!black,align=center] at (-1.0,-0.95) {\footnotesize stabilizer\\octahedron};
  \draw[gray!60!black,thin] (-0.72,-0.78)--(-0.42,-0.42);
  % classical mixture point (1/2,1/2) on the edge
  \fill (0.5,0.5) circle (0.02);
  \node[anchor=west] at (0.53,0.46) {\footnotesize $\tfrac12P_{\Xbar,+}\!+\!\tfrac12P_{\Ybar,+}$};
  % H_XY point outside
  \fill[red] (0.707,0.707) circle (0.022);
  \node[red,anchor=south west] at (0.72,0.72) {\footnotesize $\Hxy$};
  % Delta_stab gap: double arrow between mixture point and H_XY; label sits just outside the diamond, upper-left of the arrow
  \draw[red,<->] (0.5,0.5)--(0.707,0.707);
  \node[red] at (0.40,0.92) {\footnotesize $\Dstab$};
\end{tikzpicture}
\caption{The $v_z=0$ slice of the stabilizer-effect octahedron $\{\|\bm w\|_1\le1\}$, whose points $\bm v$ are the
normalised logical effects $\tfrac12(I+\bm v\cdot\bm\sigma)$ that Pauli measurements and classical mixing can
build, with $\Dstab$ the $\ell_1$-distance to it. The projector onto the $+1$ eigenspace of the magic axis
$\Hxy=(\Xbar+\Ybar)/\sqrt2$ (red) lies outside at $\Dstab=\sqrt2-1$. The classical mixture
$\tfrac12P_{\Xbar,+}+\tfrac12P_{\Ybar,+}$ of the two logical Pauli projectors sits on the boundary edge, with
$\Dstab=0$.}
\label{fig:octahedron}
\end{figure}

The witness does exactly the job we need. The octahedron $\{\|\bm w\|_1\le1\}$ has eight facets, one for each choice of
signs $(\pm,\pm,\pm)$, and in the octant of the sharp check the nearest facet is $w_x+w_y+w_z=1$. The $\ell_1$-distance
of a point $\bm v$ outside it is the excess of the $\ell_1$ norm over one,
\begin{equation}
  \Dstab\bigl(\tfrac12(I+\bm v\cdot\bm\sigma)\bigr)=\max\bigl(0,\|\bm v\|_1-1\bigr).
\end{equation}
The sharp $\Hxy$ projector has $\bm v=(1/\sqrt2,1/\sqrt2,0)$, hence
\begin{equation}
  \|\bm v\|_1=\tfrac{1}{\sqrt2}+\tfrac{1}{\sqrt2}=\sqrt2>1,\qquad \Dstab=\sqrt2-1>0,
\end{equation}
so it lies strictly outside the octahedron (\cref{fig:octahedron}). A coin flip between an $\Xbar$ measurement and a
$\Ybar$ measurement gives $\tfrac12P_{\Xbar,+}+\tfrac12P_{\Ybar,+}$ with $\bm v=(\tfrac12,\tfrac12,0)$ and
$\|\bm v\|_1=1$, so it sits on the octahedron boundary and $\Dstab=0$. The two effects have the same Pauli support,
along $\Xbar$ and $\Ybar$, and the witness still tells them apart, because it sees the $\ell_1$ length of the Bloch
vector, which the set of axes present does not determine, and this is the property the first condition
will name.

Distances between logical operators are measured in the \emph{coefficient norm}. Writing any operator on the logical
qubit in the Pauli basis as $A=a I+\bm v\cdot\bm\sigma$ with $a\in\mathbb C$ and $\bm v\in\mathbb C^3$, we set
$\|A\|_c=|a|+\|\bm v\|_1$, which is a norm on the whole four-dimensional operator space and restricts to real
coefficients on Hermitian operators. The two scales differ by a factor of two, since $\Ehat$ carries the
prefactor $\tfrac12$ and so $\dist_c$ between normalised effects is $\tfrac12\Dstab$, while \cref{lem:norm} works with the
unnormalised effect, where the factor is absorbed into $\pacc$. At the unnormalised level the free set is the
cone
\begin{equation}\label{eq:freecone}
  \mathcal N=\{\,q(I+\bm w\cdot\bm\sigma):q\ge0,\ \|\bm w\|_1\le1\,\} ,
\end{equation}
so $F\in\mathcal N$ exactly when its normalised effect lies in the octahedron, and \cref{lem:norm} identifies
$\magicc(F_A)$ with the coefficient-norm distance $\dist_c(F_A,\mathcal N)$. We use both forms below. The witness is built from the same free set as
the resource theory of magic. The stabilizer-effect octahedron is the convex hull of the native Pauli-measurement
effects, and $\Dstab$ is the $\ell_1$ distance of the Bloch vector to it, so $\Dstab$ vanishes exactly
on that free set and is positive only for an effect no classical mixture of Pauli measurements can build. It is the
effect-side analogue of the state quantities that measure how far a state is from the stabilizer
polytope~\cite{Veitch2014,HowardCampbell2017,SeddonCampbell2019,HeinrichGross2019}. We use it as a witness and order
parameter, and the proofs draw on exactly two of its properties. It vanishes on the free set and nowhere else, and it
is a distance to that set, so it obeys the triangle inequality. Monotonicity under free operations
is a further property of the state-side quantities that the arguments here never invoke, and
\cref{sec:classification-dynamical} says how that places the witness against the channel monotones of the resource
theory. That separation is what makes it non-circular. It is defined by the native operations, not by the conclusion we want to reach, and a protocol
whose accepted effect lands outside the octahedron has demonstrably done more than measure and mix Pauli axes.

\subsection{The fault model}\label{sec:prelim-fault}

One later result, \cref{prop:replacement}, compares a protocol against faults, this subsection fixes the fault model
it uses, and nothing else in the paper depends on it.

The adversary may replace the Kraus operators of any set of spacetime cells by arbitrary ones, and may flip any
\emph{physically represented} classical bit, meaning a raw measurement record bit or a location of the physical
classical processing that acts on those bits, one cell or bit per fault. The decoder and the acceptance function are
evaluated at the boundary of the fault model, so the decoded logical outcome and the acceptance flag are values
computed from the raw record, and not single-bit fault locations of their own, which is what makes the count below a
property of the protocol. A record produced once and then copied gains nothing from the copying, since a fault
upstream of the fan-out corrupts every copy alike, so a record that survives $\ell$ faults is one the protocol
produces redundantly, by repeating the underlying logical measurement or through a fault-tolerant parity interface,
in such a way that the decoded outcome depends on $\Theta(\ell)$ independently faulted raw bits.

Write $F_A^{\mathrm f}$ for the code-space part of the accepted effect of the faulted protocol.

\begin{definition}[Fault distance of the accepted effect]\label{def:LF}
For $\delta>0$, the \emph{fault distance at scale $\delta$} is
\begin{equation}\label{eq:LF}
  L_F(\delta)\;=\;\min\bigl\{\,|\mathcal F| \;:\; \mathcal F \text{ a fault set with } \|F_A^{\mathrm f}-F_A\|_c\ge\delta\,\bigr\} .
\end{equation}
A protocol is \emph{protected to $\ell$ at scale $\delta$} when $L_F(\delta)\ge\ell$.
\end{definition}

Three points fix how \cref{eq:LF} is to be read. The comparison is on the unnormalised $F_A$, so a fault set that
suppresses the acceptance probability without changing the logical action counts as a change, and a fault set that
destroys acceptance outright counts too. The faults are counted across the whole instrument, so selectors, acceptance
logic and decoding are included alongside the code cells. And the scale $\delta$ is carried explicitly and not fixed
once, because the bound of \cref{prop:replacement} holds at every scale up to $\magicc(F_A)$ and says nothing
above it.

Protection in this sense is a property of the protocol against the adversary, and it is not a hypothesis of
\cref{thm:resource} or \cref{thm:suppression}. Those theorems constrain the accepted effect of a specified protocol
and the adversary plays no part in their proofs. \Cref{sec:results-fault} is where the two meet, and
\cref{sec:discussion} says why they should not be read as evidence for one another.

%% file: sections/results.tex
% Section 3 Main results
\section{The cost of magic on a code block}\label{sec:results}

This section fixes what a code block can accept when its non-stabilizer resource is weak and
spread out. Two results bound the accepted magic by the resource that produced it, and a third exhibits a
protocol that saturates the second at leading order.

The objects are three. The \emph{free} operations are the outcome-wise completely stabilizer-preserving
instruments of \cref{sec:results-free}, which is the free set of the resource theory of magic read at the
level of a single measurement outcome, not of an averaged channel. The \emph{marked} cells are
those that are not free, and each carries a strength. The \emph{insertion distance} of
\cref{sec:results-class} counts how many marked cells must act together before the accepted branch can
leave the free set. The theorem runs at any threshold that has been certified, \cref{cond:dins} asks for
one that grows with the distance, and \cref{prop:absorption} is how a circuit certifies one.

\Cref{tab:deps} records what each result needs, so that a reader can see at a glance which statements
are unconditional and which carry hypotheses.

\begin{table}[t]
\centering
\caption{What each result rests on, by name. Every result is stated on one encoded logical qubit, which
the free set and the norm conversions both use, so that hypothesis is not repeated. The paragraph after
the table spells out the entries that are not a numbered condition.}
\label{tab:deps}
\resizebox{\textwidth}{!}{%
\begin{tabular}{lll}
\toprule
Result & Rests on & Proved in \\
\midrule
\Cref{lem:norm} (normalisation) & nothing & \cref{app:thm2} \\
\Cref{prop:exactnative} (exact reach) & stabilizer branches & \cref{app:localclass} \\
\Cref{lem:free-native} (free branches are native) & a stabilizer-cone Choi operator & \cref{sec:results-free} \\
\Cref{thm:resource} (resource accounting) & any fixed free skeleton & \cref{app:thm2} \\
\Cref{thm:suppression} (suppression) & a compatible certificate: \cref{cond:skeleton,cond:degree,cond:h4} & \cref{app:thm2} \\
\Cref{cor:approx} (approximate factorisation) & \cref{thm:suppression} and \cref{eq:approx-h4} & \cref{app:thm2} \\
\Cref{cor:weak} (weak distributed resource) & \cref{thm:suppression} at a certified $D\ge\beta d$ & \cref{sec:results-theorems} \\
\Cref{prop:absorption} (absorption gives \cref{cond:h4}) & \cref{def:absorption,cond:skeleton,cond:components} & \cref{sec:results-criterion} \\
\Cref{prop:standard} (a standard round certifies $d_Z$) & the setting of \cref{sec:results-standard} and \cref{def:splitreadout} & \cref{app:standard} \\
\Cref{prop:family-magic} (the family carries magic) & \cref{def:splitreadout} with unitary cells & \cref{app:standard} \\
\Cref{thm:construction} (saturation) & \cref{def:weakstring} & \cref{app:localclass} \\
\Cref{prop:replacement} (replacement bounds protection) & the fault model of \cref{sec:prelim-fault} & \cref{sec:results-fault} \\
\bottomrule
\end{tabular}}
\end{table}

Three entries need unpacking. \Cref{prop:exactnative} also asks that the noise be a Pauli mixture and
that the class be closed under classical mixing, the latter only for its converse.
\Cref{lem:free-native} asks that the outcome subchannel, composed with the encoding isometry, have its
Choi operator in the stabilizer cone. And the setting of \cref{sec:results-standard} is a CSS code with
an identity free skeleton, one exact L\"uders syndrome round, $Z$-diagonal deviations placed one per
qubit in a single layer, a syndrome-definite incoming branch, terminal deviation environments and a fixed
linear $Z$-type recovery section.

\subsection{The stabilizer boundary}\label{sec:results-stab}

One case is exactly soluble and fixes the free set concretely on the patch. When every branch of the
protocol is a stabilizer operation, the accepted effect can be computed exactly.

\begin{proposition}[Exact reach of the stabilizer class]\label{prop:exactnative}
Let a protocol family on the distance-$d$ patch have cells drawn from local Clifford and Pauli
operations, native Pauli--Wilson measurements, product stabilizer ancillas and classical feed-forward,
with local stochastic Pauli noise at rate $p$, and let the class be closed under classical mixing. For
every accepted set with $\pacc>0$ and at every rate $p$,
\begin{equation}\label{eq:exactnative}
  F_A\in\mathcal N,\qquad\text{equivalently}\qquad\Dstab(\Ehat)=0 .
\end{equation}
Conversely every normalised effect with $\Dstab=0$ is the accepted effect of a noiseless member of the
class, with $\pacc\ge1/2$ uniformly in $d$. The reachable set is exactly the stabilizer-effect
octahedron.
\end{proposition}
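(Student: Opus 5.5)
The plan has two halves. The forward direction shows every accepted effect lies in the cone $\mathcal N$. The converse exhibits a noiseless member for every point of the octahedron.

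For the forward direction I will go transcript by transcript and only sum at the end. Each fine transcript $\tau$ of a protocol built from the listed cells has a Kraus operator $K_\tau$ that is a product of the following pieces:
\begin{itemize}
\item Clifford unitaries and Pauli operators;
\item Pauli projectors $\tfrac12(I\pm P)$ from native measurements;
\item isometries preparing product stabilizer ancillas;
\item Pauli errors from the noise, which I expand into its stochastic branches so that each branch is itself a Pauli Kraus factor carrying a nonnegative weight.
\end{itemize}
Feed-forward only selects which of these factors appears, as a function of earlier outcomes. So each $K_\tau$ is, up to a nonnegative scalar, a stabilizer operation: a Clifford conjugated product of Pauli projections on the input tensored with stabilizer ancillas.

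The standard closure fact, that stabilizer operations with post-selection map stabilizer states to stabilizer states, then gives the key property. The effect $K_\tau^\dagger K_\tau$, compressed by $P$, is proportional either to $P$ itself or to a projector $P\,\tfrac12(I\pm Q)\,P$ for some Pauli $Q$. This step is the one I expect to need the most care. I will write $K_\tau^\dagger K_\tau$ as a Heisenberg-picture pullback of a product of commuting Pauli projectors through Cliffords, which is again a scalar multiple of a stabilizer-group projector $\Pi_S$. Then $P\Pi_S P$ is either zero, or proportional to $P$, or proportional to $P\,\tfrac12(I\pm\overline Q)$. Which case occurs depends on whether $S$, together with the code stabilizers, contains a logical Pauli $\overline Q$ (and its sign). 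In every case the compressed effect is $q_\tau(I+\bm w_\tau\cdot\bm\sigma)$ with $q_\tau\ge0$ and $\bm w_\tau\in\{0,\pm\hat e_x,\pm\hat e_y,\pm\hat e_z\}$, so it lies in $\mathcal N$. Since $\mathcal N$ is a convex cone, the sum over $\tau\in A$, and the average over the noise branches with their probability weights, also lies in $\mathcal N$. This gives \eqref{eq:exactnative} for every $p$ and every accepted set. The equivalence with $\Dstab(\Ehat)=0$ when $\pacc>0$ is just the definition of $\mathcal N$. No property of $d$, locality or the decoder enters.

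For the converse, take any $\bm v$ with $\|\bm v\|_1\le1$. I write it as a convex combination $\bm v=\sum_k\lambda_k\bm u_k$ of the six vertices $\pm\hat e_j$, adding the pair $\pm\hat e_x$ with equal weight to absorb any slack $1-\|\bm v\|_1$. The protocol is as follows:
\begin{enumerate}
\item Draw $k$ with probability $\lambda_k$, using classical mixing.
\item Measure the logical Pauli along $\bm u_k$ natively, using a Pauli--Wilson measurement.
\item Accept on the outcome matching the sign of $\bm u_k$.
\end{enumerate}
The accepted effect is $\sum_k\lambda_k\tfrac12(I+\bm u_k\cdot\bm\sigma)=\tfrac12(I+\bm v\cdot\bm\sigma)$. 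This has $\pacc=\tfrac12$, which meets the bound $\pacc\ge1/2$ uniformly in $d$, and its normalised effect is the target. Combining the two directions identifies the reachable set as exactly the octahedron.
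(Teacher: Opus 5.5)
Your proposal is correct and follows the paper's route. You collapse each branch to $K^\dagger K=c\,\Pi_S$, compress by the code projector to a nonnegative multiple of $I$ or of a native projector $\tfrac12(I\pm\bar P)$, and sum inside the convex cone $\mathcal N$ over transcripts and Pauli-noise branches. The paper makes your two loosest steps precise: it obtains the collapse by its centre-outward palindrome reduction (sequential measurements need not commute, so this is not literally a pullback of commuting projectors), and it shows that the surviving elements of the abelian group $S$ map to an isotropic logical subgroup, which is why at most one logical axis survives. Your converse differs only in bookkeeping: you absorb the slack $1-\|\bm v\|_1$ with an antipodal pair $\pm\hat e_x$ and get $\pacc=\tfrac12$ exactly, while the paper mixes in a branch that accepts both outcomes and gets $\pacc=1/(1+\|\bm v\|_1)$, and both meet the stated bound $\pacc\ge\tfrac12$.
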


\Cref{app:localclass} proves it, and the forward half is three steps. A stabilizer branch has
$K^\dagger K=c_0\Pi_S$ for a commuting group $S$, the projector collapse of \cref{lem:collapse} turns
$P_{\text{code}}\Pi_SP_{\text{code}}$ into a nonnegative multiple of a further stabilizer projector, and
a stabilizer projector on one logical qubit is $I$, a native projector $\tfrac12(I\pm\bar P)$, or zero,
while local stochastic Pauli noise mixes those with nonnegative weights and $\mathcal N$ is a convex cone.
The converse builds any octahedron point from a classical mixture of native measurements, and the two
halves meet on the same generators.

Read as a resource statement, \cref{prop:exactnative} is the closure of the free set of the resource
theory of magic under stabilizer operations with post-selection~\cite{Veitch2014,HowardCampbell2017,BravyiGosset2016,BravyiKitaev2005},
specialised to an encoded qubit, and the same proof applies to any $[[n,1]]$ stabilizer code. Its role here
is to fix the boundary concretely, so that the later bounds have something to measure against. Distance, locality, the anyon labels and the exclusion of folds
are unused in its proof, which is the first sign that the topological reading of this problem does less
work than it appears to.

What \cref{prop:exactnative} leaves open is the interesting part. A cell may be a non-Clifford completely
positive map and still be a plausible ingredient of a static-patch protocol, and for such a cell
$K^\dagger K$ need not be a projector, so the collapse argument says nothing. Two weak non-projective
filters on different logical axes already illustrate the gap. Write $A=I+a\Xbar$ and $B=I+b\Zbar$, which
become valid instrument outcomes after division by $1+|a|$ and $1+|b|$. Those constants scale the accepted
effect and cancel from its normalised Bloch vector, so we suppress them and compute with $A$ and $B$
themselves. Then
\begin{equation}\label{eq:twofilter}
  (BA)^\dagger(BA)=(1+a^2)(1+b^2)\,I+2a(1+b^2)\,\Xbar+2b(1-a^2)\,\Zbar ,
\end{equation}
whose normalised Bloch vector has $\ell_1$ norm $1.28$ at $a=b=\tfrac12$, and which approaches $\sqrt2$
as $(a,b)$ vary. At $a=b=1$ both filters become projective, the collapse applies again, and the norm
returns to exactly $1$.

Call a cell \emph{syndrome-preserving} when every Kraus operator of every one of its outcomes commutes
with every stabilizer generator of the patch, so the cell creates no excitation and leaves every syndrome
bit untouched. This is the operational content of the statement that a cell moves no anyon, and unlike
the anyon statement it is decided by commutation on the lattice. Both filters above are
syndrome-preserving, since $\Xbar$ and $\Zbar$ are logical operators and commute with the whole
stabilizer group, and their composition has accepted effect outside $\mathcal N$. Syndrome preservation
is therefore not a freeness condition, and it is not the quantity a bound can be stated on. The rest of
this section measures a cell by how far it sits from the free set instead.

\subsection{Free instruments and marked cells}\label{sec:results-free}

A protocol is a finite sequence of instruments, each applied at one spacetime location of the circuit, and
each such location is a \emph{cell}. The decomposition into cells comes with the protocol, and every
quantity below is read against it. An adaptive protocol applies instruments whose choice depends on the
classical record so far, and we record that dependence explicitly. For a cell $x$ with instrument $\{\mathcal M_{o|h}\}_o$ controlled by the
classical history $h$, the \emph{flagged} channel is
\begin{equation}\label{eq:flagged}
  \widetilde{\mathcal M}_x\bigl(\rho\otimes\ket{h}\bra{h}\bigr)
  \;=\;\sum_o \mathcal M_{o|h}(\rho)\otimes\ket{h,o}\bra{h,o} ,
\end{equation}
an ordinary channel on the system together with a classical register in a fixed basis. Flagging is what
keeps the outcome label visible, and the label is where the resource hides. A channel can be free while
one of its outcomes is not, and the example that settles this is
$\mathcal M_\pm(\rho)=\Tr(E_\pm\rho)\ket{0}\bra{0}$ with $E_\pm=(I\pm\Hxy)/2$. Its sum over outcomes is
the constant preparation $\ket{0}\bra{0}$, which is free, while the outcome statistics perform the magic-axis
measurement. Freeness must therefore be imposed outcome by outcome.

\begin{definition}[Free flagged instrument]\label{def:free}
A flagged instrument is \emph{free} when every outcome subchannel $\mathcal M_o$ is completely
stabilizer-preserving, meaning that $(\mathcal M_o\otimes\id_r)$ maps subnormalised mixtures of
stabilizer states to subnormalised mixtures of stabilizer states for every reference size $r$. Write
$\FInst_x$ for the free flagged instruments at cell $x$.
\end{definition}

Free instruments cannot produce a magic accepted effect, and the proof is one line of the Choi
correspondence.

\begin{lemma}[Free branches are native]\label{lem:free-native}
Let $\mathcal J$ be an outcome subchannel whose composition with the encoding isometry $V_L$ has
subnormalised Choi operator in the stabilizer cone. Then $V_L^\dagger\mathcal J^\dagger(I)V_L\in\mathcal N$.
\end{lemma}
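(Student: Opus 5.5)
The plan is to read the effect $V_L^\dagger\mathcal J^\dagger(I)V_L$ off the Choi operator of the composed map $\mathcal K=\mathcal J\circ\mathcal V_L$, where $\mathcal V_L(\rho)=V_L\rho V_L^\dagger$, and to use the stabilizer-cone hypothesis only through its extreme rays. Write $C_{\mathcal K}=\sum_{i,j}\ket{i}\bra{j}\otimes\mathcal K(\ket{i}\bra{j})$, with the first factor the logical input qubit. The adjoint of the composition is $\mathcal K^\dagger=\mathcal V_L^\dagger\circ\mathcal J^\dagger$, so $\mathcal K^\dagger(I)=V_L^\dagger\mathcal J^\dagger(I)V_L$. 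For each logical input $\rho$ we have $\Tr\bigl(\rho\,\mathcal K^\dagger(I)\bigr)=\Tr\mathcal K(\rho)=\Tr\bigl[C_{\mathcal K}(\rho^{T}\otimes I)\bigr]$. Hence the operator we need is the transpose of the partial trace over the output,
\[
  \mathcal K^\dagger(I)=\bigl(\Tr_{\mathrm{out}}C_{\mathcal K}\bigr)^{T}.
\]

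The key steps follow in order. First, the hypothesis says that $C_{\mathcal K}$ is a nonnegative combination $\sum_k c_k\ket{s_k}\bra{s_k}$ of pure stabilizer states on input$\otimes$output, with $c_k\ge0$. Both the partial trace and the transpose are linear, so it suffices to treat a single $\ket{s}$, and then to use that $\mathcal N$ is a convex cone. Second, the reduced state of a pure stabilizer state on a subsystem is a stabilizer mixed state, namely a normalised projector onto the subgroup of stabilizers supported on that subsystem. For a single qubit this subgroup is either trivial, giving $I/2$, or generated by one $\pm P$ with $P\in\{X,Y,Z\}$, giving $\tfrac12(I\pm P)$. Either way the reduced state has the form $\tfrac12(I+\bm w\cdot\bm\sigma)$ with $\|\bm w\|_1\le1$. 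Third, the transpose maps $Y\mapsto -Y$ and fixes $I,X,Z$. It therefore sends $\bm w$ to $(w_x,-w_y,w_z)$, which preserves the $\ell_1$ norm and so preserves the octahedron and the cone $\mathcal N$. Summing with weights $c_k\ge0$ then gives $\mathcal K^\dagger(I)\in\mathcal N$.

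There is one point of care: the Pauli basis in which $\mathcal N$ is defined. In that basis the logical Paulis $\Xbar,\Ybar,\Zbar$ should be identified with $X,Y,Z$ on the input factor through $V_L$. I would check that the notion of stabilizer states on the input factor in the hypothesis refers to the same logical Pauli frame, since the composed map takes logical inputs.

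The main obstacle is the second step, if the stabilizer cone is read as the cone generated by stabilizer \emph{mixed} states or by products over many qubits rather than by pure stabilizer states. I would first reduce to pure states, because any stabilizer mixed state is a uniform mixture of pure stabilizer states, and then invoke the subgroup-restriction description of reduced states. Everything else is linear bookkeeping.
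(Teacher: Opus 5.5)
Your proposal is correct and follows the paper's route. The paper likewise applies $\mathcal J$ to half of the encoded Bell state, identifies $\Tr_{\mathrm{out}}$ of the resulting Choi operator with $\tfrac12\bigl(V_L^\dagger\mathcal J^\dagger(I)V_L\bigr)^{T}$, and concludes because partial trace and transpose preserve the stabilizer cone, whose one-qubit slice is $\mathcal N$. The only difference is that you prove those two preservation facts, via the reduced state of a pure stabilizer state through its subgroup supported on the kept qubit and via the transpose acting as $Y\mapsto -Y$ with $\|\bm w\|_1$ unchanged, where the paper simply asserts them.
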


\begin{proof}
Apply $\mathcal J$ to half of the encoded Bell stabilizer state $\ket{\Phi}$ and write
$\Omega=(\mathcal J\otimes\id)(\ket{\Phi}\bra{\Phi})$, which lies in the stabilizer cone by hypothesis.
Tracing out the output gives $\Tr_{\text{out}}\Omega=\tfrac12(V_L^\dagger\mathcal J^\dagger(I)V_L)^T$.
Partial trace and transpose both preserve the stabilizer cone, and on one logical qubit that cone is
$\mathcal N$.
\end{proof}

\begin{remark}[A Kraus operator is not enough]\label{rem:kraus-not-enough}
\Cref{lem:free-native} asks that the subchannel be free, not that its Kraus operator belong to some free
channel. Kraus decompositions are not outcome decompositions, and a free channel can be written with
Kraus operators that individually are not free, so the hypothesis has to be placed on the subchannel.
\end{remark}

The distance from a cell to the free set is measured against \emph{one fixed} free comparison. Fix a
\emph{free skeleton} $\mathbf G=\{\widetilde{\mathcal G}_x\}_{x}$ with $\widetilde{\mathcal
G}_x\in\FInst_x$ and set
\begin{equation}\label{eq:eps}
  \varepsilon_x(\mathbf G)\;=\;\bigl\|\widetilde{\mathcal M}_x-\widetilde{\mathcal G}_x\bigr\|_\diamond .
\end{equation}
Everything below, including the insertion expansion and the insertion distance, is read against the same
$\mathbf G$. This matters, and \cref{rem:skeleton} shows what goes wrong when it is ignored. The
\emph{marked} cells are $\mathcal R_d=\{x:\varepsilon_x>0\}$, with $m=|\mathcal R_d|$.

\subsection{The insertion expansion and the class}\label{sec:results-class}

Purify every cell and every classical record, so that the whole adaptive protocol is one isometry $V$ on
the system, the record registers and a dilation environment. Adaptive control becomes coherent control by
the record registers, and no generality is lost. Choose common dilations
\begin{equation}\label{eq:dilation}
  V_x\;=\;W_x+\Delta_x,\qquad \eta_x=\|\Delta_x\| ,
\end{equation}
with $V_x$ dilating $\widetilde{\mathcal M}_x$ and $W_x$ dilating $\widetilde{\mathcal G}_x$. Expanding
every marked slot gives the \emph{insertion expansion}
\begin{equation}\label{eq:insertion}
  V\;=\;\sum_{S\subseteq\mathcal R_d}V_S ,
\end{equation}
where $V_S$ carries $\Delta_x$ at the marked slots in $S$ and $W_x$ at the others. For a complete raw
transcript $y$ write $K_{y,S}=\bra{y}V_S$. The single fact the proof needs about \cref{eq:insertion} is
the cylinder bound of \cref{lem:cylinder}, and its content is that forced non-insertions are free.

The class is defined by five conditions. The first fixes the setting, the next two are structural, the
fourth is the working hypothesis, and \cref{sec:results-criterion} derives that fourth one from properties
of a circuit.

\begin{condition}[One protected algebra]\label{cond:one-block}
The protocol runs on one code block. No ancillary register carries a second logical algebra whose
distance grows with $d$.
\end{condition}

\Cref{cond:one-block} fixes the setting and is not a hypothesis of any bound below. The proof in
\cref{app:thm2} never invokes it, and it does not need to: a second protected algebra that is inert
changes nothing, and one that is used in a way the bound would otherwise miss has to break
\cref{cond:h4}, which is already a hypothesis. It is stated because it names what the threshold is measured against,
one block whose distance is the $d$ in $\Dins=\Omega(d)$, and because it is the first thing the
multi-block constructions of \cref{sec:classification} give up.

\begin{condition}[Bounded-spread recovery skeleton]\label{cond:skeleton}
Marked cells are inserted into a skeleton of bounded-range free operations and exact recovery on the same
block. Each recovery slab ends at a \emph{raw-record cut}, the moment at which every check ancilla has
been measured and its outcome stored in an orthogonal classical register. The free operations of the
skeleton carry each syndrome sector into a syndrome sector, so that the block enters every slab with a
definite syndrome.
\end{condition}

\begin{condition}[Bounded insertion degree]\label{cond:degree}
The \emph{insertion-dependence graph} $\Gfrak_d$ on $\mathcal R_d$ joins $x$ and $x'$ when the forward
light cones of $\Delta_x$ and $\Delta_{x'}$, each stopped at the raw-record cut of its slab, meet in a
recovery cell, or when the supports of $x$ and $x'$ lie within a fixed range $r_\ast$. Its degree is at
most a constant $\zeta$, independent of $d$.
\end{condition}

The cut in \cref{cond:degree} is the whole point. Classical decoding, the acceptance predicate and every
later outcome-dependent free operation happen after it and contribute no edges. Taking instead the
transitive causal closure through the global acceptance computation would join every marked cell to every
other through the classical aggregation circuit and give degree $\Theta(d)$, which is why the graph is
stopped where it is. Enlarging $\Gfrak_d$ by further bounded-range edges is harmless, since it only
shrinks the set of insertion sets the next condition speaks about.

The next condition is the working hypothesis, and in words it asks that below a threshold the recovery put
every insertion history of a given transcript on one free logical ray, leaving the scalar and the
environment vector free.

\begin{condition}[Branch factorisation]\label{cond:h4}
There is a threshold $\Dins(d)$ such that for every complete raw transcript $y$ and every insertion set
$S$ all of whose connected components in $\Gfrak_d$ have fewer than $\Dins(d)$ cells,
\begin{equation}\label{eq:h4}
  \bigl(\bra{y}\otimes I_E\bigr)V_S\,P\;=\;\bigl(J_yP\bigr)\otimes\ket{e_{y,S}}_E ,
\end{equation}
where $J_y$ does not depend on $S$, the environment vector $\ket{e_{y,S}}$ may, and the logical
subchannel generated by $J_y$ satisfies the hypothesis of \cref{lem:free-native}.
\end{condition}

\Cref{eq:h4} is written on the environment vector and not on a scalar, which makes it invariant under
a common change of basis inside a fixed dilation. The ray $J_y$ is the same for every insertion set below the threshold, which is what makes the sum over
those sets stay free.

A positive integer $D$ for which \cref{eq:h4} holds is a \emph{certified threshold}, and the condition is
monotone downwards, since shrinking $D$ only shrinks the set of insertion sets it speaks about. The
\emph{insertion distance} $\Dins$ is the supremum of the certified thresholds, which is $\infty$ when
every insertion set factorises, in which case \cref{eq:lambda} is an empty sum and the bound below is
$\magicc(F_A)=0$. \Cref{thm:suppression} holds at every certified threshold and is strongest at the
largest one, so a result that certifies some $D$ gives the bound at that $D$ together with
$\Dins\ge D$, and that is the form every certification below takes.

\begin{condition}[Distance-scale threshold]\label{cond:dins}
$\Dins(d)=\Omega(d)$.
\end{condition}

The quantities the bound is written on are read against a choice, and the following names that choice so
that the bound can be stated on the protocol and not on the choice.

\begin{definition}[Compatible insertion certificate]\label{def:certificate}
A \emph{compatible insertion certificate} for a protocol is a tuple
$\mathcal C=(\mathbf G,\{V_x,W_x\},\Gfrak_d,D)$ consisting of a free skeleton, a choice of common
dilations \cref{eq:dilation}, an insertion-dependence graph and a threshold, such that
\cref{cond:skeleton,cond:degree,cond:h4} hold for the protocol with those objects and with that $D$. Its \emph{weight} is
\begin{equation}\label{eq:certweight}
  \Lambda(\mathcal C)\;=\;\sum_{\substack{C\subseteq\mathcal R_d\ \mathrm{connected}\\ |C|\ge D}}\ \prod_{x\in C}\|V_x-W_x\| ,
\end{equation}
and the \emph{certified cost} of the protocol is $\Lambda^\star=\inf_{\mathcal C}\Lambda(\mathcal C)$ over
its compatible certificates, with $\Lambda^\star=\infty$ when it has none.
\end{definition}

\Cref{thm:suppression} is stated at one certificate, and it does not bound the accepted magic by
$\Lambda(\mathcal C)$ itself. Writing
\begin{equation}\label{eq:fofl}
  f(\lambda)\;=\;\sqrt3\,\min\bigl\{1,\ 2(e^{\lambda}-1)+3(e^{\lambda}-1)^2\bigr\} ,
\end{equation}
the theorem gives $\magicc(F_A)\le f(\Lambda(\mathcal C))$ for every compatible $\mathcal C$. The function
$f$ is continuous and nondecreasing, rising from $f(0)=0$ and reaching its ceiling $\sqrt3$ at
$\lambda=\log(4/3)$, where the two branches of the minimum agree, and staying there. When the protocol has
at least one compatible certificate, passing to a minimising sequence gives
\begin{equation}\label{eq:lambdastar-bound}
  \magicc(F_A)\;\le\;f(\Lambda^\star) ,
\end{equation}
and in the linear range $\Lambda^\star\le\log2$ this reads $\magicc(F_A)\le10\sqrt3\,\Lambda^\star$. When
it has none, $\Lambda^\star=\infty$ and \cref{eq:lambdastar-bound} still holds, now as the universal
bound $\magicc(F_A)\le\sqrt3=f(\infty)$, which follows from $0\in\mathcal N$ and $0\le F_A\le I$ and needs
no certificate. That
infimum is a property of the protocol alone, and it is the object the title's phrase names. The individual
$\eta_x=\|V_x-W_x\|$ are not, since they are dilation amplitudes read against a chosen skeleton in a
chosen environment gauge, and \cref{rem:amplitude,rem:skeleton} say what goes wrong when they are treated
as intrinsic.

\begin{remark}[The skeleton is one object]\label{rem:skeleton}
The quantity $\varepsilon_x$ of \cref{eq:eps} is the distance to a chosen $\widetilde{\mathcal G}_x$, and
it is not the distance to the free set $\FInst_x$. The two differ, and the whole bound must be optimised
over skeletons jointly, not by minimising the strengths and maximising the threshold separately,
because the two extrema are attained at different skeletons. The family of
\cref{sec:results-construction} at angle $\theta_d=\arctan(\tan(\pi/8)^{1/d})$ shows the gap. Against the
identity skeleton it has $\varepsilon_x=\Theta(1)$ and $\Dins=d$, while against the skeleton of logical
$S$ rotations it has $\varepsilon_x=O(1/d)$ and $\Dins=1$.
\end{remark}

\subsection{The two bounds}\label{sec:results-theorems}

The first bound needs none of \cref{cond:one-block,cond:skeleton,cond:degree,cond:h4,cond:dins}. It says
that magic is paid for cell by cell.

\begin{theorem}[Resource accounting]\label{thm:resource}
Fix a free skeleton $\mathbf G$ and let $\varepsilon_x=\varepsilon_x(\mathbf G)$ and $\mathcal
R_d=\{x:\varepsilon_x>0\}$ be read against it. For any adaptive protocol and any accepted set,
\begin{equation}\label{eq:resource}
  \magicc(F_A)\;=\;\pacc\,\Dstab(\Ehat)\;\le\;\frac{\sqrt3}{2}\sum_{x\in\mathcal R_d}\varepsilon_x .
\end{equation}
\end{theorem}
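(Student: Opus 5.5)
The plan is a hybrid argument. We replace the marked cells one at a time by their skeleton comparisons, track the resulting change in the unnormalised accepted effect $F_A$, and then use the fact that a fully free protocol has $F_A\in\mathcal N$.

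\textbf{Step 1: the fully free protocol.} Let $\mathcal P_0$ be the protocol with every flagged channel $\widetilde{\mathcal M}_x$ replaced by $\widetilde{\mathcal G}_x$. Every outcome subchannel of $\mathcal P_0$ is then a composition of completely stabilizer-preserving maps, and the acceptance step is a sum of such maps over accepted flags. So the composed accepted subchannel, restricted to the encoding isometry, has Choi operator in the stabilizer cone. By \cref{lem:free-native} its code-space effect $F_A^{(0)}$ lies in $\mathcal N$.

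This is where flagging matters. Because every outcome of every cell is free individually, the post-selected branch stays free after summing over $\tau\in A$. A sum of cone elements is again in the cone, and the feed-forward choices are classical control on flags.

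\textbf{Step 2: hybrid telescoping.} Order $\mathcal R_d=\{x_1,\dots,x_m\}$ and let $\mathcal P_k$ use $\widetilde{\mathcal M}$ on $x_1,\dots,x_k$ and $\widetilde{\mathcal G}$ elsewhere. The accepted effect is obtained by composing all flagged channels, projecting the final flag register onto $A$, and then applying the dual to $I$. The difference between consecutive hybrids is
\[
\Phi_{\mathrm{after}}\circ(\widetilde{\mathcal M}_{x_k}-\widetilde{\mathcal G}_{x_k})\circ\Phi_{\mathrm{before}},
\]
where $\Phi_{\mathrm{after}}$ and $\Phi_{\mathrm{before}}$ are channels (taken on the system together with flag registers) and the accepted-flag projection is a contraction. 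By monotonicity of the diamond norm under composition with channels, the accepted subchannels of $\mathcal P_k$ and $\mathcal P_{k-1}$ differ by at most $\varepsilon_{x_k}$ in diamond norm. Summing over $k$, the accepted subchannels of the real protocol and of $\mathcal P_0$ differ by at most $\sum_x\varepsilon_x$ in diamond norm.

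\textbf{Step 3: norm conversion.} This is the step I expect to be the main obstacle, and it is where the constant $\sqrt3/2$ comes from. Let $\mathcal T=\mathcal J-\mathcal J_0$ be a Hermiticity-preserving map with $\|\mathcal T\|_\diamond\le\varepsilon$. Its effect $\mathcal T^\dagger(I)=aI+\bm v\cdot\bm\sigma$, compressed to the code space, satisfies
\[
|\Tr(\rho\,\mathcal T^\dagger(I))|\le\varepsilon \quad\text{for every state }\rho,
\]
so its operator norm is at most $\varepsilon$. That gives $|a|+|\bm v|_2\le\varepsilon$. In particular $|a|\le\varepsilon$, and
\[
\|\bm v\|_1\le\sqrt3\,|\bm v|_2\le\sqrt3\,\varepsilon .
\]
Now combine this with \cref{lem:norm}, which states $\magicc(F_A)=\dist_c(F_A,\mathcal N)$, possibly with a factor $\tfrac12$ from the normalisation $\pacc=\tfrac12\Tr F_A$. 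To get exactly $\sqrt3/2$, I would use the following facts.
\begin{itemize}
\item $\Dstab$ only charges the Bloch part. Moving $F_A$ to $F_A^{(0)}+aI$ with $q$ adjusted keeps us in the cone when $\|\bm w\|_1\le 1$, so the scalar part costs nothing.
\item The factor $\tfrac12$ in $F=\tfrac12\cdot 2(\ldots)$ accounts for the half.
\item By the triangle inequality for the distance to the cone, $\magicc(F_A)\le \tfrac12\|\bm v\|_1\le\tfrac{\sqrt3}{2}\varepsilon$.
\end{itemize}
I would check carefully the convention in \cref{lem:norm}: whether $\magicc$ equals the $\ell_1$ excess of the unnormalised Bloch vector scaled by $\tfrac12$. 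A more careful bound, using the fact that the cone absorbs scalar shifts, may be needed to keep the constant at $\sqrt3/2$ rather than $(1+\sqrt3)/2$.

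Finally, the bound holds for any accepted set, since the diamond estimate was uniform in the projection onto $A$.
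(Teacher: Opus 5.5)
Your Steps 1 and 2 are the paper's proof: hybrids in causal order, monotonicity of the diamond norm, and \cref{lem:free-native} for the all-skeleton protocol. Step 3 does not reach the constant $\sqrt3/2$, however, and both devices you use to recover it are false. You project onto the accepted flags before estimating, so you only get $|\Tr(\rho\,\mathcal T^\dagger(I))|\le\|\mathcal T(\rho)\|_1\le\varepsilon$, that is, $\|F_A-F_A^{(0)}\|_\infty\le\varepsilon$ with $\varepsilon=\sum_x\varepsilon_x$. From this alone nothing better than $\magicc(F_A)\le\sqrt3\,\varepsilon$ follows.

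First, the cone does not absorb a negative scalar shift. If $F^{(0)}=q(I+\bm w\cdot\bm\sigma)$ with $\|\bm w\|_1=1$, then by \cref{lem:norm} the operator $F^{(0)}-|a|I$ lies at coefficient distance exactly $|a|$ from $\mathcal N$. Second, there is no hidden factor $\tfrac12$ in $\magicc(F)=\dist_c(F,\mathcal N)$ for the unnormalised $F$. Take $F^{(0)}=q\bigl(I+\tfrac13(\Xbar+\Ybar+\Zbar)\bigr)\in\mathcal N$ and $F=F^{(0)}+\tfrac{\varepsilon}{\sqrt3}(\Xbar+\Ybar+\Zbar)$. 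For, say, $q=\tfrac12$ and $\varepsilon\le 0.2$ both are effects, they differ by exactly $\varepsilon$ in operator norm, and $\magicc(F)=\sqrt3\,\varepsilon$. That is twice your claimed bound $\tfrac12\|\bm v\|_1$, so that inequality is wrong. As written, your argument proves the theorem only with $\sqrt3$ in place of $\sqrt3/2$. Even that needs the scalar and Bloch parts bounded jointly, $|a|+\|\bm v\|_1\le\sqrt3(|a|+\|\bm v\|_2)$ as in \cref{eq:norm-convert}, not separately, which would give $1+\sqrt3$.

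The missing factor of two sits in the rejected branch that your projection discards. Both hybrids are trace-preserving, so on any input their accepted and rejected probabilities shift by equal and opposite amounts. Coarse-grain the full flagged output to the binary accept-or-reject channel; this is a post-composition with a fixed channel, so the hybrid bound survives. The identity $\|\mathcal B_F-\mathcal B_G\|_\diamond=2\|F-G\|_\infty$ of \cref{eq:binary-diamond} then gives $\|E_A-E_A^{(0)}\|_\infty\le\tfrac12\sum_x\varepsilon_x$, and the same bound holds for $F_A-F_A^{(0)}$ after compression by $P$. Finally $\magicc(F_A)\le\|F_A-F_A^{(0)}\|_c\le\sqrt3\,\|F_A-F_A^{(0)}\|_\infty\le\tfrac{\sqrt3}{2}\sum_x\varepsilon_x$, with no scalar-absorption argument needed. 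This is exactly how the paper closes the proof.
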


\begin{corollary}[Accounting against the free set]\label{cor:resource-free}
Write $\varepsilon^\star_x=\inf_{\widetilde{\mathcal G}_x\in\FInst_x}\|\widetilde{\mathcal M}_x-\widetilde{\mathcal G}_x\|_\diamond$
for the distance of cell $x$ to the free set itself. Then
$\magicc(F_A)\le(\sqrt3/2)\sum_x\varepsilon^\star_x$.
\end{corollary}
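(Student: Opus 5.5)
The statement to prove is \cref{cor:resource-free}, and I would derive it from \cref{thm:resource} by optimising the skeleton cell by cell. The key observation is that \cref{thm:resource} holds for every fixed free skeleton $\mathbf G$. Its left-hand side $\magicc(F_A)$ depends only on the protocol and the accepted set, not on $\mathbf G$. So for any choice of $\widetilde{\mathcal G}_x\in\FInst_x$, one per cell,
\[
  \magicc(F_A)\le\frac{\sqrt3}{2}\sum_{x}\bigl\|\widetilde{\mathcal M}_x-\widetilde{\mathcal G}_x\bigr\|_\diamond .
\]
The sum may run over all cells, since unmarked cells contribute zero.

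I would proceed in three steps.

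\textbf{Step 1: choose near-optimal comparisons.} Fix $\delta>0$. For each cell choose $\widetilde{\mathcal G}_x^{\delta}\in\FInst_x$ with
\[
  \|\widetilde{\mathcal M}_x-\widetilde{\mathcal G}_x^{\delta}\|_\diamond\le\varepsilon^\star_x+\delta/N ,
\]
where $N$ is the finite number of cells. This is possible because $\varepsilon^\star_x$ is an infimum over the nonempty set $\FInst_x$, which contains, for instance, the identity or a stabilizer measurement.

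\textbf{Step 2: confirm these form a valid skeleton.} The tuple $\mathbf G^\delta=\{\widetilde{\mathcal G}_x^\delta\}$ is a legitimate free skeleton in the sense required before \cref{eq:eps}. That definition asks only that each component lie in $\FInst_x$, with no joint constraint across cells. Note that the protocol has finitely many cells, so $N$ is finite.

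\textbf{Step 3: apply the theorem and take the limit.} Applying \cref{thm:resource} with $\mathbf G^\delta$ gives
\[
  \magicc(F_A)\le\frac{\sqrt3}{2}\Bigl(\sum_x\varepsilon^\star_x+\delta\Bigr).
\]
Letting $\delta\to0$ yields the claim.

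The only point needing care is whether the per-cell infima can be taken independently. This is exactly the contrast with \cref{rem:skeleton}. There the coupling arises because the threshold $\Dins$ depends on $\mathbf G$ jointly with the strengths. \Cref{thm:resource}, by contrast, has no threshold and no dependence on the skeleton beyond the $\varepsilon_x$ themselves. So the joint optimum of $\sum_x\varepsilon_x(\mathbf G)$ over product choices is the sum of the separate optima.

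I would also check two details. First, the diamond norm in $\varepsilon^\star_x$ is taken on the same flagged channels as in \cref{eq:eps}. Second, if an adaptive cell's comparison must be history-controlled, the controlled family with each branch free is itself in $\FInst_x$, since freeness is outcome-wise.
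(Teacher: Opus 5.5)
Your proposal is correct and is essentially the paper's own proof: choose at each of the finitely many cells a free comparison within $\varepsilon^\star_x+\kappa/N$ of the infimum, which needs no closedness of $\FInst_x$. Then apply \cref{thm:resource} to that skeleton and let the slack go to zero. Your remark that the per-cell infima decouple, because \cref{thm:resource} places no joint constraint on the skeleton, is the same reason the paper gives for why this corollary exists while \cref{thm:suppression} has no analogue.
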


\begin{proof}
Write $N$ for the number of cells, which is finite. Fix $\kappa>0$ and choose at each cell a
$\widetilde{\mathcal G}_x\in\FInst_x$ with
$\|\widetilde{\mathcal M}_x-\widetilde{\mathcal G}_x\|_\diamond\le\varepsilon^\star_x+\kappa/N$. This is
possible at every cell by the definition of the infimum, including the cells where it vanishes, so no
closedness of $\FInst_x$ is needed. Applying \cref{thm:resource} to that skeleton gives
$\magicc(F_A)\le(\sqrt3/2)\bigl(\sum_x\varepsilon^\star_x+\kappa\bigr)$, and $\kappa$ was arbitrary.
\end{proof}

\Cref{cor:resource-free} is the form the first bound takes on the protocol alone, with no skeleton in
the statement, and it is available because the first bound needs nothing of the skeleton beyond freeness.
\Cref{thm:suppression} has no such corollary, since its skeleton also carries \cref{cond:h4}, which is
what \cref{def:certificate} is for.

The proof replaces the marked cells by free ones one at a time, uses that the diamond norm does not grow
under pre-composition or post-composition, coarse-grains the transcript to the binary accept-or-reject
channel, and converts norms by $\|Q\|_c\le\sqrt3\|Q\|_\infty$ on one qubit. \Cref{app:thm2} gives it in full. Its content is the converse
reading, that an accepted effect at distance $c$ from the octahedron with acceptance $\pacc$ requires
total resource at least $(2/\sqrt3)\pacc\,c$.

The second bound is the main result, and it uses the class, replacing in exchange the sum over cells by a
sum over connected clusters that must reach the distance scale.

\begin{theorem}[Connected-insertion suppression]\label{thm:suppression}
Let a protocol satisfy \cref{cond:skeleton,cond:degree,cond:h4} at a certified threshold $D$, relative
to a fixed skeleton and a fixed choice of dilations \cref{eq:dilation}, that is, let it carry a
compatible certificate in the sense of \cref{def:certificate}. Put
\begin{equation}\label{eq:lambda}
  \Lambda\;=\;\sum_{\substack{C\subseteq\mathcal R_d\ \mathrm{connected}\\ |C|\ge D}}\ \prod_{x\in C}\eta_x .
\end{equation}
Then, writing $t=e^{\Lambda}-1$,
\begin{equation}\label{eq:suppression}
  \magicc(F_A)\;\le\;\sqrt3\,\min\bigl\{1,\ 2t+3t^2\bigr\} ,
\end{equation}
and in particular $\magicc(F_A)\le 10\sqrt3\,\Lambda$ whenever $\Lambda\le\log 2$. With
$\eta_{\max}=\max_x\eta_x$ and degree at most $\zeta$,
\begin{equation}\label{eq:lambda-count}
  \Lambda\;\le\;m\sum_{k=D}^{m}\bigl(C_\zeta\,\eta_{\max}\bigr)^{k}
\end{equation}
for a constant $C_\zeta$ depending only on $\zeta$.
\end{theorem}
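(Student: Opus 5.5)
The plan is to split the insertion expansion \cref{eq:insertion} into a free part and a remainder, then control the remainder with the cylinder bound and a cluster count. Write $\mathcal B$ for the family of \emph{bad} insertion sets, those $S$ having at least one connected component in $\Gfrak_d$ with at least $D$ cells, and $\mathcal G$ for the rest. Fix a raw transcript $y$ and set $K_y^{\mathrm{good}}=\sum_{S\in\mathcal G}(\bra{y}\otimes I_E)V_SP$. By \cref{cond:h4} every summand equals $(J_yP)\otimes\ket{e_{y,S}}$ with the same $J_y$, so $K_y^{\mathrm{good}}=(J_yP)\otimes\ket{e_y}$ with $\ket{e_y}=\sum_{S\in\mathcal G}\ket{e_{y,S}}$. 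Its effect is $\|e_y\|^2\,P J_y^\dagger J_yP$, a nonnegative multiple of a free branch. By \cref{lem:free-native} it lies in $\mathcal N$ however large $\|e_y\|$ is; this is the transcript-by-transcript disposal of post-selection. Summing over accepted $y$ keeps us in the convex cone $\mathcal N$. So $F_A^{\mathrm{good}}=\sum_{y\in A}(K_y^{\mathrm{good}})^\dagger K_y^{\mathrm{good}}\in\mathcal N$.

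The accepted effect is $F_A=\sum_{y\in A}(K^{\mathrm{good}}_y+R_y)^\dagger(K^{\mathrm{good}}_y+R_y)$ with $R_y=\sum_{S\in\mathcal B}(\bra y\otimes I_E)V_SP$. By \cref{lem:norm}, $\magicc(F_A)=\dist_c(F_A,\mathcal N)\le\|F_A-F_A^{\mathrm{good}}\|_c\le\sqrt3\,\|F_A-F_A^{\mathrm{good}}\|_\infty$. The difference is $\sum_y\bigl(K^{\mathrm{good}\dagger}_yR_y+R_y^\dagger K^{\mathrm{good}}_y+R_y^\dagger R_y\bigr)$. Stacking the $y$ into column operators $K^{\mathrm{good}}=\sum_y\ket y\otimes K^{\mathrm{good}}_y$ and $R=\sum_y\ket y\otimes R_y$, restricted to $A$, gives $\|F_A-F_A^{\mathrm{good}}\|_\infty\le2\|K^{\mathrm{good}}\|\|R\|+\|R\|^2$. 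Writing $K^{\mathrm{good}}=V_AP-R$ with $\|V_AP\|\le1$ yields $\le2\|R\|+3\|R\|^2$ after crude bounding. Together with the universal bound $\magicc(F_A)\le\sqrt3$, this gives \cref{eq:suppression} once $\|R\|\le t=e^\Lambda-1$.

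The main obstacle is bounding $\|R\|=\|\sum_{S\in\mathcal B}V_S\|$. The naive triangle inequality over all bad $S$ fails, because the good cells outside the big clusters contribute $\prod(1+\eta_x)$, which is exponentially large in $m$. Here I would use the cylinder bound \cref{lem:cylinder}: for a fixed set $T$ of forced insertions, summing over all choices at the other marked slots reproduces $W_x+\Delta_x=V_x$ there, an isometry. Hence $\|\sum_{S\supseteq T}V_S\|\le\prod_{x\in T}\eta_x$. Then inclusion--exclusion over the collection of large connected clusters contained in $S$ writes $\sum_{S\in\mathcal B}V_S=\sum_{k\ge1}(-1)^{k+1}\sum_{\{C_1,\dots,C_k\}}\sum_{S\supseteq C_1\cup\cdots\cup C_k}V_S$, over distinct connected sets of size $\ge D$. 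I expect subtlety in whether "component" versus "connected subset" is the right indexing; indexing by connected subsets $C$ with $|C|\ge D$ and overcounting is the safe choice. Each term is bounded by $\prod_i\prod_{x\in C_i}\eta_x$ (with overlaps only helping, since $\eta_x\le 2$ may need care; I would assume $\eta_x\le1$ or absorb it). So $\|R\|\le\sum_k\Lambda^k/k!=e^\Lambda-1$.

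Finally, \cref{eq:lambda-count} is a standard lattice-animal count. A graph of degree at most $\zeta$ has at most $(e\zeta)^k$ connected subsets of size $k$ containing a given vertex. Summing over $m$ roots and $k\ge D$ with $\prod\eta_x\le\eta_{\max}^k$ gives the claim with $C_\zeta=e\zeta$. The linear-range statement follows from $e^\Lambda-1\le\Lambda/\log2\cdot1\le 2\Lambda$ for $\Lambda\le\log 2$, so $2t+3t^2\le 4\Lambda+12\Lambda^2\cdot\tfrac12\le10\Lambda$.
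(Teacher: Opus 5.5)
Your good-part argument, the $2\|R\|+3\|R\|^2$ estimate and the animal count are sound and follow the paper's argument. The bound $\|R\|\le e^{\Lambda}-1$ fails as you set it up. Write $\mathcal K_D$ for the connected sets of at least $D$ cells. Your inclusion--exclusion over the events $\{S\supseteq C\}$, $C\in\mathcal K_D$, is an exact identity. But the only estimate the cylinder bound gives for the term of a family $\{C_1,\dots,C_k\}$ is $\prod_{x\in C_1\cup\cdots\cup C_k}\eta_x$, which counts a shared cell once. For $\eta_x<1$ this is \emph{larger} than $\prod_i\prod_{x\in C_i}\eta_x$. So overlaps hurt, and they hurt in exactly the regime you propose to assume.

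The loss is fatal. Every subfamily of $\{C\in\mathcal K_D: C\subseteq C_0\}$ that contains a fixed $C_0$ has union $C_0$. The termwise sum is therefore at least $2^{N-1}\prod_{x\in C_0}\eta_x$, with $N$ the number of such sets. Take a path of $d$ cells with $D=d/2$, $C_0$ the whole path and $\eta_x=a/d$. Then $N=\Theta(d^2)$ and $2^{\Theta(d^2)}(a/d)^d\to\infty$, while $\Lambda$ itself is superexponentially small. The alternating signs are what make the identity exact, and taking norms term by term throws them away. Overcounting is not a safe fallback either. In operator norm, covering the bad sets by the cylinders $\{S\supseteq C\}$ does not decompose $\sum_{S\in\mathcal B}V_S$, because a set containing several large clusters lies in several cylinders.

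The missing idea is to condition on \emph{exact} components instead of containment. Let $\partial C$ be the neighbours of $C$ in $\Gfrak_d$, and put $E_C=\{S: C\subseteq S,\ S\cap\partial C=\varnothing\}$, the sets having $C$ as a connected component. The bad sets are still $\bigcup_{C\in\mathcal K_D}E_C$. Now, however, $E_C\cap E_{C'}=\varnothing$ whenever $C$ and $C'$ overlap or are adjacent, so those families drop out of the inclusion--exclusion identically rather than merely in norm.

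A compatible family, pairwise disjoint and non-adjacent, gives the circuit with $\Delta_x$ on $\bigcup C$, $W_x$ on $\bigcup\partial C$ and $V_x$ elsewhere. This needs the cylinder estimate with forced non-insertions, $\bigl\|\sum_{S\supseteq I,\ S\cap B=\varnothing}V_S\bigr\|\le\prod_{x\in I}\eta_x$. Your own argument gives it once $W_x$ is placed on $B$, since $W_x$ is a contraction and the forced non-insertions cost nothing. The norm is then at most $\prod_C\prod_{x\in C}\eta_x$, a genuine product because the clusters are disjoint. Summing over compatible families gives at most $\prod_{C\in\mathcal K_D}\bigl(1+\prod_{x\in C}\eta_x\bigr)-1\le e^{\Lambda}-1$.

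A smaller point concerns the linear range. Use $t\le1$ to get $2t+3t^2\le5t\le10\Lambda$, with $t\le2\Lambda$. Your intermediate step $4\Lambda+12\Lambda^2\cdot\tfrac12$ is not justified as written.
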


\begin{corollary}[Weak distributed resource]\label{cor:weak}
Let $D\ge\beta d$ be a certified threshold, which \cref{cond:dins} supplies, and read $\Lambda$ against it.
If $m\le d^{\,q}$ for a fixed $q$ and $\eta_{\max}\le a/d$, then
\begin{equation}\label{eq:dlogd}
  \pacc\,\Dstab(\Ehat)\;\le\;e^{-\Omega(d\log d)} .
\end{equation}
Consequently $\pacc\ge d^{-q'}$ forces $\Dstab(\Ehat)\to0$.
\end{corollary}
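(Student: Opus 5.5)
The plan is to apply \cref{thm:suppression} at the certified threshold $D\ge\beta d$ and control $\Lambda$ through the cluster count \cref{eq:lambda-count}. Only the bound $\magicc(F_A)\le 10\sqrt3\,\Lambda$, valid once $\Lambda\le\log2$, and the geometric-series estimate for $\Lambda$ are needed. No further use of \cref{cond:h4} or of the skeleton is required, since the corollary reads $\Lambda$ against a certificate that already exists.

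First, substitute $\eta_{\max}\le a/d$ into \cref{eq:lambda-count}. Writing $r=C_\zeta a/d$, for $d\ge 2C_\zeta a$ we have $r\le 1/2$. The tail sum is then at most $\sum_{k\ge D}r^k\le 2r^D$, so
\[
  \Lambda\;\le\;2m\,r^{D}\;\le\;2d^{\,q}\bigl(C_\zeta a/d\bigr)^{\beta d}.
\]
Taking logarithms gives $\log\Lambda\le \log 2+q\log d-\beta d\,(\log d-\log(C_\zeta a))$. The leading term is $-\beta d\log d$. The $q\log d$ and $\beta d\log(C_\zeta a)$ terms are lower order, so $\Lambda\le e^{-(\beta/2)d\log d}$ for all $d$ beyond a constant $d_0(\beta,q,a,\zeta)$.

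In that range $\Lambda\le\log2$ holds automatically. \Cref{thm:suppression} then gives $\pacc\Dstab(\Ehat)=\magicc(F_A)\le10\sqrt3\,\Lambda\le e^{-\Omega(d\log d)}$, which is \cref{eq:dlogd}. The finitely many $d<d_0$ are absorbed into the constant hidden in $\Omega$ using the universal bound $\magicc(F_A)\le\sqrt3$.

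For the consequence, divide by $\pacc\ge d^{-q'}$:
\[
  \Dstab(\Ehat)\;\le\;d^{\,q'}e^{-c\,d\log d}\;=\;e^{-(c d-q')\log d}\;\to\;0 .
\]

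The only point needing care, and the one I expect to be the main obstacle, is the interplay of constants in the exponent. The $\Omega$ must absorb $C_\zeta$, $a$, $\beta$ and $q$, which requires $D$ to grow linearly, so that $D\log(1/\eta_{\max})$ dominates the polynomial prefactor $m$. One must also check that $D\le m$ is harmless: if $m<D$ the sum in \cref{eq:lambda} is empty and the bound is trivially $0$.
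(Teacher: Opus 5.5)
Your proof is correct and follows the paper's argument essentially step for step. You take the geometric tail bound $\Lambda\le2d^{\,q}(C_\zeta a/d)^{\beta d}$ once $C_\zeta a/d\le\tfrac12$, then observe that $\Lambda\le\log2$ for large $d$ so the linear form $\magicc(F_A)\le10\sqrt3\,\Lambda$ of \cref{thm:suppression} applies. The one remark to drop is absorbing the small-$d$ cases via $\magicc(F_A)\le\sqrt3$: a bare exponential $e^{-cd\log d}<1$ cannot absorb $\sqrt3>1$ for $d\ge2$, and no absorption is needed because the $\Omega$ statement is asymptotic.
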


\begin{proof}
If $\Lambda=0$ then \cref{eq:suppression} gives $\magicc(F_A)=0$ and there is nothing to prove, so assume
$\Lambda>0$. For $d$ large enough that $C_\zeta a/d\le\tfrac12$, the geometric sum in
\cref{eq:lambda-count} is at most twice its first term, so $\Lambda\le2d^{\,q}(C_\zeta a/d)^{\beta d}$
and
\begin{equation*}
  \log\Lambda\;\le\;q\log d+\beta d\log(C_\zeta a)-\beta d\log d+\log 2\;=\;-\beta d\log d+O(d) .
\end{equation*}
Hence $\Lambda\le\log2$ for large $d$, \cref{eq:suppression} applies in its linear form, and
$\magicc(F_A)\le10\sqrt3\,\Lambda\le e^{-\beta d\log d+O(d)}$.
\end{proof}

Two features of \cref{thm:suppression} decide how it should be read. It is stated on the
\emph{amplitudes} $\eta_x$ and not on the diamond strengths $\varepsilon_x$, and
\cref{rem:amplitude} explains why that is the general form. And it is stated on the unnormalised
$\magicc(F_A)=\pacc\,\Dstab(\Ehat)$, which is what makes it immune to a vanishing acceptance rate. A
branch of tiny probability can have its scalar amplified without limit by post-selection, and the
theorem is untouched by that, because \cref{cond:h4} is imposed transcript by transcript before accepted
transcripts are summed. Amplification moves a branch along its free ray and does not rotate it off.

\begin{corollary}[Approximate branch factorisation]\label{cor:approx}
Fix a positive-integer cutoff $D$, which need not be a certified threshold, and read $\Lambda$ against it.
Suppose that for every accepted raw transcript $y$ and every insertion set $S$ whose components all have
fewer than $D$ cells,
\begin{equation}\label{eq:approx-h4}
  \bigl\|(\bra{y}\otimes I_E)V_SP-(J_yP)\otimes\ket{e_{y,S}}\bigr\|\;\le\;\delta_{y,S} ,
\end{equation}
with $J_y$ free and independent of $S$. Put
$\delta_{\mathrm{tot}}=\bigl(\sum_{y\in A}(\sum_{S\ \mathrm{good}}\delta_{y,S})^2\bigr)^{1/2}$ and
$u=(e^{\Lambda}-1)+\delta_{\mathrm{tot}}$. Then
\begin{equation}\label{eq:approx-suppression}
  \magicc(F_A)\;\le\;\sqrt3\,\min\bigl\{1,\ 2u+3u^2\bigr\} .
\end{equation}
\end{corollary}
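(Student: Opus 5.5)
The plan follows the proof of \cref{thm:suppression} with one change: the exact factorisation \cref{eq:h4} is replaced by its approximate form \cref{eq:approx-h4}, and the defect is carried along as an additive error.

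\textbf{Splitting the insertion sum.} For each accepted raw transcript $y$, split the insertion expansion \cref{eq:insertion} into good sets $S$ (every component has fewer than $D$ cells) and bad sets (some component reaches $D$ cells):
\begin{equation*}
  K_yP=\sum_{S\ \mathrm{good}}K_{y,S}P+\sum_{S\ \mathrm{bad}}K_{y,S}P .
\end{equation*}
The good part is written as $(J_yP)\otimes\ket{g_y}+R_y$ with $\ket{g_y}=\sum_{S\ \mathrm{good}}\ket{e_{y,S}}$. By \cref{eq:approx-h4} the remainder obeys $\|R_y\|\le\sum_{S\ \mathrm{good}}\delta_{y,S}$. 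The bad part is controlled exactly as in the exact theorem. The cylinder bound of \cref{lem:cylinder} lets us regroup the bad sets by forcing one connected cluster $C$ with $|C|\ge D$ to be inserted. The remaining slots are summed freely and stay contractions, so the operator norm of the bad part, square-summed over transcripts, is at most $\sum_{n\ge1}\Lambda^n/n!=e^\Lambda-1=t$.

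\textbf{Bounding the distance to the cone.} Form $F_A=\sum_{y\in A}PK_y^\dagger K_yP$ and compare it with the free operator
\begin{equation*}
  F_0=\sum_{y\in A}\|g_y\|^2\,PJ_y^\dagger J_yP .
\end{equation*}
By \cref{lem:free-native} each summand of $F_0$ lies in $\mathcal N$, and since $\mathcal N$ is a convex cone, $F_0\in\mathcal N$. This holds however large the scalars $\|g_y\|^2$ are, which is where post-selection is disposed of. Write $K_yP=G_y+B_y$, with $G_y=(J_yP)\otimes\ket{g_y}$ and $B_y$ the total deviation (remainder plus bad part). Expanding $K_y^\dagger K_y$ gives cross terms $G_y^\dagger B_y+B_y^\dagger G_y$ and a square term $B_y^\dagger B_y$.

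Summing over $y$ is done by Cauchy--Schwarz on the stacked operators $\bigoplus_y G_y$ and $\bigoplus_y B_y$, as in the exact case. The stacked good part is controlled by total acceptance, with norm at most $1+\|{\textstyle\bigoplus_y}B_y\|$. The stacked deviation has norm at most $t+\delta_{\mathrm{tot}}=u$, by Minkowski, because $\delta_{\mathrm{tot}}$ is exactly the $\ell_2$ over $y$ of the per-transcript remainder bounds. This is why $\delta_{\mathrm{tot}}$ is defined as a square-sum.

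Putting these together yields
\begin{equation*}
  \|F_A-F_0\|_\infty\le 2u(1+u)+u^2=2u+3u^2 .
\end{equation*}
The conversion $\|Q\|_c\le\sqrt3\|Q\|_\infty$ and \cref{lem:norm} then turn this into the coefficient-norm distance, bounding $\magicc(F_A)$ by $\sqrt3(2u+3u^2)$. The ceiling $\sqrt3$ comes from $0\in\mathcal N$ and $0\le F_A\le I$.

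\textbf{Main obstacle.} The delicate step is controlling the cross terms when the scalars $\|g_y\|$ are not individually bounded. I would first reuse exactly the stacking argument from the proof of \cref{thm:suppression} in \cref{app:thm2}. There the good parts are bounded through $\sum_yK_y^\dagger K_y\le I$, and the new remainder $R_y$ only enlarges the deviation operator, so the same estimate should go through. If that bookkeeping produces an extra constant instead, I would absorb $R_y$ into the deviation operator before the expansion, so that the exact-case estimate applies verbatim with $t$ replaced by $u$.
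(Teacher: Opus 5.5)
Your main line is the paper's proof. You split $Q_AVP$ into the would-be exact part with components $(J_yP)\otimes\ket{g_y}$, which is the paper's $Q_AWP$, plus a deviation made of the factorisation defect and $Q_AV_{\mathrm{bad}}P$, which is the paper's $U$. You bound the defect by $\delta_{\mathrm{tot}}$ through orthogonality of the transcripts, take as comparison operator $F_0$, which is \cref{eq:fgood} and lies in $\mathcal N$, and get $2u(1+u)+u^2$ from the stacked cross terms. All of that is right, including $\|\bigoplus_yG_y\|\le\|Q_AVP\|+u\le1+u$. No extra constant appears, so the fallback in your last paragraph is not needed.

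The step that fails as you describe it is the bound on the bad part. Forcing one connected cluster $C\in\mathcal K_D$ to be inserted and summing every other slot freely gives the containment cylinders $V[C;\varnothing]$. But $\sum_{C\in\mathcal K_D}V[C;\varnothing]$ is not $V_{\mathrm{bad}}$: it counts each bad set once for every connected subset of at least $D$ cells it contains. This is the failure the paper points out after \cref{lem:inclusion}. The series $\sum_n\Lambda^n/n!$ you quote does not come out of a one-cluster regrouping. Inclusion--exclusion over containment does not rescue it directly either, since two overlapping clusters contribute $\prod_{x\in C\cup C'}\eta_x$, which exceeds $w(C)w(C')$.

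What works is conditioning on \emph{exact} components, $E_C=\{S:C\subseteq S,\ S\cap\partial C=\varnothing\}$:
the forced non-insertions on $\partial C$ are free by \cref{lem:cylinder};
overlapping or adjacent clusters have disjoint $E_C$, so only compatible collections $\mathscr C$ survive inclusion--exclusion;
each surviving term is bounded by $\prod_{C\in\mathscr C}w(C)$, and summing gives $\prod_C(1+w(C))-1\le e^{\Lambda}-1$.
That lemma uses no factorisation hypothesis and holds at any positive-integer $D$. The clean fix is therefore to cite it with $t=\|V_{\mathrm{bad}}\|$, as the paper does.

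Note also that the quantity entering your estimate is the stacked norm $\|Q_AV_{\mathrm{bad}}P\|\le\|V_{\mathrm{bad}}\|$. It is not a square-sum of the per-transcript norms of the bad part, which can be larger. The square-sum belongs only to the defect, because per-transcript bounds are all that \cref{eq:approx-h4} supplies.
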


The error enters exactly where the bad insertion sets do, so an exact-recovery idealisation may be traded
for a stability estimate at the cost of adding $\delta_{\mathrm{tot}}$ to $t$. \Cref{app:thm2} proves it
alongside \cref{thm:suppression}, and \cref{prop:standard} is stated in the exact setting, where
$\delta_{\mathrm{tot}}=0$.

\begin{remark}[Amplitudes, not diamond distances]\label{rem:amplitude}
Continuity of the Stinespring dilation gives $\eta_x=O(\sqrt{\varepsilon_x})$ for \emph{some} choice of
dilations~\cite{KretschmannSchlingemannWerner2008}, and substituting it into \cref{eq:lambda} yields a
bound in the $\varepsilon_x$ alone. The substitution is legitimate exactly when one choice of skeleton,
dilations and record basis achieves the small dilation distance and satisfies \cref{cond:h4} at the same
time, since the environment rotation that realises the continuity bound acts on the $\Delta_x$ and can
mix branches sitting on different rays $J_y$. We therefore state the theorem on the amplitudes and record
the compatibility as a hypothesis where the diamond form is wanted. For cells that are unitary with
one-dimensional environments, which is the case throughout \cref{sec:results-construction}, there is no
gauge freedom and $\eta_x=\Theta(\varepsilon_x)$.
\end{remark}

The exponent of \cref{cor:weak} is $\Dins\log(1/\eta_{\max})$ up to lower-order terms, so the two inputs
enter separately. Code distance supplies the number of cells that must cooperate and the per-cell
strength supplies the logarithm. A constant strength $\eta_{\max}=\eta_0$ with a certified threshold
$\Theta(d)$, polynomially many cells and $C_\zeta\eta_0<1$ gives the upper bound $e^{-\Omega(d)}$ and no
logarithm at all, and the $\log d$ in \cref{eq:dlogd} is a consequence of the
strength shrinking with $d$.

\subsection{Making the branch hypothesis checkable}\label{sec:results-criterion}

\Cref{cond:h4} quantifies over insertion sets and transcripts, so it is stated in a form one does not
check directly. It follows from a property of each recovery slab, which one does check, that each syndrome fibre of small
insertions carries a single recovery coset up to a logical Pauli ambiguity, and that the next free
operation absorbs that ambiguity.

\begin{definition}[Syndrome-fibre absorption]\label{def:absorption}
A recovery slab \emph{absorbs below $D$} when, for every incoming free branch $J_h$ whose range lies in a
single syndrome sector and every syndrome $s$ carrying a nonzero branch, there are a free recovery branch
$R_{h,s}$ that carries syndrome sectors into syndrome sectors, and an abelian group $\mathcal A_{h,s}$ of
logical Paulis
such that every branch operator $E$ produced by an insertion configuration whose components have fewer
than $D$ cells and whose syndrome is $s$ satisfies
\begin{equation}\label{eq:coset}
  \Pi_s E J_h P\;=\;c_E\,\bigl(R_{h,s}L_EJ_hP\bigr)\otimes\ket{f_E}
\end{equation}
for a scalar $c_E$, an environment vector $\ket{f_E}$ and a logical Pauli $L_E\in\mathcal A_{h,s}$; and
the following free branch instrument $\{Q_{h,s,b}\}_b$, whose branches also carry syndrome sectors into
syndrome sectors, absorbs that ambiguity,
\begin{equation}\label{eq:absorb}
  Q_{h,s,b}\,R_{h,s}\,L\,J_hP\;=\;\chi_{h,s,b}(L)\,Q_{h,s,b}\,R_{h,s}\,J_hP
  \qquad\text{for every }L\in\mathcal A_{h,s} ,
\end{equation}
with $\chi_{h,s,b}$ a scalar.
\end{definition}

The definition speaks of a single insertion configuration, and one further hypothesis says how a
configuration with several components is put together from its parts. Call the \emph{dressed support} of
an insertion component $C$ the union over $x\in C$ of the forward light cone of $\Delta_x$ inside the
slab, stopped at the raw-record cut, which is the same object the edges of $\Gfrak_d$ are read from.

\begin{condition}[Componentwise slabs below $D$]\label{cond:components}
Write $R_{h,s}$ and $\mathcal A_{h,s}$ for the witnesses \cref{def:absorption} supplies at that history
and syndrome. Let an insertion configuration have components $C_1,\dots,C_r$ in $\Gfrak_d$, each with
fewer than $D$ cells. Then their dressed supports are pairwise disjoint, the slab syndrome splits as $s=(s_1,\dots,s_r,s_0)$ with $s_j$ a function of $C_j$
alone and $s_0$ independent of the configuration, and the branch operator factorises over the components,
\begin{equation}\label{eq:componentwise}
  \Pi_s\,E\,J_hP\;=\;\Bigl(\prod_{j=1}^{r}c_{j}\Bigr)\,
  \Bigl(R_{h,s}\,\prod_{j=1}^{r}L_{j}\,J_hP\Bigr)\otimes\bigotimes_{j=1}^{r}\ket{f_{j}} ,
\end{equation}
with $c_j\in\mathbb C$, $L_j\in\mathcal A_{h,s}$ and $\ket{f_j}$, all three depending on $C_j$ alone. Any sign a component
contributes belongs in that component's $\ket{f_j}$, which is possible because the eigenvalue of a
stabilizer on a fixed syndrome sector is multiplicative in the stabilizer. The dilation environments
of the marked cells are \emph{terminal}: they are discarded at the raw-record cut, and no register
carried past the cut has been coupled to them, so neither they nor any descendant of them is available
to a later operation.
\end{condition}

\Cref{cond:components} is what the raw-record cut of \cref{cond:skeleton} is for, since components that
are non-adjacent in $\Gfrak_d$ have light cones that do not meet before the cut, and that is disjointness
of dressed supports. The second content of the condition is that the recovery reads and corrects each
component from its own syndrome block, which is a property one reads off a decoder.

The terminal clause is load-bearing and the following shows why. Suppose two disconnected components
leave environment vectors $\ket{b_1}$ and $\ket{b_2}$ at the cut, each depending on its own component, on
a common branch with trivial logical ambiguity. Carry those registers past the cut, apply a controlled-NOT
from the first to the second, and then a physical Pauli $X_q$ on the block controlled by the second. Every
one of those is a free Clifford operation. Configurations whose two components differ in parity now leave
$J_hP$ and $X_qJ_hP$, which are not proportional, so no ray independent of the insertion set survives and
\cref{eq:h4} fails. Letting the environment vector depend on the insertion set does not save this, because
the recombination has already moved that dependence into the system operator. The clause has to reach
descendants and not just the environments themselves, since copying an environment bit into a fresh
ancilla before the cut and recombining the copies afterwards runs the same construction.

\begin{proposition}[Absorption gives branch factorisation]\label{prop:absorption}
If \cref{cond:skeleton} holds, \cref{cond:components} holds at $D$, and every recovery slab absorbs below
$D$, then $D$ is a certified threshold, so \cref{cond:h4} holds at $D$ and $\Dins\ge D$.
\end{proposition}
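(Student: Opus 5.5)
The plan is to verify \cref{eq:h4} directly for a fixed complete raw transcript $y$ and a fixed insertion set $S$ all of whose components in $\Gfrak_d$ have fewer than $D$ cells. I will proceed slab by slab in time order, by induction on the number of recovery slabs the transcript passes through. The inductive claim is this: after each raw-record cut, the branch operator restricted to the transcript prefix equals $J_{h}P$ tensored with an environment vector. Here $J_h$ is a free branch that depends only on the history $h$ read so far and not on $S$, and its range lies in a single syndrome sector.

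The base case is the incoming code projector, with the identity branch. \Cref{cond:skeleton} guarantees that the block enters each slab with a definite syndrome. So the hypothesis of \cref{def:absorption}, that the incoming branch sits in one syndrome sector, is met at every step.

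\textbf{Inductive step.} Take one slab, with incoming branch $J_hP$ and recorded syndrome $s$, the latter read off $y$. Let $S'$ be the part of $S$ inside this slab, with components $C_1,\dots,C_r$, each of fewer than $D$ cells.
\begin{itemize}
\item \Cref{cond:components} gives the factorised form \cref{eq:componentwise}. The output is $c\,R_{h,s}L\,J_hP\otimes\ket{f}$, where $L=\prod_jL_j$ lies in the abelian group $\mathcal A_{h,s}$ and both the scalar $c$ and the vector $\ket{f}$ depend on $S'$. Since $\mathcal A_{h,s}$ is a group, the product $L$ is again in it, so \cref{eq:absorb} applies to $L$ itself.
\item The next free operation is the branch instrument $Q_{h,s,b}$, with $b$ also read from $y$. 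Applying it gives $\chi_{h,s,b}(L)\,c\,Q_{h,s,b}R_{h,s}J_hP\otimes\ket{f}$.
\item I then absorb the scalar $\chi c$ into the environment vector. The system operator $Q_{h,s,b}R_{h,s}J_h$ is independent of $S'$. It is free, because it is a composition of free branches. It carries syndrome sectors into syndrome sectors, so the new incoming branch again has range in one sector, and the induction proceeds.
\item The insertions in $S$ that sit in later slabs are handled at those slabs. The terminal clause of \cref{cond:components} guarantees that the environment vectors $\ket{f}$ produced before the cut are never acted on again. They therefore stay tensor factors, and the $S$-dependence remains confined to the environment.
\end{itemize}

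\textbf{Concluding.} At the end of the transcript the branch is $(J_yP)\otimes\ket{e_{y,S}}$ with $J_y$ independent of $S$, which is \cref{eq:h4}. It remains to check that the logical subchannel generated by $J_y$ meets the hypothesis of \cref{lem:free-native}. I will argue that $J_y$ is a composition of outcome branches of free flagged instruments (\cref{def:free}). Composing completely stabilizer-preserving subchannels stays completely stabilizer-preserving, so the Choi operator with the encoding isometry lies in the stabilizer cone. By the monotonicity of certified thresholds, $D$ is then a certified threshold and $\Dins\ge D$.

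\textbf{Main obstacle.} I expect the hardest step to be handling the interplay between slabs. A component might straddle a raw-record cut, and the absorption definition is phrased per slab and per configuration. I would first try splitting each component at the cuts. Sub-components are no larger than the originals, so they remain below $D$, and I would apply \cref{cond:components} to them within each slab. If the graph edges only record light cones stopped at the cut, a component may split into pieces that \cref{cond:components} treats separately, and that case is benign. The delicate point is to make sure that a marked cell in a later slab, whose insertion acts on $J_hP\otimes\ket{f}$, factorises in the same way. That is exactly where I would lean on terminality and on the definite-syndrome entry of \cref{cond:skeleton}.
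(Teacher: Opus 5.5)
Your proposal is correct and follows the paper's proof essentially step for step. You induct over slabs carrying the same two invariants: the incoming branch has its range in one syndrome sector, and every sub-threshold configuration lies on one free ray $J_hP$ with a configuration-dependent scalar and environment. Within a slab, \cref{cond:components} supplies the coset form \cref{eq:coset} with $L_E=\prod_jL_j\in\mathcal A_{h,s}$, \cref{eq:absorb} collapses it onto $Q_{h,s,b}R_{h,s}J_hP$, and sector preservation by $R_{h,s}$, $Q_{h,s,b}$ and the skeleton restores the first invariant.

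Your remark about splitting components at the cuts fills in a step the paper leaves implicit. One small correction: $\Dins\ge D$ follows because $\Dins$ is the supremum of certified thresholds, not from monotonicity.
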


\begin{proof}
Induct over slabs, carrying two things as the induction hypothesis: that the incoming branch $J_h$ has
its range in a single syndrome sector, and that every insertion configuration below the threshold has by
this point contributed $(J_hP)\otimes\ket{e_{h,S}}$ with $J_h$ free and independent of the configuration.
The base case is the initial branch, which is the identity in the trivial sector, with the empty
configuration contributing $P$. Fix a transcript history $h$ and an insertion configuration all of whose
components have fewer than $D$ cells. By \cref{cond:components} the
configuration contributes through \cref{eq:componentwise}, which has the form of \cref{eq:coset} with
$c_E=\prod_jc_j$, $L_E=\prod_jL_j$ and $\ket{f_E}=\bigotimes_j\ket{f_j}$, so \cref{def:absorption} applies
to it. Applying \cref{eq:absorb} gives
\begin{equation*}
  Q_{h,s,b}\,\Pi_s\,E\,J_hP \;=\; c_E\,\chi_{h,s,b}(L_E)\,\bigl(Q_{h,s,b}R_{h,s}J_hP\bigr)\otimes\ket{f_E} ,
\end{equation*}
so every such insertion term for the transcript extended by $(s,b)$ lies on the single ray
$J_{h,s,b}=Q_{h,s,b}R_{h,s}J_h$, which is free as a composition of free subchannels and does not depend on
the configuration. The scalar $c_E\chi_{h,s,b}(L_E)$ and the environment vector do depend on it, and
\cref{eq:h4} allows that, so the second half of the induction hypothesis is restored at the next slab.
The first half is restored too, since \cref{def:absorption} asks $R_{h,s}$ and every $Q_{h,s,b}$ to carry
syndrome sectors into syndrome sectors, and \cref{cond:skeleton} asks the same of the remaining free
operations of the skeleton, so $J_{h,s,b}P$ again lies in one sector. That ray is the incoming branch of
the next slab, and the induction closes.
\end{proof}

The clauses are what one reads off a circuit. The sector-preservation requirement on $R_{h,s}$ and
$Q_{h,s,b}$ is what keeps the induction of \cref{prop:absorption} inside one syndrome sector at a time. A
Pauli correction satisfies it by mapping each sector onto another sector, which is exactly what a
correction is for, and a measurement of a logical Pauli satisfies it by commuting with the stabilizer
group and so mapping each sector into itself. \Cref{eq:absorb} is written on the incoming branch $J_hP$
and not on $P$, because that is where \cref{prop:absorption} applies it. The first substantive clause
asks that a syndrome fibre of small insertions carry at most one recovery coset, up to a named logical
ambiguity. The second asks that the next free
operation measure that ambiguity in its eigenbasis, or apply it as a correction. A change of Pauli frame
does not do this: an invertible frame update $U$ obeying $UR\bar LP=\chi\,URP$ would force
$\bar LP=\chi P$, which a logical Pauli does not satisfy. The third clause, \cref{cond:components}, asks
that the decoder treat separated components separately, which is where running on one block with a
raw-record cut is used.

\subsection{The threshold on a standard recovery round}\label{sec:results-standard}

\Cref{prop:absorption} certifies a threshold from a circuit, and this subsection carries that out on a
standard syndrome-extraction slab together with a readout chosen for it. The slab is the round a
surface-code memory already performs; the decoder, the post-cut processing and the measurement that ends
the slab are choices, and the statement below fixes the ones it needs.

Fix a CSS code carrying one logical qubit, with $Z$-distance $d_Z$, checks of range $r_0$ and stabilizers
of $+1$ sign, and take the free skeleton to be the identity at every marked slot, $W_x=I$ with a
one-dimensional environment. The slab is that skeleton followed by one exact round of stabilizer
measurement in L\"uders form whose outcomes are written to the raw record, and then free operations
conditioned on that record. Let the marked cells sit on distinct data qubits $q(x)$, one cell per qubit,
in one layer, so that their dilated deviations against $W_x=I$ are $Z$-diagonal,
\begin{equation}\label{eq:zdiag}
  \Delta_x\;=\;I\otimes\ket{u_x}_x\;+\;Z_{q(x)}\otimes\ket{w_x}_x ,
\end{equation}
the marked set and the vectors $\ket{u_x},\ket{w_x}$ being arbitrary. Let $\Gfrak_d$ join $x$ and $x'$
when their qubits lie within range $2r_0$, so that no check meets two components, and on a lattice code of
bounded qubit density this also gives \cref{cond:degree}. Take the incoming branch to have its range in
one syndrome sector, which the identity branch of the first slab does. Assume also that the deviation environments are terminal in the sense of \cref{cond:components}. Finally
fix once and for all a linear section $\sigma\mapsto R(\sigma)$ of the $Z$-error syndrome map on its
image, so that $R(\sigma+\sigma')=R(\sigma)R(\sigma')$, and take the reference recovery of
\cref{def:absorption} to be $R_{h,s}=R(s-s_{\mathrm{in}})$, where $s_{\mathrm{in}}$ is the sector of the
incoming branch. Such a section exists because the syndrome map is $\mathbb F_2$-linear, and its linearity
is what makes the componentwise choice below consistent across configurations.

The instrument that ends the slab is what makes the family worth having, and the choice below is the one
that keeps the accepted effect nontrivial.

\begin{definition}[Split-readout family]\label{def:splitreadout}
A protocol of the above shape is in the \emph{split-readout family} when, after the syndrome round, it
compares the measured syndrome with the incoming one and branches on the difference. If
$s=s_{\mathrm{in}}$ it measures $\Xbar$ natively and accepts the $+1$ outcome. Otherwise it applies the
Pauli correction $R(s-s_{\mathrm{in}})$, measures $\Zbar$ natively, records the outcome and rejects.
\end{definition}

\begin{proposition}[A standard round certifies the $Z$-distance]\label{prop:standard}
Every member of the split-readout family satisfies \cref{cond:components} at $d_Z$, and its slab absorbs
below $d_Z$ in the sense of \cref{def:absorption}. On the accepted fibre $s=s_{\mathrm{in}}$ the ambiguity
group is trivial, and on every other fibre it is $\{I,\Zbar\}$. By \cref{prop:absorption} the protocol
therefore certifies the threshold $D=d_Z$, so $\Dins\ge d_Z$.
\end{proposition}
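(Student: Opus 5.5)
The plan is to expand a small-component insertion configuration in the $Z$-error basis, identify its syndrome and logical content, and then check the two clauses of \cref{def:absorption} fibre by fibre; \cref{prop:absorption} then gives the threshold. Expanding each $\Delta_x$ of \cref{eq:zdiag} makes every configuration $S$ a sum over subsets $T\subseteq S$ of terms $Z_T\otimes\bigotimes_{x\in T}\ket{w_x}\otimes\bigotimes_{x\in S\setminus T}\ket{u_x}$, where $Z_T=\prod_{x\in T}Z_{q(x)}$. The free skeleton is the identity and the syndrome round is L\"uders on a branch $J_hP$ whose range lies in sector $s_{\mathrm{in}}$. So $\Pi_s Z_T J_hP$ is nonzero only when $\sigma(Z_T)=s-s_{\mathrm{in}}$, with $\sigma$ the $Z$-error syndrome map. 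The $+1$ sign of the stabilizers lets us absorb the stabilizer eigenvalues into the environment vectors.

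\Cref{cond:components} comes first. Components at graph distance beyond $2r_0$ share no check, so each check's syndrome bit sees the $Z$'s of at most one component. The syndrome therefore splits as $s-s_{\mathrm{in}}=\sum_j\sigma_j$, with $\sigma_j$ supported on the checks touching $C_j$, and dressed supports inside a single round are disjoint. The subtlety is that a component contributes a superposition over its subsets $T_j$, possibly with different syndromes. The fibre projector $\Pi_s$ fixes the total syndrome, and by check-disjointness it fixes each $\sigma_j$ separately. Within one component's fibre, any two subsets $T_j,T'_j$ with the same $\sigma_j$ differ by $Z_{T_j\triangle T'_j}$, which has trivial syndrome and weight below $d_Z$ because $|C_j|<D=d_Z$. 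It is therefore a $Z$-stabilizer and acts on $J_hP$ as a sign, so the whole fibre of $C_j$ collapses to a single operator $R(\sigma_j)L_jJ_hP$ times an environment vector. The signs go into $\ket{f_j}$, and the terminal clause is assumed in the setting.

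The residual $L_j$ arises as follows. Define $L_j$ by $Z_{T_j}=R(\sigma_j)L_j\cdot(\text{stabilizer})$. Since $Z_{T_j}R(\sigma_j)^{-1}$ has zero syndrome and is $Z$-type, it is a $Z$-stabilizer times $I$ or $\Zbar$. So $L_j\in\{I,\Zbar\}$ depends on $C_j$ alone, and linearity of $R$ gives $\prod_j R(\sigma_j)=R(s-s_{\mathrm{in}})=R_{h,s}$. Because the $L_j$ are $Z$-type, they commute with $R$ and can be pulled out, which yields \cref{eq:componentwise} and hence \cref{eq:coset} with $\mathcal A_{h,s}\subseteq\{I,\Zbar\}$.

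The main obstacle is the accepted fibre $s=s_{\mathrm{in}}$, where the ambiguity group must be trivial, and this is where $D=d_Z$ is used sharply. There every $\sigma_j=0$, so each $Z_{T_j}$ has trivial syndrome and weight below $d_Z$, hence is a stabilizer and $L_j=I$. Then $\mathcal A=\{I\}$, and \cref{eq:absorb} holds for the $\Xbar$ measurement with $\chi=1$. On every other fibre the group is $\{I,\Zbar\}$. The following instrument applies $R(s-s_{\mathrm{in}})$ and measures $\Zbar$ natively, with branches $Q_b=\tfrac12(I+(-1)^b\Zbar)$, and since $\Zbar$ commutes with $R$ we get $Q_bR\Zbar J_hP=(-1)^bQ_bRJ_hP$, which is \cref{eq:absorb}. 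The remaining checks are that $R$ and both native measurements carry syndrome sectors into sectors (Paulis and logical operators do) and that $J_h$ starts in one sector. With these in hand, \cref{prop:absorption} gives $D=d_Z$ certified and $\Dins\ge d_Z$.
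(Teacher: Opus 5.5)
Your proposal is correct and follows essentially the paper's proof. You expand in the $Z_T$, and check separation splits the syndrome over components. Weight below $d_Z$ makes every trivial-syndrome restriction a stabilizer acting as a sign on the incoming sector, and linearity of the section factorises the recovery. The accepted fibre then has trivial ambiguity, and on the rejected fibres $\{I,\Zbar\}$ is absorbed by the correction and the native $\Zbar$ measurement.

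The only difference is the order of bookkeeping. You collapse each component's fibre directly, whereas the paper first fixes one surviving $T_0$ for the whole configuration and sets $L_j=R(s_j)^{\dagger}Z_{T_0\cap C_j}$; the two come to the same thing.
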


Three separate things are at work here, and keeping them apart is what makes the statement usable. On a
fibre with nonzero relative syndrome the ambiguity is $\{I,\Zbar\}$ for a reason that has nothing to do
with size, since two $Z$-type terms of equal syndrome differ by an element of the $Z$-type normaliser,
which on a code with one logical qubit is the stabilizer group extended by $\Zbar$, and the $Z$-type
choice of $R_{h,s}$ is what puts the individual $L_E$ in that group and not merely in a coset of it. What
the threshold buys is that a \emph{single} insertion configuration carries one ambiguity element and not
two, which is what \cref{eq:coset} asks for, and that is where the distance enters: two terms of one
configuration differ by a $Z$-type normaliser element supported inside its components, of weight below
$d_Z$ on each, hence a stabilizer. And on the accepted fibre the same weight argument applies to each
surviving $Z_T$ itself, which therefore is a stabilizer, so nothing is left for the $\Xbar$ measurement to
absorb.

The split of \cref{def:splitreadout} is what separates this family from one that measures $\Zbar$
throughout. A native $\Zbar$ measurement absorbs $aI+b\Zbar$ as readily as it absorbs each term, so it
satisfies \cref{eq:absorb} on the accepted branch too, but it also sends every accepted effect into
$\mathcal N$ and the family would carry no magic at all. Measuring $\Xbar$ on the accepted fibre is what
leaves the accepted effect off the $\Zbar$ axis, and the weight argument above is what says the threshold
survives the change.

\begin{proposition}[When the family carries magic]\label{prop:family-magic}
Let a member of the split-readout family have unitary marked cells $e^{i\theta_xZ_{q(x)}}$, and run it on
the first slab. Then the zero-syndrome branch of the full marked layer acts on the code space as
$\alpha I+\beta\Zbar$, where $\beta$ collects the coefficients of those $T\subseteq\mathcal R_d$ whose
$Z_T$ is a logical operator and $\alpha$ those whose $Z_T$ is a stabilizer, and the accepted effect is
\begin{equation}\label{eq:family-effect}
  F_A=(\alpha I+\beta\Zbar)^{\dagger}\,\tfrac12(I+\Xbar)\,(\alpha I+\beta\Zbar) .
\end{equation}
Its magic is positive whenever $\alpha\beta\neq0$ and $|\alpha|\neq|\beta|$, and $\beta\neq0$ requires the
marked set to contain the support of a $Z$-type logical operator, hence at least $d_Z$ cells. Conversely,
if the marked set does contain such a support and the angles range over $(-\pi/2,\pi/2)$, the three
equalities $\alpha=0$, $\beta=0$ and $|\alpha|=|\beta|$ hold only on a closed set of angle vectors of
measure zero, so the accepted effect carries magic at all but exceptional angles.
\end{proposition}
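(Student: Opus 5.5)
The plan is to expand the product of the unitary cells, read off the zero-syndrome part, and then do a two-by-two Bloch computation.

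\emph{Step 1: expansion.} Each cell is $e^{i\theta_xZ_{q(x)}}=\cos\theta_x\,I+i\sin\theta_x\,Z_{q(x)}$. Since the cells sit on distinct qubits, the layer expands as
\[
\sum_{T\subseteq\mathcal R_d}c_T Z_T,\qquad c_T=\prod_{x\notin T}\cos\theta_x\prod_{x\in T}i\sin\theta_x .
\]
On the first slab the incoming branch is $P$ with $s_{\mathrm{in}}=0$. The L\"uders projector $\Pi_0$ annihilates $Z_TP$ unless $Z_T$ commutes with every $X$-check, that is, unless $Z_T$ lies in the $Z$-type normaliser. On one logical qubit that normaliser is the $Z$-stabilizer group together with its $\Zbar$ coset. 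Stabilizers act as $I$ on $P$, since the signs are $+1$, and the coset acts as $\Zbar$. Hence $\Pi_0VP=(\alpha I+\beta\Zbar)P$, with $\alpha$ and $\beta$ the sums of $c_T$ over the two classes. The accepted branch then measures $\Xbar$ and keeps $+1$, which gives \cref{eq:family-effect}. A logical $Z_T$ has weight at least $d_Z$, so $\beta\neq0$ forces the marked set to contain such a support.

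\emph{Step 2: Bloch vector.} Use $\Zbar\Xbar\Zbar=-\Xbar$ and $\Zbar\Xbar=i\Ybar$. Expanding $(\bar\alpha+\bar\beta\Zbar)(I+\Xbar)(\alpha+\beta\Zbar)/2$ gives
\[
F_A=\tfrac12\Bigl[(|\alpha|^2+|\beta|^2)I+2\,\mathrm{Re}(\bar\alpha\beta)\,\Zbar+(|\alpha|^2-|\beta|^2)\,\Xbar+2\,\mathrm{Im}(\bar\alpha\beta)\,\Ybar\Bigr],
\]
up to the sign convention on the $\Ybar$ term. The normalised Bloch vector $\bm v$ has unit Euclidean norm, because $(|\alpha|^2-|\beta|^2)^2+4|\alpha\beta|^2=(|\alpha|^2+|\beta|^2)^2$, so $\Ehat$ is a rank-one projector. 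A unit vector has $\|\bm v\|_1>1$ unless it lies on a coordinate axis. Its $x$-component vanishes iff $|\alpha|=|\beta|$, and its $y$- and $z$-components both vanish iff $\alpha\beta=0$. So if $\alpha\beta\neq0$ and $|\alpha|\neq|\beta|$, at least two components are nonzero, hence $\Dstab>0$, and $\pacc>0$ because $\alpha\neq0$.

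\emph{Step 3: genericity.} This is the main obstacle. The maps $\theta\mapsto\alpha,\beta$ are real-analytic on the open box $(-\pi/2,\pi/2)^m$, so each of the three bad sets is the zero set of a real-analytic function, namely $\mathrm{Re}\,\alpha$ and $\mathrm{Im}\,\alpha$ jointly, the same for $\beta$, and $|\alpha|^2-|\beta|^2$. Each such set is closed. It is null once the function is shown not to vanish identically on the connected box.

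\begin{itemize}
\item For $\alpha$: at $\theta=0$ we have $\alpha=1$.
\item For $\beta$: fix a minimal logical support $T_0$ inside the marked set and set $\theta_x=0$ off $T_0$. Every other $T\subseteq T_0$ is then a proper subset, which cannot be a logical operator by minimality, and so is either a stabilizer or outside the normaliser. Hence $\beta=\prod_{x\in T_0}i\sin\theta_x\neq0$ for small nonzero angles.
\item For $|\alpha|^2-|\beta|^2$: its value at $\theta=0$ is $1$.
\end{itemize}

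The delicate point is the minimality argument for $\beta$, which needs that no proper subset of a minimal logical support is itself logical. That holds by definition of minimal. The real analyticity is immediate because everything is a trigonometric polynomial. The union of the three zero sets is then closed and null, which is the claim.
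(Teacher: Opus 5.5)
Your proof is correct. Step~1 is the paper's argument, and the differences from the paper are local, in Steps~2 and~3.

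\textbf{Positivity.} The paper writes $a=|\alpha|^2$, $b=|\beta|^2$ and uses $|\operatorname{Re}w|+|\operatorname{Im}w|\ge|w|$ to get $\|\bm v\|_1\ge(|a-b|+2\sqrt{ab})/(a+b)$. It then checks that the numerator exceeds the denominator. You observe instead that $\bm v$ is a unit vector, so $\Ehat$ is rank one. You then use that a unit vector has $\|\bm v\|_1>1$ exactly when two of its coordinates are nonzero. The two arguments rest on the same inequality, since the paper's lower bound is $|v_x|+\|(v_y,v_z)\|_2$. Your framing does give more: magic is positive if and only if at least two of $|\alpha|^2-|\beta|^2$, $\operatorname{Re}(\bar\alpha\beta)$ and $\operatorname{Im}(\bar\alpha\beta)$ are nonzero. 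The stated condition is the sufficient half of that criterion.

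\textbf{Genericity.} The paper substitutes $u_x=\tan\theta_x$. This turns $\alpha$ and $\beta$ into polynomials whose nonzero coefficients $i^{|T|}$ sit on distinct multilinear monomials. Then $\tilde\beta\not\equiv0$ follows from the mere existence of a logical $T$, and only the zero-set theorem for polynomials is needed, transported back by the diffeomorphism $\theta\mapsto u$. That substitution is why the paper restricts the angles to $(-\pi/2,\pi/2)$.

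You keep the trigonometric polynomials and certify $\beta\not\equiv0$ by evaluating at a point supported on an inclusion-minimal logical support $T_0$. There every $T\not\subseteq T_0$ carries a vanishing sine, which you use implicitly. You then invoke the zero-set theorem for real-analytic functions on a connected domain. That is a slightly heavier theorem, but it needs no change of variables. It also shows the exceptional set is closed and null on all of $\mathbb R^{|\mathcal R_d|}$, not only relative to the open cube.
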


So the family is nontrivial, and what makes it nontrivial can be read off the marked set. A layer whose
support contains no logical $Z$ string has $\beta=0$ and accepts a free effect, while a layer whose
support contains one has $\beta\neq0$ for generic angles and then accepts an effect with magic. The
exception is a matter of the angles and not of the support: on a bare logical string at
$\theta_x=\pi/4$ the two surviving terms have equal modulus, $|\alpha|=|\beta|$, and the accepted effect
returns to the stabilizer boundary. \Cref{cor:weak} bounds the magic of all of them at once.

\begin{remark}[What $d_Z$ is and is not]\label{rem:standard-tight}
\Cref{prop:standard} certifies the value $d_Z$ and therefore gives $\Dins\ge d_Z$. It does not compute
$\Dins$. Suppose the marked set contains the support of a minimum-weight logical $Z$ operator $\gamma$ and
the vectors of \cref{eq:zdiag} are such that both $\prod_{x\in\gamma}\ket{w_x}$ and
$\prod_{x\in\gamma}\ket{u_x}$ contribute. The configuration $S=\gamma$ is one component of $d_Z$ cells
carrying $T=\gamma$ and $T=\varnothing$ at the same syndrome, differing by $\Zbar$, so \cref{eq:coset} has
no single $L_E$ and the coset form of \cref{def:absorption} fails at that size. That is a failure of the
route through \cref{prop:absorption} and not of \cref{cond:h4}, which a $\Zbar$ measurement would still
satisfy by absorbing the sum $aI+b\Zbar$. For the member of \cref{sec:results-construction}, whose
accepted branch ends in an $\Xbar$ measurement, \cref{eq:full-gamma} shows that \cref{cond:h4} itself
fails at $d$, so that member has $\Dins=d$ exactly.
\end{remark}

What \cref{prop:standard,prop:family-magic} settle together is that the class of
\cref{sec:results-class} is not a class of designed protocols. Every single layer of weak $Z$-rotations,
one per qubit, on any set of data qubits and at any angles, is in the split-readout family and certifies
$D=d_Z$. Such a layer accepts an effect carrying magic whenever its support contains a logical $Z$
string and the resulting $\alpha,\beta$ of \cref{prop:family-magic} are nonzero and of different
modulus. \Cref{cor:weak} bounds that magic for all of them at once whenever the angles are $O(1/d)$, the
marked count is polynomial and the degree is bounded. The protocol of \cref{sec:results-construction} is
the member whose marked set is a minimum-weight logical string, and what that member adds is saturation
of the exponent.

\subsection{A protocol that saturates the bound}\label{sec:results-construction}

The class is not empty, and the suppression bound is attained at leading order.

Fix odd $d$ and let $\gamma=\{q_1,\dots,q_d\}$ be a bare non-self-intersecting minimum-weight logical
$\Zbar$ string. Fix a constant $0<\kappa<1$ and put $\theta_d=\kappa/d$.

\begin{definition}[Weak-string protocol]\label{def:weakstring}
Apply $U_j=e^{i\theta_dZ_j}$ at every $q_j\in\gamma$, in one layer. Measure the complete stabilizer
syndrome $s$ and store the raw record. If $s=0$, measure $\Xbar$ natively and accept the $+1$ outcome. If
$s\ne0$, apply the Pauli correction $R(s)$, measure $\Zbar$ natively, record the outcome, and reject.
This is the member of \cref{def:splitreadout} whose marked set is $\gamma$.
\end{definition}

The $\Zbar$ measurement on rejected branches changes no accepted statistic. It is there so that the two
representatives of a syndrome fibre, which differ by the logical $\Zbar$, land on one ray, which is what
\cref{eq:absorb} asks for.

\begin{theorem}[Saturation]\label{thm:construction}
\Cref{def:weakstring} satisfies \cref{cond:one-block,cond:skeleton,cond:degree,cond:h4,cond:dins} with
$m=d$, $\eta_j=2\sin(\theta_d/2)=\Theta(1/d)$ and $\Dins=d$. Its accepted effect is
\begin{equation}\label{eq:constr-effect}
  F_A=\tfrac12\bigl[(\alpha^2+\beta^2)I+(\alpha^2-\beta^2)\Xbar+2\alpha\beta\,\Ybar\bigr],
  \qquad \alpha=\cos^d\theta_d,\quad \beta=\pm\sin^d\theta_d ,
\end{equation}
so that with $t=|\beta|/\alpha=\tan^d\theta_d$,
\begin{equation}\label{eq:constr-magic}
  \pacc=\frac{\alpha^2+\beta^2}{2},\qquad
  \Dstab(\Ehat)=\frac{2t(1-t)}{1+t^2},\qquad
  \magicc(F_A)=\cos^d\theta_d\,\sin^d\theta_d-\sin^{2d}\theta_d .
\end{equation}
Asymptotically $\pacc\to\tfrac12$ and
\begin{equation}\label{eq:constr-asymp}
  -\log\magicc(F_A)\;=\;d\log d+O(d) ,
\end{equation}
which matches the exponent of \cref{eq:lambda-count} at leading order.
\end{theorem}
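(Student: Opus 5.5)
The plan is to obtain the five conditions almost entirely from \cref{prop:standard}, to compute the accepted effect by expanding the product of rotations as in \cref{prop:family-magic}, and then to get $\Dins=d$ exactly by a short contradiction argument using \cref{lem:free-native}.

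\textbf{Conditions.} \Cref{def:weakstring} is the member of \cref{def:splitreadout} whose marked set is $\gamma$. Each cell is $e^{i\theta_dZ}$ on a distinct qubit in one layer, with unitary cells, one-dimensional environments and the identity skeleton. The deviation is
\[
\Delta_j=U_j-I=(\cos\theta_d-1)I+i\sin\theta_d\,Z_j ,
\]
which is of the form \cref{eq:zdiag}. Its operator norm is $|e^{i\theta_d}-1|=2\sin(\theta_d/2)$, since $Z_j$ has eigenvalues $\pm1$, and this gives $\eta_j$. The remaining conditions follow as below.
\begin{itemize}
\item \Cref{cond:one-block} holds by construction.
\item \Cref{cond:skeleton} holds because the skeleton is an exact syndrome round ending at the raw-record cut, and the corrections and native Pauli measurements preserve syndrome sectors.
\item \Cref{cond:degree} holds because the graph joins qubits of $\gamma$ within range $2r_0$. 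On a non-self-intersecting string in a lattice of bounded density this gives bounded degree.
\item \Cref{cond:h4} holds at $D=d_Z=d$ by \cref{prop:standard}, using that the rotated code has $d_Z=d$. Hence \cref{cond:dins} holds as soon as $\Dins\le d$ is also shown.
\end{itemize}

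\textbf{Accepted effect.} Expand
\[
\prod_jU_j=\sum_{T\subseteq\gamma}\cos^{d-|T|}\theta_d\,(i\sin\theta_d)^{|T|}Z_T .
\]
Projecting onto $s=0$ keeps only those $T$ for which $Z_T$ commutes with every $X$-check. The step I expect to need the most care is showing that only $T=\varnothing$ and $T=\gamma$ survive. If $\varnothing\ne T\subsetneq\gamma$ had trivial syndrome, then $Z_T$ would be in the $Z$-normaliser with weight below $d$, hence a stabilizer supported inside $\gamma$. That contradicts bareness, which I read as meaning that no nontrivial stabilizer is supported on $\gamma$; failing that definition, one checks directly that the endpoints of a proper sub-path leave a defect. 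The zero-syndrome branch is therefore $\alpha I+i^d\sin^d\theta_d\,\Zbar$ on the code space. Since $d$ is odd, $i^d=\pm i$.

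Insert this branch into \cref{eq:family-effect}. The anticommutation gives $\Zbar\Xbar\Zbar=-\Xbar$, and with $\Ybar=i\Xbar\Zbar$ the cross terms combine to $2\alpha\beta\,\Ybar$. This yields \cref{eq:constr-effect}.

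\textbf{Magic and asymptotics.} The normalised Bloch vector is $\bigl(\alpha^2-\beta^2,\,2\alpha\beta,\,0\bigr)/(\alpha^2+\beta^2)$. For $t=\tan^d\theta_d<1$ the formula $\Dstab=\max(0,\|\bm v\|_1-1)$ gives $2t(1-t)/(1+t^2)$. Multiplying by $\pacc=(\alpha^2+\beta^2)/2$ gives $\magicc=\alpha|\beta|-\beta^2$.

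For the asymptotics, take $\theta_d=\kappa/d$. Then $\cos^d\theta_d\to1$, so $\pacc\to\tfrac12$. Also $\sin^d\theta_d=(\kappa/d)^de^{O(1/d)}$, and $\beta^2/(\alpha|\beta|)=t\to0$. Hence $-\log\magicc=d\log d-d\log\kappa+O(1)$, which is the claimed $d\log d+O(d)$. It matches $D\log(1/\eta_{\max})$ with $D=d$ and $\eta_{\max}=\Theta(1/d)$.

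\textbf{Exactness of $\Dins$.} It remains to show $\Dins\le d$. Suppose \cref{cond:h4} held at some $D>d$. Then every insertion set, including $\gamma$ itself, would have all components below threshold. Each accepted transcript would then lie on one free ray $J_y$, so $F_A$ would be a nonnegative combination of free effects. By \cref{lem:free-native} and convexity of $\mathcal N$ this gives $\magicc=0$, contradicting $\magicc>0$ (indeed $t<1$ for $d\ge1$ and $\kappa<1$). Together with \cref{prop:standard}, this yields $\Dins=d$.
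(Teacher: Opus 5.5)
Your proposal is correct, and it departs from the paper's proof in two places. First, the paper does not invoke \cref{prop:standard}. It re-verifies \cref{def:absorption} and \cref{cond:components} for this member directly from the kernel property \cref{eq:kernel}, that syndrome fibres inside $\gamma$ are single supports or complementary pairs, and then applies \cref{prop:absorption}. Your appeal to \cref{prop:standard} is legitimate: the weak-string protocol is the split-readout member with marked set $\gamma$, and its unitary cells have one-dimensional, hence trivially terminal, environments. It saves repeating the absorption check, whereas the paper's route uses the sharper fibre structure specific to a bare minimum-weight string. Second, for $\Dins\le d$ the paper exhibits the failure at $S=\gamma$. There the insertion monomial gives $M_+^X[(c-1)^dI+(is)^d\Zbar]P$, whose summands are not proportional (\cref{eq:full-gamma}). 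You argue instead that a certified $D>d=m$ would make every insertion set good. Then $\Lambda$ is an empty sum and $F_A\in\mathcal N$ by the argument of \cref{eq:fgood}, contradicting $\magicc(F_A)>0$. Your version is shorter and more general: for any certificate, positive accepted magic bounds $\Dins$ by the size of the largest connected cluster of marked cells. It does, however, depend on the magic computation and does not identify which insertion set leaves the ray. The paper's argument is independent of that computation, and it is the one \cref{rem:standard-tight} relies on.

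Two small repairs. You need that no nonempty $T\subsetneq\gamma$ has trivial syndrome, and this requires neither a reading of ``bare'' nor a geometric check; your sub-path fallback would in any case cover only contiguous $T$. If $Z_T$ were a stabilizer, then $Z_\gamma Z_T=Z_{\gamma\setminus T}$ would represent $\Zbar$ with weight $d-|T|<d$, contradicting minimum weight. Also, \cref{cond:dins} already follows from $\Dins\ge d$; the upper bound is needed only for the equality $\Dins=d$. Finally, $t<1$ requires $\theta_d<\pi/4$, which holds for $d\ge3$ but not for every $d\ge1$.
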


\Cref{app:localclass} proves it. The step that makes the class hypotheses hold is a property of the bare
string. If a nonempty proper $R\subsetneq\gamma$ had trivial syndrome then $Z_R$ would lie in the
normaliser, and it would be either a logical operator of weight below $d$ or a stabilizer, in which case
$Z_\gamma Z_R$ would represent the same logical operator on $\gamma\setminus R$, again below weight $d$.
Both contradict the code distance, so a syndrome fibre inside $\gamma$ is either a single support or a
complementary pair, and the ambiguity group of \cref{def:absorption} is contained in $\{I,\Zbar\}$.

\subsection{Two protocols that mark the boundary}\label{sec:results-boundary}

The hypotheses are load-bearing, and two constructions outside them show which work is being done by
which.

The first shows that a small resource footprint is not by itself a small resource. Decode the logical
qubit onto a single physical wire by a free Clifford circuit, apply one cell measuring $\Hxy$ on that
wire, and post-select. The accepted effect is the sharp magic projector at $\pacc=1/2$, while the marked
cell occupies one location and its forward causal cone is of constant size. A single insertion already
moves the branch off its free ray, so the largest threshold for which \cref{cond:h4} holds is $\Dins=1$,
and what fails is \cref{cond:dins}. The lesson is that the threshold has to be measured against the
recovery skeleton and not against the geometric size of the resource, because a free circuit can
concentrate the entire logical algebra onto one wire.

The second shows that many weak cells can act coherently. Take $r$ commuting rotations $e^{i\delta Z_j}$
at each site of $\gamma$. Their product is $e^{ir\delta Z_j}$, and in the insertion expansion the $r^d$
temporal choices that select one sine at each site all evaluate to the same logical $\Zbar$, so their
amplitudes add in phase and carry the coefficient $(ir\sin\delta)^d$ in place of $(i\sin\delta)^d$. At
$r=d$ and $\delta=\Theta(1/d)$ this is order one with $m=\Theta(d^2)$ cells of strength $\Theta(1/d)$.
The syndrome-resolved analysis of~\cite{Yoshioka2025} turns the same mechanism into a constant logical
angle at acceptance $\Theta(d^{-1/2})$. What fails here is \cref{cond:degree}. The $r$ co-located
rotations at one site lie within any fixed range of one another, so they form a clique in $\Gfrak_d$ and
the degree is at least $r-1$, which grows with $d$ at $r=d$. The accepted effect of that protocol also
moves by a constant under a single faulted rotation, so $L_F(\delta)=1$ at every scale $\delta$ below
that constant.

\subsection{Faults and protection}\label{sec:results-fault}

\Cref{thm:suppression} is a statement about the accepted effect of a specified protocol, and the
adversary of \cref{sec:prelim} plays no part in it. The fault model enters through a separate bound,
which relates protection to how cheaply the resource can be cut away.

\begin{definition}[Free-replacement distance]\label{def:dfr}
$\dFR$ is the least number of cells whose replacement by resource-erasing free instruments, severing
every quantum and classical path from the remaining marked cells to the accepted-effect functional,
leaves an accepted effect in $\mathcal N$.
\end{definition}

In words, the question is how few cells have to be swapped for free ones before nothing carrying resource
reaches the accepted effect, and the swapped cells are a fault set in the sense of \cref{sec:prelim-fault}.

\begin{proposition}[Replacement bounds protection]\label{prop:replacement}
With $L_F(\delta)$ as in \cref{def:LF}, $L_F(\delta)\le\dFR$ for every $0<\delta\le\magicc(F_A)$.
\end{proposition}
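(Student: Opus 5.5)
The plan is to exhibit, for each scale $\delta\in(0,\magicc(F_A)]$, one explicit fault set of size $\dFR$ whose faulted accepted effect is at coefficient-norm distance at least $\delta$ from $F_A$. By \cref{def:LF}, that witnesses $L_F(\delta)\le\dFR$. The natural candidate is the minimising replacement set of \cref{def:dfr}. Let $\mathcal F$ be a set of $\dFR$ cells whose replacement by resource-erasing free instruments leaves an accepted effect in $\mathcal N$. Replacing the Kraus operators of a cell is a permitted single fault in the model of \cref{sec:prelim-fault}, one cell per fault. So $\mathcal F$ is a fault set with $|\mathcal F|=\dFR$, and its faulted effect satisfies $F_A^{\mathrm f}\in\mathcal N$. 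If no such set exists, $\dFR=\infty$ and there is nothing to prove.

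The second step converts membership in the cone into a lower bound on the change. \Cref{lem:norm} identifies $\magicc(F_A)$ with $\dist_c(F_A,\mathcal N)$. Since $F_A^{\mathrm f}$ is one element of $\mathcal N$, we get
\begin{equation*}
  \|F_A^{\mathrm f}-F_A\|_c\;\ge\;\dist_c(F_A,\mathcal N)\;=\;\magicc(F_A)\;\ge\;\delta .
\end{equation*}
So $\mathcal F$ is admissible in the minimum \cref{eq:LF}, and $L_F(\delta)\le|\mathcal F|=\dFR$. The restriction $\delta\le\magicc(F_A)$ is exactly what this inequality uses. Above that scale the replacement set need not move the effect far enough, which matches the remark after \cref{def:LF}.

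The main obstacle is bookkeeping rather than analysis. I need to check that the objects of \cref{def:dfr} and of the fault model are the same. First, the replaced cells must be counted as faults on the same footing as code cells, even if some of them are classical processing locations or selectors. The fault model explicitly admits replacing any spacetime cell and flipping physically represented bits, and \cref{def:dfr} counts cells, so each replaced location costs at most one fault. Second, the faulted effect must be read on the same code projector and accepted set. This holds because the acceptance function is evaluated at the boundary of the fault model, so $F_A^{\mathrm f}=P\,\Eacc^{\mathrm f}P$ with the same $A$. Third, the case $\magicc(F_A)=0$ is vacuous, since the range of $\delta$ is then empty.

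I would also confirm that \cref{lem:norm} gives the equality with the coefficient-norm distance on the unnormalised cone, including the $\pacc=0$ edge case. There $F_A=0\in\mathcal N$ and the statement is again vacuous.
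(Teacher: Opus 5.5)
Your proof is correct and is the paper's own argument. The minimising replacement of \cref{def:dfr} is a fault set of size $\dFR$ whose faulted effect lies in $\mathcal N$, so its distance from $F_A$ is at least $\dist_c(F_A,\mathcal N)=\magicc(F_A)\ge\delta$, which makes it admissible in \cref{eq:LF}; your extra bookkeeping (the $\dFR=\infty$ and vacuous $\magicc(F_A)=0$ cases, and reading the accepted set at the boundary of the fault model) only spells out what the paper's one-paragraph proof leaves implicit.
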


\begin{proof}
A minimising replacement is a fault set of size $\dFR$ in the sense of \cref{sec:prelim-fault}, and it
produces an accepted effect $F^{\text{free}}\in\mathcal N$, so
$\|F_A-F^{\text{free}}\|_c\ge\dist_c(F_A,\mathcal N)=\magicc(F_A)\ge\delta$. That fault set therefore
meets the condition in \cref{eq:LF}, and $L_F(\delta)$ is the minimum over all such sets.
\end{proof}

Read the other way, a protocol protected to $\ell$ at some scale $\delta\le\magicc(F_A)$ has no
free-erasing cut smaller than $\ell$. \Cref{eq:LF} compares unnormalised effects, which is what lets the
proof go through: a replacement that destroys acceptance outright is an allowed competitor, and it is not
excluded by any floor on the faulted acceptance rate, since $0\in\mathcal N$.

%% file: sections/classification.tex
% Section 4 Classification
\section{Where the known constructions sit}\label{sec:classification}

A boundary is most useful as a map. \Cref{prop:exactnative} draws the stabilizer boundary exactly, so a
construction that reaches the magic axis is one that crosses it, and naming the crossing places the
construction. \Cref{thm:suppression} then says what a protocol pays if it crosses by weak distributed
resource on a fixed block, with no added structure. \Cref{cond:one-block} fixes the setting and
\cref{cond:skeleton,cond:degree,cond:h4,cond:dins} are what the bound and its corollary ask for, so
between them they are the five doors through which a construction can leave, and this section names the
first door each known construction uses.

\begin{table}[t]
\centering
\caption{Where surface-code constructions sit. The magic column records whether the accepted effect
carries magic. The next column is the setting of \cref{cond:one-block} and the four after it are the
hypotheses of \cref{cor:weak}, in the order in which they are checked, with a check meaning the condition
holds and a cross meaning it is given up. A dash means
the entry is not evaluated because an earlier one already fails. The last row is outside the fixed-block
setting of \cref{sec:prelim} altogether, since the block it finishes on is not the block the resource
acted on, so no hypothesis is evaluated for it. The table records which condition each construction
leaves and does not rank the constructions by resource. A dagger marks magic at all but exceptional
angles, which \cref{prop:family-magic} gives for unitary cells whose support contains a logical $Z$
string. A double dagger marks the member for which the exponent of \cref{cor:weak} is attained at leading
order.}
\label{tab:classification}
\resizebox{\textwidth}{!}{%
\begin{tabular}{lcccccc}
\toprule
Construction & magic & \cref{cond:one-block} & \cref{cond:skeleton} & \cref{cond:degree} & \cref{cond:h4} & \cref{cond:dins} \\
\midrule
Classical $\Xbar/\Ybar$ mixture & \ding{55} & \ding{51} & \ding{51} & \ding{51} & \ding{51} & \ding{51} \\
Split-readout member, unitary cells on a logical support (\cref{def:splitreadout}) & \ding{51}$^\dagger$ & \ding{51} & \ding{51} & \ding{51} & \ding{51} & \ding{51} \\
Weak-string protocol (\cref{def:weakstring}) & \ding{51}$^\ddagger$ & \ding{51} & \ding{51} & \ding{51} & \ding{51} & \ding{51} \\
Decode-to-one-wire then measure (\cref{sec:results-boundary}) & \ding{51} & \ding{51} & \ding{51} & \ding{51} & \ding{51} & \ding{55} \\
Accumulated weak rotations~\cite{Yoshioka2025} & \ding{51} & \ding{51} & \ding{51} & \ding{55} & --- & --- \\
Transversal continuous rotations~\cite{Huang2025} & \ding{51} & \ding{51} & \ding{51} & \ding{55} & --- & --- \\
Fold-transversal cultivation~\cite{GidneyShuttyJones2024,Claes2025,Sahay2025} & \ding{51} & \ding{51} & \ding{55} & --- & --- & --- \\
Multi-block cultivation~\cite{Vaknin2025} & \ding{51} & \ding{55} & --- & --- & --- & --- \\
Injection and growth~\cite{GidneyShuttyJones2024} & \ding{51} & --- & --- & --- & --- & --- \\
\bottomrule
\end{tabular}}
\end{table}

The reading of \cref{tab:classification} is not that magic and the conditions are incompatible. Every
member of the split-readout family keeps all five, and those members whose support contains a logical $Z$
string carry magic at all but exceptional angles, while those whose support contains none accept a free
effect. \Cref{cor:weak} bounds the magic of all of them by $e^{-\Omega(d\log d)}$ under its
weak-resource hypotheses, and the weak-string protocol is the member for which that exponent is known to
be attained. The reading is that no construction reaching a \emph{constant} amount of magic keeps all
five, and the rest of this section says which one each of those leaves.

\paragraph{Leaving the setting.} Cultivation as originally formulated injects a noisy $\ket{T}$ state on
a small patch and then grows the code distance~\cite{GidneyShuttyJones2024}. The resource acts at one
distance and the state is used at a larger one, so there is no fixed block against which a threshold
could be measured, and the setting of \cref{sec:prelim} does not describe the protocol at all. The
theorem has nothing to say about it and was not built to.

\paragraph{Giving up \cref{cond:one-block}.} A protocol may carry a second protected algebra, and the
multi-block variant of~\cite{Vaknin2025} does, since it projects a pair of low-distance blocks onto a magic eigenstate
before expanding, so the accepted effect is a joint effect on two logical qubits whose distances both
grow, and the one-block hypothesis is the first to go.

\paragraph{Giving up \cref{cond:skeleton}.} A fold-transversal or self-dual
construction~\cite{Claes2025,Sahay2025} changes the block itself. A fold is a weight-$O(d)$ operation, so
it is not a bounded-range cell and the marked cells of such a protocol are not inserted into a skeleton
of bounded-range operations. Placing a fold on the resource scale of \cref{sec:results-free}, in place of
the anyon-permutation scale on which the literature describes it, is an open problem, and
\cref{sec:discussion} returns to it.

\paragraph{Giving up \cref{cond:degree}.} A protocol may place many marked cells at one site, and two
published architectures do. The syndrome-resolved architecture of~\cite{Yoshioka2025} applies many small rotations along a logical
support and resolves the resulting syndrome, and the transversal continuous-rotation protocol
of~\cite{Huang2025} accumulates a target angle over adaptive rounds. In both, the rotations at a single
site lie within any fixed range of one another, so they form a clique in $\Gfrak_d$ whose size grows with
the number of rounds, and the degree bound fails. The mechanism is the phase coherence described in
\cref{sec:results-boundary}, where $r$ commuting rotations at one site contribute $r^d$ insertion
histories that all evaluate to the same logical operator.

\paragraph{Giving up \cref{cond:dins}.} A protocol may satisfy every structural hypothesis and still have
a threshold that does not grow. The decode-to-one-wire construction of \cref{sec:results-boundary} is the
extreme case. A free Clifford circuit concentrates the logical algebra onto one physical wire, one cell
then suffices, and \cref{cond:h4} holds with $\Dins=1$. What fails is the demand that
the threshold grow with the code distance. This is why $\Dins$ must be computed against the recovery
skeleton and not from the geometric size of the resource.

\subsection{Relation to prior no-go theorems}\label{sec:classification-prior}

The results sit in a line of restrictions on logical operations in topological codes, and the boundary
between them is the class of operations each one covers.

Bravyi and K\"onig classify the logical gates implementable by constant-depth geometrically local
\emph{unitary} circuits and show that in two dimensions they are Clifford~\cite{BravyiKoenig2013}, and
Beverland \emph{et al.} extend the classification to topological field
theories~\cite{Beverland2016}. Both concern unitaries. An adaptive, post-selected instrument is not a
unitary, the conjugation and commutator identities their arguments run on are unavailable for it, and the
accepted branch after normalisation is not even a linear map of the input. Measurement is genuinely more
powerful in this setting, and adaptive measurement is known to implement operations that shallow
unitaries cannot~\cite{Aasen2023,WebsterBartlett2018,BravyiKimKlieschKoenig2022}. The results here
therefore do not follow from the unitary classification and do not contradict it.

What replaces the commutator argument is the insertion expansion. It asks how many cells must depart from the free set
before the accepted branch can leave it, and it charges each departure. That is why \cref{thm:suppression} is quantitative where the
unitary results are absolute, and why it survives adaptivity and post-selection.

The relation to the resource theory is the other boundary. Stabilizer operations with post-selection are
free and their closure is known~\cite{Veitch2014,HowardCampbell2017,BravyiGosset2016}, which is
\cref{prop:exactnative}. The operations this paper constrains are not free, so no closure argument
applies to them, and the content is the rate at which their resource is converted into accepted magic on
a block of distance $d$.

\subsection{Relation to the resource theory of dynamical magic}\label{sec:classification-dynamical}

The free set of \cref{def:free} is borrowed from the resource theory of magic for channels, and the
borrowing needs one modification that is worth stating against the literature.

The state-level theory takes the stabilizer polytope as free and measures a state by its distance from
it~\cite{Veitch2014,HowardCampbell2017,HeinrichGross2019}. Lifting this to operations, the free channels
are the completely stabilizer-preserving ones, which map the convex hull of stabilizer states into itself
under every extension, and the resulting theory of dynamical magic supplies channel monotones together
with bounds on conversion, distillation and classical simulation
cost~\cite{SeddonCampbell2019,SaxenaGour2022}. Whether the axiomatically defined completely
stabilizer-preserving channels coincide with the operationally defined stabilizer operations is itself a
settled question, and they do not~\cite{HeimendahlHeinrichGross2022}, which is a reason to be explicit
about which free set a bound is stated against. \Cref{def:free} states ours: an instrument is free when \emph{every} outcome subchannel
is completely stabilizer-preserving.

The modification is the outcome-wise reading, and \cref{sec:results-free} shows why it is forced. A
flagged instrument whose averaged channel is free can have an outcome that performs a magic-axis
measurement, the example being $\mathcal M_\pm(\rho)=\Tr(E_\pm\rho)\ket{0}\bra{0}$ with
$E_\pm=(I\pm\Hxy)/2$, whose sum over outcomes is a constant preparation. A channel-level free set
therefore cannot carry a bound on a post-selected accepted effect, and the free set has to be imposed on
each outcome separately. \Cref{def:free} is the strengthening that does this, and $\mathcal M_\pm$
witnesses that $\FInst_x$ is properly contained in the outcome-forgetting free set: its outcome
subchannel applied to half of an encoded Bell stabilizer state leaves the magic projector $E_\pm^T$ on
the reference, so that outcome is not completely stabilizer-preserving while the averaged channel is.
We emphasise and operationalise this separation for post-selected magic effects, and it is a different
statement from the one~\cite{HeimendahlHeinrichGross2022} establishes.

The quantity being bounded also differs. The channel monotones measure an operation and obey
monotonicity under composition with free operations, which is what makes them useful for conversion
rates. The witness $\Dstab$ of \cref{def:dstab} measures an accepted effect on the encoded qubit, and the
proofs use only that it vanishes on the free set and satisfies the triangle inequality in $\|\cdot\|_c$.
What the results add to the resource theory is the code, since neither of the two bounds follows from a monotonicity argument: \cref{thm:resource} is a first-order
accounting against a fixed skeleton, and \cref{thm:suppression} converts a threshold on how many marked cells
have to cooperate, which \cref{prop:standard} reads off a standard recovery round, into a bound on how
much magic the accepted branch can carry.

Two recent architectures work at the same level and are placed by \cref{tab:classification}. The
syndrome-resolved logical gates of~\cite{Yoshioka2025} and the continuous transversal rotations
of~\cite{Huang2025} both accumulate weak non-Clifford resource on a code block and resolve the resulting
syndrome, and both report a reachable logical angle controlled by the code size. Neither meets
\cref{cond:degree}, for the reason given above, so \cref{cor:weak} does not apply to them and predicts
nothing about them. The resemblance is at the level of the mechanism, where weak cells buy an angle that
shrinks with the block, and turning it into a comparison would need the bound extended to resource that
accumulates across rounds.

%% file: sections/discussion.tex
% Section 5 Discussion
\section{Discussion}\label{sec:discussion}

This section sorts the results by what carries them.

\textbf{The stabilizer boundary.} One statement holds with no further hypothesis. Every accepted effect
of a stabilizer protocol lies in the stabilizer-effect octahedron, for every decoder, every
post-selection rule and every rate of local stochastic Pauli noise, and every point of the octahedron is
reached by a noiseless member of the class, so the reachable set of the class is the octahedron exactly
(\cref{prop:exactnative}). The magic of an
accepted effect equals its acceptance probability times the distance of its normalised effect from that
octahedron (\cref{lem:norm}), and that identity is what lets the later bounds work with an unnormalised
quantity and stay indifferent to a vanishing acceptance rate. The exact reach is a specialisation of the
closure of the free set under stabilizer operations with post-selection, and the surface code contributes
the reading, not the content.

\textbf{First-order accounting.} \Cref{thm:resource} bounds the accepted magic by the summed distance of
the cells from a chosen free skeleton, and \cref{cor:resource-free} by their distance to the free set
itself, with no structural hypothesis at all. Its content is the converse, that constant magic at constant
acceptance costs a constant amount of resource. The quantity it charges is a property of the instrument at each cell, taken outcome
by outcome, which is the only way to charge it. A channel whose outcomes are forgotten can be free while
its outcome statistics measure the magic axis.

\textbf{Suppression.} \Cref{thm:suppression} is the main result, and \cref{cor:weak} turns it into
$e^{-\Omega(d\log d)}$ for polynomially many cells of bounded insertion degree and amplitude $O(1/d)$ at a
threshold linear in the distance. \Cref{prop:absorption} reduces the branch
hypothesis, given the skeleton and the componentwise slabs, to a per-slab syndrome-fibre criterion, \cref{prop:standard} runs that criterion on one exact
round of stabilizer measurement and certifies the threshold $D=d_Z$ for a single layer of weak
$Z$-rotations followed by the split readout of \cref{def:splitreadout}, and \cref{thm:construction}
exhibits a member of that family attaining the exponent at leading order. The hypotheses are therefore met
by a family and not by a single design, \cref{prop:family-magic} gives a necessary condition for a member
to carry magic and generic sufficiency for unitary cells on a logical support, and on the weak-string member
the bound is not loose.

\textbf{What sets the scale.} The exponent is $\Dins\log(1/\eta_{\max})$, and the two factors come from
different places. The threshold fixes how many cells must act together before a branch can leave the free
set, and the per-cell amplitude fixes what each of them costs. The logarithm is a consequence of letting
the amplitude shrink with $d$, and constant-amplitude cells at the same threshold give the weaker upper
bound $e^{-\Omega(d)}$, provided the per-cell amplitude stays below the reciprocal of the degree
constant.
Nothing in the argument uses $\Braid(\mm,\eps)=-1$ or any other feature of the toric-code phase, so the
statement transfers to any stabilizer code whose recovery satisfies the same hypotheses. This is worth
stating plainly, because the anyon picture is how the difficulty of a magic check on a surface-code patch
is usually explained, and it plays no part in what is proved here. What replaces it is the free set of
the resource theory of magic together with the code distance.

\textbf{Where the hypotheses bind.} \Cref{sec:results-boundary} gives two protocols that leave them. In
the first, a free Clifford circuit concentrates the logical algebra onto one physical wire and a single
non-free cell then produces the sharp magic projector at acceptance one half. Its resource has constant
geometric size, which is why the threshold has to be measured against the recovery skeleton and not
against the footprint of the resource. In the second, $r$ commuting rotations at each site of a string
contribute $r^d$ insertion histories evaluating to the same logical operator, and their amplitudes add
in phase. The syndrome-resolved analysis of~\cite{Yoshioka2025} turns that mechanism into a constant
logical angle at acceptance $\Theta(d^{-1/2})$, and the transversal-rotation protocol
of~\cite{Huang2025} reports a reachable logical angle shrinking with code size, which is the behaviour
the exponent predicts for weak cells. Neither is in the class, the first because a single insertion
already moves its branch off the free ray, the second because the co-located rotations at one site form a
clique in $\Gfrak_d$ whose size grows with the number of rounds, so the degree is unbounded. Forming one
connected cluster is not itself a failure, and the weak-string protocol has one.

\textbf{The known cultivation protocols.} Each leaves the hypotheses by a different door, and
\cref{sec:classification} tabulates them. Injection places a resource that is already outside the free
set, folds and code switching change the block, and growth means the distance at which the resource acts
is not the distance at which the state is finally used. The theorem therefore does not forbid these
constructions and was not built to, and what it says is what a protocol pays if it declines all of those
mechanisms and works with weak distributed resource on a fixed block.

\textbf{Approximate factorisation.} \Cref{cor:approx} replaces $t$ by
$t+\delta_{\mathrm{tot}}$, so the bound is stable in the aggregate $\delta_{\mathrm{tot}}$ of
\cref{eq:approx-h4}. That aggregate runs over the good insertion sets of every accepted transcript, and both
of those families are large, so it is the aggregate and not the per-history error that the corollary asks to
be small. An error model carries \cref{thm:suppression} over to noisy syndrome extraction when it holds
each $\delta_{y,S}$ below the target for $\delta_{\mathrm{tot}}$, divided by the number of good sets and
by the square root of the number of accepted transcripts.

\textbf{Protection.} \Cref{thm:suppression} is a statement about the accepted effect of a specified
protocol and the adversary of \cref{sec:prelim} plays no part in its proof. The fault model enters
through \cref{prop:replacement}, which says that a protocol protected to $\ell$ at some scale
$\delta\le\magicc(F_A)$ has no resource-erasing free replacement smaller than $\ell$. The two are
separate results with separate hypotheses, and reading either as evidence for the other would be a
mistake.

\textbf{The two scales.} The framework measures a logical measurement against the free set of the resource
theory and charges it for the distance. Nothing in it is specific to $\Hxy$, which enters only as the
point of the Bloch sphere the witness is evaluated at. It is formulated for one logical qubit, with the
one-qubit octahedron as its free set and three-dimensional norm conversions. Three questions about the
relation between that resource scale and the geometry of the block follow from it. The first is what a
general geometric lower bound on $\Dins$ would look like. \Cref{prop:standard} reads the threshold
$D=d_Z$ off the code distance for a single layer of $Z$-diagonal single-qubit deviations inserted into one
exact syndrome-extraction slab followed by the split readout of \cref{def:splitreadout}, and a bound that
read $\Dins$ off the geometry in general would turn the resource count into a statement about where on the
block the resource has to sit. The second is where a fold sits on the resource scale. A fold is described
in the literature by the anyon automorphism it implements, and its diamond distance from the free set is
a different quantity that nobody has computed. The third is whether the two scales agree at all, that is,
whether an operation's distance from the free set of the resource theory can be read off its action on
the anyon labels. The results here say only that the second does not determine the first, since the two
filters of \cref{eq:twofilter} are syndrome-preserving and their composition has an accepted effect
outside $\mathcal N$.

%% file: appendix/B-thm2.tex
% Appendix B
\section{Proofs of Theorem~\ref{thm:resource} and Theorem~\ref{thm:suppression}}\label{app:thm2}

Both proofs run on the insertion expansion of \cref{eq:insertion}. \Cref{sec:appB-norm} fixes the one-qubit
norms and proves the normalisation identity that turns the magic witness into a distance.
\Cref{sec:appB-resource} proves \cref{thm:resource} by replacing marked cells one at a time.
\Cref{sec:appB-cylinder} sets up the cylinder estimate and the exact-component expansion, and
\cref{sec:appB-assembly} assembles \cref{thm:suppression}.

Write the code-space part of the accepted effect on the logical qubit as
\begin{equation}
  F=P_{\text{code}}\Eacc P_{\text{code}}=p\,(I+\bm v\cdot\bm\sigma),\qquad p=\pacc=\tfrac12\Tr F,
\end{equation}
and let the native stabilizer-effect cone be
\begin{equation}
  \mathcal N=\{\,q(I+\bm w\cdot\bm\sigma):q\ge0,\ \|\bm w\|_1\le1\,\},
\end{equation}
so that $F\in\mathcal N$ exactly when the normalised effect lies in the stabilizer-effect octahedron. Recall
$\Dstab(\Ehat)=\max(0,\|\bm v\|_1-1)$.

\subsection{Normalisation}\label{sec:appB-norm}

We measure distances with the coefficient $\ell_1$ norm $\|aI+\bm x\cdot\bm\sigma\|_c=|a|+\|\bm x\|_1$ (written with
subscript $c$ to distinguish it from the quantum diamond norm) and set $\mathrm{magic}_c(F)=\dist_c(F,\mathcal N)$.

\begin{lemma}[Normalisation]\label{lem:norm}
$\mathrm{magic}_c(F)=\pacc\,\Dstab(\Ehat)$. Consequently, for any decomposition $F=F_{\text{native}}+F_{\text{bad}}$
with $F_{\text{native}}\in\mathcal N$,
\begin{equation}\label{eq:norm-bound}
  \pacc\,\Dstab(\Ehat)\le\|F_{\text{bad}}\|_c .
\end{equation}
\end{lemma}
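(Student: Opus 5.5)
The plan is to compute $\dist_c(F,\mathcal N)$ directly. Then the inequality \cref{eq:norm-bound} follows from the triangle inequality.

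\textbf{Upper bound on $\dist_c(F,\mathcal N)$.} Write $F=p(I+\bm v\cdot\bm\sigma)$ with $p=\pacc\ge0$ and $\bm v$ real, since $F$ is Hermitian and positive. If $\|\bm v\|_1\le1$ then $F\in\mathcal N$, so both sides vanish. Otherwise take the competitor $N_0=p(I+\bm v/\|\bm v\|_1\cdot\bm\sigma)\in\mathcal N$. Then
\[
\|F-N_0\|_c=p\,\|\bm v\|_1\bigl(1-1/\|\bm v\|_1\bigr)=p(\|\bm v\|_1-1)=\pacc\,\Dstab(\Ehat).
\]
If $p=0$, positivity of $F$ forces $F=0\in\mathcal N$, and the identity is trivial.

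\textbf{Lower bound.} Let $N=q(I+\bm w\cdot\bm\sigma)$ be an arbitrary element of $\mathcal N$, with $\|\bm w\|_1\le1$. Then
\[
\|F-N\|_c=|p-q|+\|p\bm v-q\bm w\|_1\ge |p-q|+p\|\bm v\|_1-q\|\bm w\|_1\ge |p-q|+p\|\bm v\|_1-q .
\]
Using $|p-q|\ge p-q$, the right-hand side is at least
\[
p-q+p\|\bm v\|_1-q=p(\|\bm v\|_1+1)-2q .
\]
This bound is not yet what we want. So split into two cases.

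If $q\le p$, use $|p-q|=p-q$ together with $\|p\bm v-q\bm w\|_1\ge p\|\bm v\|_1-q$. This gives $\|F-N\|_c\ge 2(p-q)+p\|\bm v\|_1-p$, which is at least $p(\|\bm v\|_1-1)$ because $p-q\ge0$.

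If $q>p$, use $\|p\bm v-q\bm w\|_1\ge p\|\bm v\|_1-q\|\bm w\|_1\ge p\|\bm v\|_1-q$. Then $|p-q|=q-p$ gives $\|F-N\|_c\ge p\|\bm v\|_1-p$.

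Both cases yield $\|F-N\|_c\ge p(\|\bm v\|_1-1)$. Combined with the trivial bound $\|F-N\|_c\ge0$, this gives $\dist_c(F,\mathcal N)\ge p\max(0,\|\bm v\|_1-1)=\pacc\,\Dstab(\Ehat)$, which proves the identity.

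\textbf{Consequence.} For a decomposition $F=F_{\text{native}}+F_{\text{bad}}$ with $F_{\text{native}}\in\mathcal N$, the definition of distance gives
\[
\magicc(F)\le\|F-F_{\text{native}}\|_c=\|F_{\text{bad}}\|_c .
\]
The only delicate step is the case split in the lower bound: the scalar coefficient $q$ of the competitor may exceed $p$, and the $|p-q|$ term must pay for it. The two cases handle this, so there is no real obstacle.
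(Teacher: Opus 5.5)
Your proof is correct and essentially the paper's: the paper minimises $|p-q|+\max(ps-q,0)$ over $q\ge0$ with a three-way case split and finds the optimum at $q=p$, $\bm w=\bm v/\|\bm v\|_1$, which is your competitor, while your reverse-triangle lower bound also proves the closest-point formula that the paper states without proof. Your own case split is unnecessary, since $|p-q|\ge q-p$ holds for every $q\ge0$ and gives $\|F-N\|_c\ge p(\|\bm v\|_1-1)$ in one line.
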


\begin{proof}
If $\|\bm v\|_1\le1$ then $F\in\mathcal N$ and both sides vanish. Assume $s=\|\bm v\|_1>1$. For $q\ge0$, the closest
point $q\bm w$ with $\|\bm w\|_1\le1$ to $p\bm v$ in $\ell_1$ distance is at distance $\max(ps-q,0)$, so
\begin{equation}
  \mathrm{magic}_c(F)=\inf_{q\ge0}\Bigl(|p-q|+\max(ps-q,0)\Bigr).
\end{equation}
For $q\le p$ the bracket is $p-q+ps-q\ge p(s-1)$, for $p\le q\le ps$ it equals $q-p+ps-q=p(s-1)$, and for $q\ge ps$
it equals $q-p\ge p(s-1)$. The infimum $p(s-1)$ is attained at $q=p$, giving
$\mathrm{magic}_c(F)=p(s-1)=\pacc\Dstab(\Ehat)$. Since $F_{\text{native}}\in\mathcal N$,
$\|F_{\text{bad}}\|_c=\|F-F_{\text{native}}\|_c\ge\mathrm{magic}_c(F)$, which is
\cref{eq:norm-bound}.
\end{proof}

The same computation gives the exact distance of a normalised effect to the octahedron. Writing
$\Ehat=\tfrac12(I+\bm r\cdot\bm\sigma)$ and $\mathcal O=\{\tfrac12(I+\bm s\cdot\bm\sigma):\|\bm s\|_1\le1\}$,
\begin{equation}\label{eq:exact-dist}
  \dist_c(\Ehat,\mathcal O)=\min_{\|\bm s\|_1\le1}\tfrac12\|\bm r-\bm s\|_1=\tfrac12\Dstab(\Ehat),
\end{equation}
the minimum being attained at $\bm s=\bm r/\|\bm r\|_1$ when $\|\bm r\|_1>1$.

The same normalisation holds in trace norm, up to fixed constants, so the choice of norm is immaterial.

\begin{lemma}[Trace-norm normalisation]\label{lem:trace-norm-normalisation}
Let $\mathrm{magic}_1(F)=\dist_1(F,\mathcal N)=\inf_{G\in\mathcal N}\|F-G\|_1$. Then
\begin{equation}
  \frac{2}{1+\sqrt3}\,\pacc\,\Dstab(\Ehat)\le \mathrm{magic}_1(F)\le 2\,\pacc\,\Dstab(\Ehat).
\end{equation}
\end{lemma}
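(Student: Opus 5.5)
The plan is to compare the two norms on the four-dimensional real space of Hermitian operators on the logical qubit, and then pass the comparison through the infimum over $\mathcal N$. \Cref{lem:norm} has already identified $\dist_c(F,\mathcal N)$ with $\pacc\,\Dstab(\Ehat)$, so no further optimisation over the cone is needed.

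\textbf{Step 1: trace norm in Pauli coordinates.} For a Hermitian $A=aI+\bm x\cdot\bm\sigma$ with $a\in\mathbb R$ and $\bm x\in\mathbb R^3$, the eigenvalues are $a\pm\|\bm x\|_2$. Hence
\[
  \|A\|_1=|a+\|\bm x\|_2|+|a-\|\bm x\|_2|=2\max\bigl(|a|,\|\bm x\|_2\bigr).
\]

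\textbf{Step 2: two-sided comparison with $\|\cdot\|_c$.} For the upper direction,
\[
  \max(|a|,\|\bm x\|_2)\le|a|+\|\bm x\|_1 ,
\]
so $\|A\|_1\le2\|A\|_c$. For the lower direction, Cauchy--Schwarz in three dimensions gives $\|\bm x\|_1\le\sqrt3\,\|\bm x\|_2$. Therefore
\[
  \|A\|_c\le|a|+\sqrt3\,\|\bm x\|_2\le(1+\sqrt3)\max(|a|,\|\bm x\|_2)=\tfrac{1+\sqrt3}{2}\,\|A\|_1 .
\]

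\textbf{Step 3: pass to distances.} Every $G\in\mathcal N$ is Hermitian, and $F=P\Eacc P$ is Hermitian, so $F-G$ is Hermitian and Step 2 applies to it. Taking the infimum over $G\in\mathcal N$ on each side gives
\[
  \tfrac{2}{1+\sqrt3}\,\dist_c(F,\mathcal N)\le\dist_1(F,\mathcal N)\le2\,\dist_c(F,\mathcal N).
\]
Substituting $\dist_c(F,\mathcal N)=\magicc(F)=\pacc\Dstab(\Ehat)$ from \cref{lem:norm} gives both inequalities of the statement.

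There is no real obstacle. The only points needing care are that the restriction to Hermitian differences is legitimate, which holds because both $F$ and the cone consist of Hermitian operators, and that the constant $\sqrt3$ is exactly the $\ell_1$-to-$\ell_2$ ratio in three dimensions. That ratio is attained along the diagonal direction, so the constant $\tfrac{2}{1+\sqrt3}$ is the natural one this route yields.
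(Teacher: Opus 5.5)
Your proposal is correct and follows the paper's proof essentially step for step. You use the same eigenvalue formula $\|A\|_1=2\max(|a|,\|\bm x\|_2)$, the same Cauchy--Schwarz bound $\|\bm x\|_1\le\sqrt3\,\|\bm x\|_2$, and the same passage to the infimum over the Hermitian differences $F-G$, followed by \cref{lem:norm}. The only difference is cosmetic: you replace the paper's two-case split on $|a|\gtrless\|\bm x\|_2$ with the single bound $|a|+\sqrt3\,\|\bm x\|_2\le(1+\sqrt3)\max(|a|,\|\bm x\|_2)$.
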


\begin{proof}
We first prove the one-qubit norm comparison. Let $A=aI+\bm x\cdot\bm\sigma$ with $a\in\mathbb R$, $\bm x\in\mathbb R^3$,
and write $\alpha=|a|$, $r=\|\bm x\|_2$, $s=\|\bm x\|_1$. Since $(\bm x\cdot\bm\sigma)^2=r^2I$, the eigenvalues of $A$
are $\lambda_\pm=a\pm r$, so $\|A\|_1=|a+r|+|a-r|=2\max(\alpha,r)$, while $\|A\|_c=\alpha+s$. The upper bound is
immediate, $\|A\|_1=2\max(\alpha,r)\le2(\alpha+s)=2\|A\|_c$. For the lower bound use $s\le\sqrt3\,r$. If $\alpha\ge r$,
$(\alpha+s)/2\alpha\le(\alpha+\sqrt3 r)/2\alpha\le(1+\sqrt3)/2$, and if $\alpha\le r$, then
$(\alpha+s)/2r\le(r+\sqrt3 r)/2r=(1+\sqrt3)/2$. Hence $\|A\|_1\ge\tfrac{2}{1+\sqrt3}\|A\|_c$. Both constants are sharp:
the factor $2$ at $\bm x=0$, and $2/(1+\sqrt3)$ at $|a|=r$, $\bm x=\tfrac{r}{\sqrt3}(\pm1,\pm1,\pm1)$, where
$s=\sqrt3 r$ and $\|A\|_1=2r$. Applying the comparison to the Hermitian differences $F-G$ with $G\in\mathcal N$ and
taking the infimum gives $\tfrac{2}{1+\sqrt3}\mathrm{magic}_c(F)\le\mathrm{magic}_1(F)\le2\,\mathrm{magic}_c(F)$, and
\cref{lem:norm} then yields the claim.
\end{proof}

Two further elementary facts are used below. The eigenvalues of $Q=aI+\bm x\cdot\bm\sigma$ are
$a\pm\|\bm x\|_2$, so $\|Q\|_\infty=|a|+\|\bm x\|_2$, and Cauchy--Schwarz in three dimensions gives
\begin{equation}\label{eq:norm-convert}
  \|Q\|_c\;\le\;|a|+\sqrt3\|\bm x\|_2\;\le\;\sqrt3\bigl(|a|+\|\bm x\|_2\bigr)\;=\;\sqrt3\,\|Q\|_\infty ,
\end{equation}
with equality only at $a=0$ and $\bm x$ along $(1,1,1)$. And for two effects the binary measurement channels
$\mathcal B_F(\rho)=\Tr(F\rho)\ket{0}\bra{0}+\Tr((I-F)\rho)\ket{1}\bra{1}$ satisfy
\begin{equation}\label{eq:binary-diamond}
  \|\mathcal B_F-\mathcal B_G\|_\diamond\;=\;2\|F-G\|_\infty ,
\end{equation}
since the difference sends $\rho$ to $\Tr((F-G)\rho)(\ket{0}\bra{0}-\ket{1}\bra{1})$, whose trace norm is
$2|\Tr((F-G)\rho)|$, and the output is classical so an entangled reference does not help.

\subsection{Proof of Theorem~\ref{thm:resource}}\label{sec:appB-resource}

Order the marked cells causally as $x_1,\dots,x_m$ and let $\mathcal P^{(k)}$ be the protocol whose first
$k$ marked cells are the actual $\widetilde{\mathcal M}_{x_i}$ and whose remaining marked cells are the
skeleton members $\widetilde{\mathcal G}_{x_i}$. Then $\mathcal P^{(m)}$ is the protocol itself and
$\mathcal P^{(0)}$ is free at every marked cell. The diamond norm does not increase under composition
with a fixed channel on either side, so
\begin{equation*}
  \bigl\|\mathcal P^{(k)}-\mathcal P^{(k-1)}\bigr\|_\diamond
  \;\le\;\bigl\|\widetilde{\mathcal M}_{x_k}-\widetilde{\mathcal G}_{x_k}\bigr\|_\diamond
  \;=\;\varepsilon_{x_k} ,
\end{equation*}
and summing gives $\|\mathcal P-\mathcal P^{\text{free}}\|_\diamond\le\sum_x\varepsilon_x$.
Coarse-graining the transcript register to the binary accept-or-reject channel is composition with a
fixed channel, so it does not increase the distance, and \cref{eq:binary-diamond} turns the result into
$\|F_A-F_A^{\text{free}}\|_\infty\le\tfrac12\sum_x\varepsilon_x$. Every accepted branch of the replaced
protocol is a free subchannel, so \cref{lem:free-native} places $F_A^{\text{free}}\in\mathcal N$, and
\begin{equation*}
  \magicc(F_A)\;=\;\dist_c(F_A,\mathcal N)\;\le\;\|F_A-F_A^{\text{free}}\|_c
  \;\le\;\sqrt3\,\|F_A-F_A^{\text{free}}\|_\infty\;\le\;\frac{\sqrt3}{2}\sum_{x}\varepsilon_x
\end{equation*}
by \cref{eq:norm-convert}, which is \cref{eq:resource}.

\subsection{Cylinders and exact components}\label{sec:appB-cylinder}

For disjoint $I,B\subseteq\mathcal R_d$ put
\begin{equation}\label{eq:cylinder}
  V[I;B]\;:=\;\sum_{\substack{S\supseteq I\\ S\cap B=\varnothing}}V_S .
\end{equation}

\begin{lemma}[Cylinder estimate]\label{lem:cylinder}
$\|V[I;B]\|\le\prod_{x\in I}\eta_x$.
\end{lemma}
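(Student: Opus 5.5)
The plan is to write the cylinder sum as a single product in which each marked slot carries a fixed local operator, and then bound that product by submultiplicativity of the operator norm. Recall that $V_S$ is obtained from the purified protocol isometry $V$ by placing $\Delta_x$ at the marked slots $x\in S$ and $W_x$ at the remaining marked slots, with all unmarked slots and the record-register couplings unchanged. The constraint $S\supseteq I$, $S\cap B=\varnothing$ fixes the choice at the slots in $I\cup B$ and leaves the choice at each $x\in\mathcal R_d\setminus(I\cup B)$ free. Summing over a free slot replaces $\Delta_x$ and $W_x$ by $W_x+\Delta_x=V_x$, by \cref{eq:dilation}. Expanding the product slot by slot, since the composition is multilinear in the slot operators, gives the identity
\begin{equation*}
  V[I;B]\;=\;\text{(the circuit with } \Delta_x \text{ at } x\in I,\ W_x \text{ at } x\in B,\ V_x \text{ at all other marked slots)} .
\end{equation*}

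The bound then follows from the operator norm of each factor. The free operations of the skeleton, the unmarked cells after purification, the coherent record control, and each $V_x$ and $W_x$ are isometries, so each has norm at most one. This is where the phrase ``forced non-insertions are free'' enters: the slots in $B$ carry $W_x$, which is a dilation of a free instrument and hence an isometry. Each slot in $I$ carries $\Delta_x$, whose norm is $\eta_x$ by definition. An operator acting on one tensor factor, tensored with the identity on the others, has the same norm, so the composition of all the factors has norm at most $\prod_{x\in I}\eta_x$.

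The main obstacle is checking that adaptivity does not break the factorisation. If a later cell's instrument is chosen by the classical record, then the operator at a slot might seem to depend on which branch was taken. I would handle this with the purification convention of \cref{sec:results-class}: control by the record registers is coherent, so a marked slot's dilation is a single controlled operator $\sum_h \ket{h}\bra{h}\otimes V_{x|h}$, and the common dilations are chosen controlled-wise. The deviation is then $\Delta_x=\sum_h\ket{h}\bra{h}\otimes\Delta_{x|h}$. Its norm is the maximum over $h$ of the branch norms, and I would take $\eta_x=\|\Delta_x\|$ to denote this controlled operator's norm, consistent with \cref{eq:dilation}. With that reading the product structure holds exactly and multilinearity goes through, which completes the bound.
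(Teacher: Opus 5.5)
Your proposal is correct and follows the paper's own route: multilinearity turns the cylinder sum into the single circuit with $\Delta_x$ on $I$, $W_x$ on $B$ and $V_x=W_x+\Delta_x$ on the remaining marked slots. Submultiplicativity, with every other factor of norm at most one, then gives $\prod_{x\in I}\eta_x$. Your paragraph on adaptivity spells out what the paper assumes through its purification convention, where record control is coherent and each $\Delta_x$ is a single controlled operator.
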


\begin{proof}
By multilinearity of \cref{eq:insertion} in the marked slots, $V[I;B]$ is the circuit carrying $\Delta_x$
at $x\in I$, carrying $W_x$ at $x\in B$, and carrying the full $V_x=W_x+\Delta_x$ at every other marked
slot. Each $V_x$ and each $W_x$ is a contraction and every unmarked cell is an isometry, so the norm is
at most $\prod_{x\in I}\|\Delta_x\|$.
\end{proof}

The estimate charges nothing for the forced non-insertions on $B$, which is what allows the counting
below to pin down exact components, not mere containment.

For a connected $C\subseteq\mathcal R_d$ write $\partial C=\{x\notin C:\exists z\in C,\ x\sim z\}$ in
$\Gfrak_d$, and let
\begin{equation}\label{eq:exact-comp}
  E_C\;=\;\{S:\ C\subseteq S,\ S\cap\partial C=\varnothing\}
\end{equation}
be the insertion sets for which $C$ is an \emph{exact} connected component. Then $V[E_C]=V[C;\partial C]$
and $\|V[E_C]\|\le w(C):=\prod_{x\in C}\eta_x$. Call a collection $\mathscr C$ of connected sets
\emph{compatible} when its members are pairwise disjoint and no edge of $\Gfrak_d$ joins two of them. If
two members overlap or are adjacent then $E_C\cap E_{C'}=\varnothing$, since one induced subgraph cannot
have both as exact components. For a compatible collection the sets $I_{\mathscr C}=\bigcup_CC$ and
$B_{\mathscr C}=\bigcup_C\partial C$ are disjoint and
$\bigcap_{C\in\mathscr C}E_C=\{S:I_{\mathscr C}\subseteq S,\ S\cap B_{\mathscr C}=\varnothing\}$, so
\cref{lem:cylinder} gives
\begin{equation}\label{eq:compat-bound}
  \Bigl\|V\Bigl[\bigcap_{C\in\mathscr C}E_C\Bigr]\Bigr\|\;\le\;\prod_{C\in\mathscr C}w(C) .
\end{equation}

\begin{lemma}[Exact-component expansion]\label{lem:inclusion}
Let $\mathcal K_D$ be the connected subsets of $\mathcal R_d$ with at least $D$ cells and let
$V_{\mathrm{bad}}=V[\bigcup_{C\in\mathcal K_D}E_C]$. Then, with $\Lambda=\sum_{C\in\mathcal K_D}w(C)$,
\begin{equation}\label{eq:vbad}
  \|V_{\mathrm{bad}}\|\;\le\;\prod_{C\in\mathcal K_D}\bigl(1+w(C)\bigr)-1\;\le\;e^{\Lambda}-1 .
\end{equation}
\end{lemma}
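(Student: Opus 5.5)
The plan is inclusion--exclusion on the union of the events $E_C$, using \cref{eq:compat-bound} to bound each term. Write $\mathbf 1_{\bigcup_C E_C}=1-\prod_{C\in\mathcal K_D}(1-\mathbf 1_{E_C})$ as an identity of indicator functions on subsets $S\subseteq\mathcal R_d$. Expanding the product gives
\begin{equation*}
  \mathbf 1_{\bigcup_CE_C}\;=\;\sum_{\varnothing\neq\mathscr C\subseteq\mathcal K_D}(-1)^{|\mathscr C|+1}\,\mathbf 1_{\bigcap_{C\in\mathscr C}E_C} .
\end{equation*}
The map $\mathcal S\mapsto V[\mathcal S]=\sum_{S\in\mathcal S}V_S$ is additive in the indicator, so it is linear over signed sums of indicators. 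Hence $V_{\mathrm{bad}}$ is the corresponding signed sum of $V[\bigcap_{C\in\mathscr C}E_C]$.

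Next, I discard the incompatible collections. If $\mathscr C$ contains two members that overlap or are adjacent in $\Gfrak_d$, the observation preceding \cref{eq:compat-bound} gives $E_C\cap E_{C'}=\varnothing$. The intersection is then empty and the term vanishes. Only compatible collections survive. For each of these, \cref{eq:compat-bound} bounds the norm of the term by $\prod_{C\in\mathscr C}w(C)$.

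The triangle inequality then gives
\begin{equation*}
  \|V_{\mathrm{bad}}\|\;\le\;\sum_{\varnothing\neq\mathscr C\ \mathrm{compatible}}\ \prod_{C\in\mathscr C}w(C)\;\le\;\sum_{\varnothing\neq\mathscr C\subseteq\mathcal K_D}\ \prod_{C\in\mathscr C}w(C)\;=\;\prod_{C\in\mathcal K_D}\bigl(1+w(C)\bigr)-1 .
\end{equation*}
The second inequality holds because the added terms are nonnegative. The final step is $1+w\le e^{w}$, which gives $\prod_C(1+w(C))-1\le e^{\Lambda}-1$.

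The only point needing care is the first step. One must check that the alternating signs cause no trouble: the signed indicator identity holds pointwise in $S$, and $V[\cdot]$ is linear, so the identity lifts exactly to operators. After that, the signs are simply dropped under the triangle inequality. Finiteness of $\mathcal R_d$ makes every sum finite, so no convergence issue arises.
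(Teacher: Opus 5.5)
Your proof is correct and follows the paper's argument essentially step for step. Both expand $V_{\mathrm{bad}}$ as an exact inclusion--exclusion identity, using linearity of $V[\cdot]$ in the indicator. Both note that incompatible collections vanish because $E_C\cap E_{C'}=\varnothing$, bound each compatible term by \cref{eq:compat-bound}, drop the signs, enlarge the sum to all nonempty collections, and close with $1+u\le e^u$.
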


\begin{proof}
The family $\mathcal K_D$ is finite and $\mathcal A\mapsto V[\mathcal A]$ is additive over disjoint
families, so inclusion and exclusion applied coefficientwise to \cref{eq:insertion} gives the exact
identity
\begin{equation*}
  V_{\mathrm{bad}}\;=\;\sum_{\varnothing\ne\mathscr C\subseteq\mathcal K_D}(-1)^{|\mathscr C|+1}
  V\Bigl[\bigcap_{C\in\mathscr C}E_C\Bigr] .
\end{equation*}
Incompatible collections contribute zero. Taking norms, discarding signs and applying
\cref{eq:compat-bound} bounds the right side by $\sum_{\varnothing\ne\mathscr C}\prod_Cw(C)$, which is
$\prod_C(1+w(C))-1$, and $1+u\le e^u$ closes it.
\end{proof}

Conditioning on \emph{exact} components, not on containment, is what makes \cref{lem:inclusion} an
identity. Enumerating one oversized component per bad set and leaving the unexamined cells carrying the
full $V_x$ would cover every bad set without decomposing $V_{\mathrm{bad}}$, since a set with two
oversized components lies in two such cylinders, and restricting to the first of them is a condition on
the unexamined cells that the full $V_x$ does not impose.

\subsection{Proof of Theorem~\ref{thm:suppression}}\label{sec:appB-assembly}

Split $V=V_{\mathrm{good}}+V_{\mathrm{bad}}$ at the certified threshold $D$, so $V_{\mathrm{good}}$ collects the insertion
sets all of whose components are smaller than $D$. Fix a complete raw transcript $y$. By
\cref{cond:h4} every good $S$ contributes $(\bra{y}\otimes I_E)V_SP=(J_yP)\otimes\ket{e_{y,S}}$ with one
$J_y$, so
\begin{equation*}
  (\bra{y}\otimes I_E)V_{\mathrm{good}}P
  \;=\;(J_yP)\otimes\Bigl(\sum_{S\ \mathrm{good}}\ket{e_{y,S}}\Bigr) ,
\end{equation*}
and therefore
\begin{equation}\label{eq:fgood}
  F_{\mathrm{good}}\;=\;\sum_{y\in A}\Bigl\|\sum_{S\ \mathrm{good}}e_{y,S}\Bigr\|^2\,PJ_y^\dagger J_yP .
\end{equation}
Every coefficient is nonnegative, every $PJ_y^\dagger J_yP$ lies in $\mathcal N$ by
\cref{lem:free-native}, and $\mathcal N$ is a convex cone, so $F_{\mathrm{good}}\in\mathcal N$. This is
where post-selection is disposed of. The accepted set enters only as the range of the outer sum in
\cref{eq:fgood}, and restricting it deletes nonnegative terms. A branch of small probability may have its
coefficient amplified without limit, and it stays on the ray $J_y$ throughout.

Let $t=\|V_{\mathrm{bad}}\|$. Since $\|V\|\le1$ we have $\|V_{\mathrm{good}}\|\le1+t$, so with $Q_A$ the
projector onto accepted transcripts,
\begin{equation*}
  \|F-F_{\mathrm{good}}\|_\infty
  \;=\;\bigl\|P\bigl(V_{\mathrm{good}}^\dagger Q_AV_{\mathrm{bad}}+V_{\mathrm{bad}}^\dagger Q_AV_{\mathrm{good}}
  +V_{\mathrm{bad}}^\dagger Q_AV_{\mathrm{bad}}\bigr)P\bigr\|_\infty
  \;\le\;2(1+t)t+t^2\;=\;2t+3t^2 .
\end{equation*}
Since $F_{\mathrm{good}}\in\mathcal N$, \cref{eq:norm-convert} gives $\magicc(F)\le\sqrt3(2t+3t^2)$. The
bound $\magicc(F)\le\|F\|_c\le\sqrt3$ holds always, because $0\in\mathcal N$ and $0\le F\le I$, and the
minimum of the two is \cref{eq:suppression}. \Cref{lem:inclusion} supplies $t\le e^{\Lambda}-1$.

For the linear form let $\Lambda\le\log2$. Then $t=e^{\Lambda}-1\le1$, so $2t+3t^2\le5t$, and
$e^{\Lambda}-1\le\Lambda e^{\Lambda}\le2\Lambda$ on that range, giving
$\magicc(F)\le5\sqrt3\cdot2\Lambda=10\sqrt3\,\Lambda$.

It remains to count. In a graph of degree at most $\zeta$ the number of connected subsets of size $k$
containing a fixed vertex is at most $C_\zeta^{\,k}$ for a constant depending only on $\zeta$, so there
are at most $mC_\zeta^{\,k}$ connected subsets of size $k$ and
\begin{equation*}
  \Lambda\;=\;\sum_{C\in\mathcal K_D}\prod_{x\in C}\eta_x\;\le\;m\sum_{k=D}^{m}\bigl(C_\zeta\eta_{\max}\bigr)^{k} ,\qquad D\in\mathbb N ,
\end{equation*}
which is \cref{eq:lambda-count}. The upper limit $k=m$ is what keeps the sum finite when
$C_\zeta\eta_{\max}\ge1$, in which range \cref{eq:suppression} falls back on its constant branch.

\begin{proof}[Proof of \cref{cor:approx}]
Keep the split $V=V_{\mathrm{good}}+V_{\mathrm{bad}}$ and let $W$ be the operator defined by
$(\bra{y}\otimes I_E)WP=(J_yP)\otimes\ket{e_{y,S}}$ summed over good $S$, so that $W$ is what
\cref{cond:h4} would have produced exactly. Put $\Delta=Q_A(V_{\mathrm{good}}-W)P$. Distinct raw
transcripts are orthogonal in the record register, so $\Delta=\sum_y\ket{y}\otimes\Delta_y$ with
$\|\Delta_y\|\le\sum_{S\ \mathrm{good}}\delta_{y,S}$ by \cref{eq:approx-h4}, and therefore
$\|\Delta\|\le\delta_{\mathrm{tot}}$.

The operator $F_{\mathrm{good}}^{\mathrm{ex}}=PW^\dagger Q_AWP$ is \cref{eq:fgood} verbatim and lies in
$\mathcal N$ by the same convexity argument. Writing $U=\Delta+Q_AV_{\mathrm{bad}}P$ we have
$\|U\|\le\delta_{\mathrm{tot}}+t=u$ and $\|Q_AWP\|\le\|Q_AV_{\mathrm{good}}P\|+\delta_{\mathrm{tot}}\le1+u$,
so
\begin{equation*}
  \|F-F_{\mathrm{good}}^{\mathrm{ex}}\|_\infty
  \;=\;\bigl\|P\bigl(W^\dagger Q_AU+U^\dagger Q_AW+U^\dagger Q_AU\bigr)P\bigr\|_\infty
  \;\le\;2(1+u)u+u^2\;=\;2u+3u^2 .
\end{equation*}
\Cref{eq:norm-convert} and the constant bound $\magicc(F)\le\sqrt3$ then give
\cref{eq:approx-suppression}. At $\delta_{\mathrm{tot}}=0$ this is \cref{eq:suppression}.
\end{proof}

\begin{remark}[Where each hypothesis is used]\label{rem:where-used}
\Cref{cond:h4} is used once, at \cref{eq:fgood}, and it is the only place the accepted set is handled.
\Cref{cond:degree} is used once, in the counting above. \Cref{cond:skeleton} is what stops the light
cones at a raw-record cut so that $\Gfrak_d$ is the graph the counting runs on, and
\Cref{cond:components} is not a hypothesis of this theorem and is used nowhere in this appendix. It
enters upstream, in \cref{prop:absorption}, which is one route to \cref{cond:h4}. \Cref{cond:one-block}
is not a hypothesis either and is used nowhere here; it fixes the setting, as
\cref{sec:results-class} says.
\Cref{cond:dins} enters only in \cref{cor:weak}.
\end{remark}

%% file: appendix/E-example.tex
% Appendix E
\section{A worked example: the distance-3 patch}\label{app:example}

We make the objects of the theorem concrete on the smallest nontrivial patch, the distance-3 rotated surface code with
nine data qubits, which is the setting a first cultivation attempt uses. On that patch we compare a fine stabilizer
transcript, a classical coarse-graining, and a fold-supported check.

\paragraph{The patch and its logical strings.} The nine data qubits sit on a $3\times3$ grid. The stabilizer group is
generated by the weight-4 bulk plaquettes and vertices together with the weight-2 boundary generators, eight
independent generators in all, leaving one logical qubit. The boundaries are rough top and bottom and smooth left and
right. The logical $\Xbar$ is a horizontal weight-3 $X$-string between the smooth boundaries, the logical $\Zbar$ is a
vertical weight-3 $Z$-string between the rough boundaries, and their product $\Ybar=i\Xbar\Zbar$ runs along the
diagonal. Any $X$-type operator commuting with every stabilizer can be multiplied by stabilizers until it is either
the identity or that horizontal line, so the only invariant of a logical representative on this patch is whether it
connects the two smooth boundaries. The three logical operators satisfy $\Xbar\Zbar=-\Zbar\Xbar$, and that single
anticommutation is the algebraic fact the rest of this appendix uses.

\paragraph{Why a stabilizer check fails.} Suppose we try to check $\Hxy$ with stabilizer operations alone. A single
transcript of Pauli and stabilizer measurements can measure $\Xbar$, or measure $\Ybar$, or interleave the two,
but by \cref{lem:collapse} its accepted effect is a stabilizer projector $\Pi_S$ with $S$ abelian. On the $d=3$ patch
one checks directly that $\Xbar$ and $\Ybar$ cannot both lie in a commuting group with a nonzero joint eigenspace:
they are conjugate logical operators, $\Xbar\Ybar=-\Ybar\Xbar$, so any $\Pi_S$ containing representatives of both
annihilates the code space. One can see the collapse in a single line. If the transcript first measures $\Xbar$ with outcome $+$ and then $\Ybar$,
the accepted Kraus operator is $K=P_{\Ybar,+}P_{\Xbar,+}$, and its effect is
\begin{equation}
  K^\dagger K=P_{\Xbar,+}P_{\Ybar,+}P_{\Xbar,+}
  =P_{\Xbar,+}\,\frac{I+\Ybar}{2}\,P_{\Xbar,+}
  =\tfrac12\,P_{\Xbar,+},
\end{equation}
using $P_{\Xbar,+}\Ybar P_{\Xbar,+}=0$ since $\Ybar$ anticommutes with $\Xbar$. The second measurement adds nothing to
the effect, which is proportional to the single stabilizer projector $P_{\Xbar,+}$ and so has its Bloch vector on the
single axis $\Xbar$. A simple stabilizer family can coarse-grain by running the $\Xbar$ measurement with probability $\tfrac12$ and the
$\Ybar$ measurement with probability $\tfrac12$ and accepting the $+$ outcome. The
accepted effect is
\begin{equation}
  \Eacc=\tfrac12 P_{\Xbar,+}+\tfrac12 P_{\Ybar,+}=\tfrac12 I+\tfrac14(\Xbar+\Ybar),
\end{equation}
with normalised Bloch vector $\bm v=(\tfrac12,\tfrac12,0)$, so
\begin{equation}
  \|\bm v\|_1=\tfrac12+\tfrac12=1,\qquad \Dstab(\Ehat)=\max(0,\|\bm v\|_1-1)=0 .
\end{equation}
The effect sits on the octahedron boundary. It is not the coherent check but a coin flip between two Pauli
measurements, and it retains which one was performed. This is the classical-mixture loophole closed by the magic
witness, made explicit on the patch. The sharp check would instead need $\bm v=(1/\sqrt2,1/\sqrt2,0)$ with
$\|\bm v\|_1=\sqrt2$, a point the coin flip can never reach because classical mixing only shrinks the $\ell_1$ norm.

\paragraph{What the fold supplies.} A fold-transversal Hadamard folds the patch along its diagonal and applies a
transversal Hadamard, implementing the boundary automorphism $\ee\leftrightarrow\mm$~\cite{Sahay2025,KobayashiZhu2023}.
On the folded lattice the two boundary types are identified, so the code becomes self-dual, and the transversal
Hadamard acts on the logical qubit as the logical Hadamard $\overline H$, exchanging $\Xbar$ and $\Zbar$. At the anyon
level this is the automorphism of $D(\Zt)$ that swaps $\ee$ and $\mm$ and fixes $\eps$.

A single-qubit Clifford conjugation sends a Pauli axis to a Pauli axis, and $\Hxy=(\Xbar+\Ybar)/\sqrt2$ is not a Pauli
operator, so no Clifford brings $\Hxy$ to a Pauli axis and no ordinary boundary readout of the folded code measures
it. What the self-dual structure supplies is a measurement of a \emph{non-Pauli Hermitian logical Clifford
observable}, the logical Hadamard axis $H_{XZ}=(\Xbar+\Zbar)/\sqrt2$, as a fold-supported logical observable. A
single-qubit logical Clifford then relabels the Pauli pair $(\Xbar,\Zbar)$ to $(\Xbar,\Ybar)$, carrying $H_{XZ}$ to
$\Hxy$, which is a change of logical frame and not a second physical resource. Cultivation protocols realise the check
at constant acceptance this way~\cite{GidneyShuttyJones2024,Claes2025,Sahay2025,Vaknin2025}. What the fold costs on the
scale of \cref{sec:results-free} is a separate question from what it does to the anyon labels, and this appendix does
not answer it.

On the $d=3$ patch the fold is a concrete weight-$O(d)$ operation, not a local gate. That is already enough to put it outside \cref{cond:skeleton}, which asks the marked cells to be inserted into a
skeleton of bounded-range operations. The weak-string protocol of \cref{def:weakstring} sits at the other end of the
same patch. Its three rotations at $\theta_3=\kappa/3$ have $\eta_j=\Theta(1/3)$ and $\Dins=3$, and
\cref{eq:constr-magic} gives $\magicc(F_A)=\cos^3\theta_3\sin^3\theta_3-\sin^6\theta_3$ with acceptance close to one
half, which at $\kappa=\tfrac12$ is $4.36\times10^{-3}$.

\paragraph{The general lesson from the small case.} The distance-3 patch is small enough to hold in one's head, and it
already shows the whole argument. The magic axis is a superposition of two logical operators that anticommute, a
single stabilizer transcript can carry only one of them because a commuting group cannot contain an anticommuting
pair, and a classical mixture of the two collapses onto the stabilizer boundary because mixing shrinks the $\ell_1$
norm. The stabilizer transcript and the classical mixture analysed here do not reach the sharp check. The fold reaches
it on the small patch, and it does so with an operation whose weight grows with the patch. Nothing in this picture
used $d=3$ except the convenience of drawing it. At larger distance the strings are longer, the stabilizer group is
bigger, and the fold is a wider operation, but the algebra is the same.
What the small case does not show is the rate, since with $d=3$ every exponent is a constant. That is
what \cref{thm:suppression} supplies, and \cref{app:localclass} is where the $d$-dependence is done.

%% file: appendix/F-localclass.tex
\section{The stabilizer class and its exact reach}\label{app:localclass}

This appendix proves \cref{prop:exactnative}. For protocols whose branches are stabilizer operations the accepted
effect can be computed exactly, and the answer is an equality: the reachable set is the native cone
$\mathcal N$ itself. Nothing from \cref{app:thm2} is used here, and neither is protection or any hypothesis on the noise
rate. \Cref{app:localclass-construction} then proves \cref{thm:construction}, the weak-string protocol
that lies inside the class of \cref{sec:results-class} and saturates \cref{thm:suppression}.

\subsection{Projector collapse}\label{app:localclass-collapse}

The following lemma is the one structural fact about a stabilizer branch that the rest of the appendix uses. A
branch is not literally a product of signed Pauli projectors, and the reduction to one is as follows. Fix a transcript
and a fault configuration, and work on the joint space of the patch and its ancillas. Preparing a product stabilizer
ancilla is a projector onto a stabilizer state, each measurement outcome contributes a signed Pauli projector, each
Clifford or Pauli operation contributes a Clifford unitary, and the branch operator is the ordered product of these.
Move every Clifford factor to the left using $\Pi C=C\,(C^\dagger\Pi C)$, which is legitimate because a Clifford
conjugate of a signed Pauli projector is again one. The branch becomes $K=C\,\Pi_k\cdots\Pi_1$ with $C$ Clifford, and
since $C^\dagger C=I$,
\begin{equation}\label{eq:branch-reduction}
  K^\dagger K\;=\;K'^\dagger K',\qquad K'=\Pi_k\cdots\Pi_1 ,
\end{equation}
so the lemma applied to $K'$ determines the accepted effect of the branch.

\begin{lemma}[Projector collapse]\label{lem:collapse}
Let $K_\tau=\Pi_{k}\cdots\Pi_{1}$ be a product of signed Pauli projectors. Then $E_\tau=c_\tau\,\Pi_S$ for some constant
$c_\tau\ge0$ and some projector $\Pi_S=\lvert S\rvert^{-1}\sum_{g\in S}g$ onto the joint $+1$ eigenspace of a commuting
group $S$ of signed Pauli operators with $-I\notin S$. If $c_\tau>0$, the set of Pauli operators appearing in
$E_\tau$ with a nonzero coefficient is exactly $S$, an abelian group.
\end{lemma}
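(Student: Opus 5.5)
I will prove \cref{lem:collapse} by induction on the number $k$ of projector factors. The induction will carry a stronger invariant: the product $K=\Pi_k\cdots\Pi_1$ either vanishes, or it can be written as $K=c\,U\,\Pi_S$. Here $S$ is a commuting group of signed Paulis with $-I\notin S$, $\Pi_S$ is its joint projector, $c\ge0$, and $U$ is a Pauli-generated operator that is an isometry on the range of $\Pi_S$. Then $K^\dagger K=c^2\Pi_S$ directly, which is the effect claimed with $c_\tau=c^2$. The cleaner route is to track the state space instead. I would show that the range-restricted operator $K\Pi$ maps the stabilizer code of $S$ isometrically, up to the scalar $c$, onto the stabilizer code of an updated group $S'$, and that $S$ itself is pulled back consistently. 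This is the Gottesman--Knill update rule for a Pauli measurement.

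The inductive step splits on how the new Pauli $g$, in $\Pi_{k+1}=\tfrac12(I+g)$, relates to the current output group $S'$.
\begin{itemize}
\item If $g\in S'$, the factor acts as the identity on the range and nothing changes.
\item If $-g\in S'$, the product is zero and we set $c=0$. This is the case $E_\tau=0$, which the statement allows by taking $c_\tau=0$.
\item If $g$ commutes with all of $S'$ but $\pm g\notin S'$, then $\Pi_{k+1}$ projects onto a subspace of the code. I pull $g$ back through the previous factors; because those factors act as a Pauli-frame isometry on the range, the pullback is a signed Pauli $\tilde g$. I then enlarge $S$ to $\langle S,\tilde g\rangle$. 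The scalar is unchanged, and $-I$ stays out because $-\tilde g\notin S$.
\item If $g$ anticommutes with some $h\in S'$, then $\Pi_{k+1}\Pi_{S'}\Pi_{k+1}=\tfrac12\Pi_{\langle S'',g\rangle}$, where $S''$ is the centraliser of $g$ in $S'$. Here $S$ is unchanged on the input side. The output group becomes $\langle S'',g\rangle$, and $c^2$ is halved.
\end{itemize}
In every case $K^\dagger K$ stays a nonnegative multiple of the input-side projector, so $E_\tau=c_\tau\Pi_S$.

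The final sentence of the statement follows once $c_\tau>0$. Expanding $\Pi_S=|S|^{-1}\sum_{g\in S}g$ over the Pauli basis, distinct elements of $S$ are distinct Paulis up to sign. The reason is that $g$ and $-g$ cannot both lie in $S$, since then $-I\in S$. So each element of $S$ carries coefficient $\pm|S|^{-1}c_\tau\neq0$ and no other Pauli appears. The Pauli support is therefore exactly $S$, which is an abelian group.

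The main obstacle is the anticommuting case. There one must verify that the output is again a scaled partial isometry between two stabilizer codes with the same input group. The identity $\Pi_{S'}\tfrac12(I+g)\Pi_{S'}=\tfrac12\Pi_{S'}$ when $g$ anticommutes with an element of $S'$ does this; it is the one-line computation used in \cref{app:example}. I would run the induction on the pair (input group, output group) rather than on $K^\dagger K$ alone, so that both the commuting case and this case reduce to that standard identity.
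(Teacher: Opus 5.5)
Your proof is correct in outline, but it runs the induction from the opposite end to the paper's. You append each projector on the output side, $K\mapsto\Pi_{k+1}K$. Since $K^\dagger\Pi_{k+1}K$ is not a function of $K^\dagger K$ alone, you have to carry an output group $S'$ and an intertwining isometry $U$, and in the commuting case a pullback $\tilde g$ of the new Pauli to the input side. The paper instead reduces the palindrome $\Pi_1\cdots\Pi_k\cdots\Pi_1$ from the centre outward. That amounts to appending projectors on the input side, $K\mapsto K\Pi$, so the new effect is $\Pi\,(K^\dagger K)\,\Pi$, a conjugation of the current block $c\,\Pi_S$. One group then suffices. The commuting case is $\Pi_{tR}\Pi_S\Pi_{tR}=\Pi_S\Pi_{tR}$, and the anticommuting case is the identity $\Pi_{tR}\Pi_S\Pi_{tR}=\tfrac12\Pi_{\langle S_0,tR\rangle}$ that you use on the output side. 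No pullback is needed. What your Schr\"odinger-picture route buys is a normal form for the branch operator itself, $K=c\,U\Pi_S$, with $U$ a partial isometry from the code of $S$ onto the code of $S'$. That is more than the lemma states and more than the paper uses: \cref{app:localclass-gains} deliberately works only with the effect-level consequence.

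If you keep your route, the invariant as written ($U$ ``Pauli-generated'' and isometric on the range) does not by itself make $U^\dagger gU$, restricted to the code of $S$, a signed Pauli. That is the whole content of your commuting case. Make it part of the induction hypothesis that for every $Q$ in the centraliser of $S'$ there is a signed Pauli $P$ in the centraliser of $S$ with $U^\dagger QU=P\,\Pi_S$. Then $-\tilde g\notin S$ follows from $-g\notin S'$, because $U$ is onto the code of $S'$. The commuting update only restricts $U$, so it preserves the hypothesis. For the anticommuting update $U\mapsto\sqrt2\,\Pi_gU$, take $Q$ in the centraliser of $\langle S'',g\rangle$ and $h\in S'$ anticommuting with $g$. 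If $Q$ commutes with $h$, then $2U^\dagger\Pi_gQ\Pi_gU=U^\dagger QU$. If $Q$ anticommutes with $h$, then $\Pi_{S'}Q\Pi_{S'}=0$, and the same expression equals $U^\dagger(Qg)U$, where $Qg$ is Hermitian and lies in the centraliser of $S'$. In both cases the old hypothesis applies, and with that addition your argument is complete.
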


\begin{proof}
Reduce the palindrome $E_\tau=\Pi_1\cdots\Pi_{k-1}\Pi_k\Pi_{k-1}\cdots\Pi_1$ from the centre outward, keeping the
invariant that the central block is a nonnegative multiple of a stabilizer projector $\Pi_S$ onto a commuting group
$S$ with $-I\notin S$. The innermost factor $\Pi_k=\Pi_{sQ}$ is such a projector, with $S=\langle sQ\rangle$. Assume
the block equals $c\,\Pi_S$ and conjugate by the next projector $\Pi_{tR}=(I+tR)/2$. If $R$ commutes with every
element of $S$, then $\Pi_{tR}\Pi_S\Pi_{tR}=\Pi_S\Pi_{tR}$, which is either $0$ (if $-tR\in S$) or the projector onto
$\langle S,tR\rangle$. If $R$ anticommutes with some element of $S$, split $S=S_0\sqcup S_1$ into the part
$S_0=S\cap C(R)$ commuting with $R$ and its coset $S_1=hS_0$ for any fixed $h\in S$ anticommuting with $R$. Write
$\Pi_S=\lvert S\rvert^{-1}\sum_{g\in S}g=\lvert S\rvert^{-1}\bigl(\sum_{g\in S_0}g+\sum_{g\in S_1}g\bigr)$. Since $R$
anticommutes with every $g\in S_1$, conjugation by $\Pi_{tR}=(I+tR)/2$ kills the $S_1$ sum, because for such $g$
\begin{equation}
  \Pi_{tR}\,g\,\Pi_{tR}=\tfrac14(I+tR)g(I+tR)=\tfrac14\,g(I-tR)(I+tR)=0,
\end{equation}
using $Rg=-gR$ and $R^2=I$, while for $g\in S_0$, which commutes with $R$, $\Pi_{tR}g\Pi_{tR}=g\Pi_{tR}$. Hence
\begin{equation}\label{eq:collapse-anticomm}
  \Pi_{tR}\,\Pi_S\,\Pi_{tR}=\frac{1}{\lvert S\rvert}\sum_{g\in S_0}g\,\Pi_{tR}
  =\frac{\lvert S_0\rvert}{\lvert S\rvert}\,\Pi_{S_0}\Pi_{tR}
  =\tfrac12\,\Pi_{\langle S_0,tR\rangle},
\end{equation}
since $\lvert S_0\rvert=\tfrac12\lvert S\rvert$ and $\Pi_{S_0}\Pi_{tR}=\Pi_{\langle S_0,tR\rangle}$ when $tR$ commutes
with $S_0$ and is consistent with it. In both cases the block stays a nonnegative multiple of a projector onto a
commuting group with $-I$ excluded.
Iterating outward proves $E_\tau=c_\tau\Pi_S$. If $c_\tau=0$ the transcript is contradictory and its accepted effect vanishes,
so it contributes nothing to \cref{eq:accepted-effect} and we may assume $c_\tau>0$. The Pauli operators appearing in
$\Pi_S=\lvert S\rvert^{-1}\sum_{g\in S}g$ are then exactly $S$, an abelian group.
\end{proof}

\Cref{lem:collapse} holds for one fixed transcript, and it fails for a coarse-grained sum, where
$\tfrac12P_{\Xbar,+}+\tfrac12P_{\Ybar,+}$ has both $\Xbar$ and $\Ybar$ in its support. The coarse-grained family is the
domain of \cref{thm:suppression}.

\subsection{The local-noise stabilizer setting}\label{app:localclass-setting}

Fix constants $R,\Delta,M,\lambda<\infty$, independent of $d$. Consider a protocol whose spacetime cell graph
$\mathcal G_d$ has degree at most $\Delta$ and range at most $R$, with fixed rough and smooth boundaries, built from
product stabilizer ancillas, local Clifford and Pauli operations, native Pauli--Wilson measurements, and classical
feed-forward. This is the local stabilizer skeleton of a check-and-grow protocol, and every one of its branches is a
stabilizer operation, which is the only property the proof uses. Noise is local stochastic Pauli noise at rate $p$: after Pauli-frame propagation through the ideal stabilizer
circuit, the set $C(\omega)\subset V(\mathcal G_d)$ of faulted cells obeys $\Pr_p[\,S\subset C(\omega)\,]\le(\lambda
Mp)^{|S|}$ for every finite $S$, with $M$ bounding the fault labels per cell. Finite-range correlated Pauli noise with
the same domination is allowed. The proof below uses none of this beyond the fact that the noise is a convex mixture
of Pauli conjugations, and it holds at every rate $p$.

The class is closed under classical mixing, which we record separately because the converse of
\cref{prop:exactnative} uses it and because feed-forward on its own supplies only dyadic weights. Given members
$A^{(1)},\dots,A^{(k)}$ of the class and weights $w_j\ge0$ summing to one, the protocol that draws $j$ from $w$ and
runs $A^{(j)}$ is again a member, with accepted effect
\begin{equation}\label{eq:classmix}
  F \;=\; \sum_{j=1}^{k} w_j\,F^{(j)} .
\end{equation}
The draw is classical data and it touches no qubit of the patch, so every branch of the mixture is a stabilizer
operation exactly as before.

\subsection{Every branch is a native effect}\label{app:localclass-gains}

Local stochastic Pauli noise is a convex mixture of Pauli conjugations, so writing $\omega$ for a fault configuration,
one Pauli per cell, the noisy circuit is
\begin{equation}\label{eq:stoch-mixture} \mathcal E \;=\; \sum_\omega \Pr\nolimits_p(\omega)\,\mathcal E_\omega,
\qquad \Pr\nolimits_p(\omega)\ge0,\qquad \sum_\omega\Pr\nolimits_p(\omega)=1, \end{equation} where $\mathcal E_\omega$
is the ideal stabilizer circuit with the Paulis of $\omega$ inserted. For a fixed $\omega$ and a fixed transcript
$\tau$ the branch is a stabilizer operation, so $K_{\tau,\omega}$ is a product of Pauli operators and signed Pauli
projectors and \cref{lem:collapse} applies to it directly. We use only the effect-level consequence
$K_{\tau,\omega}^\dagger K_{\tau,\omega}=c_{\tau,\omega}\Pi_S$, and never a normal form for $K_{\tau,\omega}$ itself,
which a sequence of Pauli measurements need not have. Already on the logical qubit the branch
$\Pi_{+\Zbar}\Pi_{+\Xbar}$ has $K^\dagger K=\tfrac12\Pi_{+\Xbar}$, whose scalar rules out writing $K$ as a Pauli times
a projector. The atoms of the expansion are the pairs $(\tau,\omega)$, contributing
\begin{equation}\label{eq:stoch-atom}
  B_{\tau,\omega} \;=\;\Pr\nolimits_p(\omega)\;P_{\text{code}}\,K_{\tau,\omega}^\dagger
  K_{\tau,\omega}\,P_{\text{code}} ,
\end{equation}
which is additive over $\tau$ and over $\omega$, so the accepted effect is the sum of the atoms. The weights
$\Pr_p(\omega)$ are nonnegative, and together with the branch structure that is the whole input to the next lemma.

\begin{lemma}[Branch form]\label{lem:branch-native}
Let $K$ be a product of Pauli operators and signed Pauli projectors, so that $K^\dagger K=c_0\Pi_S$ by
\cref{lem:collapse}, with $c_0\ge0$ and $S$ a commuting group of signed Paulis with $-I\notin S$. Then
\begin{equation*}
  P_{\mathrm{code}}\,K^\dagger K\,P_{\mathrm{code}}\;=\;c\,\Pi,\qquad c\ge0,
\end{equation*}
where $\Pi$ is either $0$ or, restricted to the code space, one of $I$ and $\tfrac12(I+s\bar P)$ with $s=\pm1$ and
$\bar P$ a logical Pauli operator. In particular it lies in $\mathcal N$.
\end{lemma}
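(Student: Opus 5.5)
Starting from \cref{lem:collapse}, I write $K^\dagger K=c_0\Pi_S$, so that $P_{\mathrm{code}}K^\dagger KP_{\mathrm{code}}=c_0\,P_{\mathrm{code}}\Pi_SP_{\mathrm{code}}$. If $c_0=0$ the claim holds with $\Pi=0$, so I assume $c_0>0$. Since $P_{\mathrm{code}}=\Pi_{\mathcal S}$ is itself the projector onto the stabilizer group $\mathcal S$ of the code, and $-I\notin\mathcal S$, the plan is to collapse the sandwich $\Pi_{\mathcal S}\Pi_S\Pi_{\mathcal S}$. I will do this by applying the conjugation step in the proof of \cref{lem:collapse} once for each generator of $\mathcal S$. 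Writing $\Pi_{\mathcal S}=\prod_j\Pi_{g_j}$ over independent commuting generators, the operator $\Pi_{\mathcal S}\Pi_S\Pi_{\mathcal S}$ is the palindrome $\Pi_{g_1}\cdots\Pi_{g_r}\,\Pi_S\,\Pi_{g_r}\cdots\Pi_{g_1}$, because the $\Pi_{g_j}$ commute and are idempotent. The inductive step of \cref{lem:collapse} started from the central block $c\,\Pi_S$ then shows that the result is $c'\Pi_{S'}$ with $c'\ge0$. Here $S'$ is a commuting signed Pauli group with $-I\notin S'$, or else the result is $0$. Since $\mathcal S\subseteq S'$ whenever the result is nonzero, $\Pi_{S'}=\Pi_{S'}P_{\mathrm{code}}$.

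The second step identifies $\Pi_{S'}$ on the code space. The group $S'$ is abelian and contains $\mathcal S$, so every element of $S'$ commutes with all of $\mathcal S$ and therefore lies in the normaliser $N(\mathcal S)$. Up to sign, the normaliser is $\mathcal S$ together with its cosets by the logical Paulis $\Xbar,\Ybar,\Zbar$. Two elements of $S'$ from different nontrivial cosets would represent anticommuting logical operators, yet they commute. So the quotient $S'/\mathcal S$ is either trivial or generated by a single signed logical $s\bar P$. The sign is consistent on the code space because $-I\notin S'$. In the first case $\Pi_{S'}$ restricted to the code space is $I$. In the second, $\Pi_{S'}=P_{\mathrm{code}}\tfrac12(I+s\bar P)$, which restricts to $\tfrac12(I+s\bar P)$. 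A sign ambiguity from stabilizer multiples disappears because each element of $\mathcal S$ acts as $+1$ on the code space. Both cases are of the form $q(I+\bm w\cdot\bm\sigma)$ with $\|\bm w\|_1\le1$, with $\bm w=0$ or $\bm w=s\hat e_P$, so they lie in $\mathcal N$. The nonnegative scalar $c=c_0c'$ preserves membership because $\mathcal N$ is a cone.

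The one subtle point is that \cref{lem:collapse} was stated for products of projectors, whereas $K$ here may also contain Pauli operators. I will handle this exactly as in \cref{eq:branch-reduction}: Paulis are Clifford, so I commute them to the left, turning each signed Pauli projector into another signed projector, and the leftover unitary cancels in $K^\dagger K$. The statement already grants $K^\dagger K=c_0\Pi_S$, so this only confirms consistency. I expect the main obstacle to be the bookkeeping in the first step. The conjugation step of \cref{lem:collapse} must be valid when the outer projectors come from the code group rather than from the circuit. In particular, the anticommuting case must give the factor $\tfrac12$ and a group still excluding $-I$, and the commuting case must give either $0$ or an enlarged consistent group. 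Both facts are exactly what the proof of \cref{lem:collapse} establishes for an arbitrary signed Pauli $tR$, so I plan to invoke it verbatim.
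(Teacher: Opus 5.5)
Your proposal is correct and follows the paper's proof. You collapse the sandwich $P_{\mathrm{code}}\Pi_SP_{\mathrm{code}}$ with \cref{lem:collapse} to $c'\Pi_{S'}$, then use that the abelian $S'\supseteq\mathcal S$ has isotropic, hence at most one-dimensional, image in the one-qubit logical symplectic space. The one step you assert without argument, $\mathcal S\subseteq S'$ when the result is nonzero, does hold: each $g_j$ is adjoined at its conjugation step and never discarded, since the later generators commute with it. The paper gets the same containment instead from $\operatorname{ran}\Pi_{S'}\subseteq\operatorname{ran}P_{\mathrm{code}}$.
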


\begin{proof}
The code projector $P_{\mathrm{code}}=\Pi_{\mathcal G}$ is a product of signed Pauli projectors, one per stabilizer
generator, and so is $\Pi_S$, so $\Pi_SP_{\mathrm{code}}$ is a product of signed Pauli projectors. Its palindrome is
$P_{\mathrm{code}}\Pi_S\Pi_SP_{\mathrm{code}}$, which equals $P_{\mathrm{code}}\Pi_SP_{\mathrm{code}}$ because
$\Pi_S^2=\Pi_S$, and the two families of projectors need not commute with each other for \cref{lem:collapse} to
apply. That lemma gives
\begin{equation*}
  P_{\mathrm{code}}\,\Pi_S\,P_{\mathrm{code}}\;=\;c_1\,\Pi_{S'},\qquad c_1\ge0,
\end{equation*}
with $S'$ a commuting group of signed Paulis and $-I\notin S'$. Two cases follow.

If $c_1=0$ the operator is $0$, which lies in $\mathcal N$, and the lemma holds with $c=0$. This is the case in which
$S$ is inconsistent with the code.

If $c_1>0$ then $\operatorname{ran}\Pi_{S'}\subseteq\operatorname{ran}P_{\mathrm{code}}$, so every $g\in\mathcal G$
fixes every vector of $\operatorname{ran}\Pi_{S'}$, giving $g\Pi_{S'}=\Pi_{S'}$ and hence $g\in S'$, because the
pointwise Pauli stabilizer of a nonzero stabilizer projector is its own signed stabilizer group. So
$\mathcal G\subseteq S'$, and since $S'$ is a commuting group containing $\mathcal G$ we also have
$S'\subseteq N(\mathcal G)$. The image of $S'$ in $N(\mathcal G)/\mathcal G$ is then an \emph{isotropic} subgroup of
the one-qubit logical symplectic space $\mathbb F_2^{\,2}$, meaning the logical symplectic form vanishes on it, so its
dimension is $0$ or $1$. Correspondingly the restriction of $\Pi_{S'}$ to the code space is $I$ or
$\tfrac12(I+s\bar P)$ for a single logical axis $\bar P$ and a sign $s=\pm1$. Both lie in $\mathcal N$, the first in
its interior and the second on an extreme ray, and $c=c_0c_1\ge0$ keeps the multiple inside the cone. Isotropy is what
carries this step, since the logical Pauli group modulo phases is abelian as an abstract group and commutation there
would say nothing.
\end{proof}

\Cref{lem:branch-native} never asks where in the patch the faults sit, and that is worth noticing. A branch carrying
faults is still a stabilizer branch, its accepted effect still collapses to a projector, and a projector on one
logical qubit has its Bloch vector along a single axis. Faults inside a stabilizer branch move the axis, and moving
the axis is something a native Pauli measurement already does. Cone membership therefore follows from the projector
structure alone, with no input from the geometry of the patch or from where the faults are.

\begin{proof}[Proof of \cref{prop:exactnative}]
For the forward direction, fix an accepted set $A$ with $\pacc>0$. By \cref{eq:stoch-mixture,eq:stoch-atom} the
accepted effect is
\begin{equation*}
  F_A\;=\;\sum_{\tau\in A}\ \sum_{\omega}\ \Pr\nolimits_p(\omega)\;
  P_{\mathrm{code}}\,K_{\tau,\omega}^\dagger K_{\tau,\omega}\,P_{\mathrm{code}} ,
\end{equation*}
a combination with nonnegative coefficients of the operators of \cref{lem:branch-native}. Each term lies in $\mathcal
N$ and $\mathcal N$ is a convex cone, so $F_A\in\mathcal N$, and normalising by $\pacc>0$ keeps the Bloch vector
inside the octahedron, which is $\Dstab(\Ehat)=0$. The decoder enters only through the choice of $A$, and
restricting the outer sum to $A$ deletes terms from a nonnegative combination, which a cone allows.

For the converse, let $\Ehat=\tfrac12(I+\bm v\cdot\bm\sigma)$ with $t:=\|\bm v\|_1\le1$, and exhibit a noiseless
member of the class, which is the case $p=0$ of the noise model. If $t=0$ the protocol measures $\Zbar$ and accepts
both outcomes, giving $F=I$ and normalised effect $I/2$. If $t>0$, the protocol draws a classical index by
\cref{eq:classmix} and then runs one of two branches. With probability $q=2t/(1+t)$, which lies in $[0,1]$ because
$t\le1$, it selects an axis $\bar P\in\{\Xbar,\Ybar,\Zbar\}$ with probability $|v_P|/t$, measures that logical Pauli
by a native Pauli--Wilson measurement, and accepts the outcome $\operatorname{sign}(v_P)$, the sign being arbitrary
when $v_P=0$. With probability $1-q$ it measures $\Zbar$ and accepts both outcomes. The accepted effect is
\begin{equation*}
  F\;=\;q\sum_{P}\frac{|v_P|}{t}\cdot\frac{I+\operatorname{sign}(v_P)\bar P}{2}\;+\;(1-q)\,I
  \;=\;\frac{1}{1+t}\bigl(I+\bm v\cdot\bm\sigma\bigr) ,
\end{equation*}
using $1-q/2=q/(2t)=1/(1+t)$, so the normalised effect is $\Ehat$. It is a valid effect, since
$\lambda_{\max}(F)=(1+\|\bm v\|_2)/(1+t)\le1$. Its acceptance probability, which \cref{sec:prelim-effects} fixes as
$\pacc=\tfrac12\Tr F$, the value on the maximally mixed logical input, is $1/(1+t)\ge1/2$ uniformly in $d$. Every cell
is a native Pauli--Wilson measurement and the index is classical data, so every branch is a stabilizer operation and
stabilizer, and every point of the octahedron is attained.
\end{proof}

The two directions meet exactly. \Cref{lem:branch-native} says the branches available to a stabilizer protocol are
nonnegative multiples of $I$ and of the native projectors $\tfrac12(I\pm\bar P)$, and the converse builds an arbitrary
octahedron point out of those same generators. The reachable set is the cone they generate, which is $\mathcal N$.

The two halves treat the noise differently, and the statement is arranged so that they meet. The forward inclusion
holds member by member and at every rate $p$, since \cref{lem:branch-native} holds for every fault configuration. The
converse exhibits the noiseless member $p=0$, and it has to, because at a fixed $p>0$ an undetectable logical fault
already pulls the accepted effect off the extreme rays. Writing
$\mathcal D_r(\rho)=(1-r)\rho+r\Zbar\rho\Zbar$ for such a fault, carried by a weight-$d$ physical representative of
$\Zbar$ at rate $r\le p^{\,d}$, an accepted effect $E_0=aI+x\Xbar+y\Ybar+z\Zbar\in\mathcal N$ is seen as
$aI+(1-2r)(x\Xbar+y\Ybar)+z\Zbar$, and matching the vertex $\tfrac12(I+\Xbar)$ would need $|x|=a/(1-2r)>a$, which
$\mathcal N$ forbids. The reachable set of the class is therefore the octahedron, with the boundary supplied by its
noiseless members.

\subsection{The threshold on a standard recovery round}\label{app:standard}

\begin{proof}[Proof of \cref{prop:standard}]
Write $S$ for the marked cells of an insertion configuration and $C_1,\dots,C_r$ for its components in
$\Gfrak_d$. The skeleton is $W_x=I$ with a one-dimensional environment, so $V_S$ carries $\Delta_x$ at the
cells of $S$ and the identity with no environment factor elsewhere. Expanding \cref{eq:zdiag} at every
cell of $S$, the configuration contributes
\begin{equation}\label{eq:standard-expand}
  E\;=\;\sum_{T\subseteq S}Z_T\otimes\ket{\alpha_T},\qquad
  Z_T=\prod_{x\in T}Z_{q(x)},\qquad
  \ket{\alpha_T}=\bigotimes_{x\in T}\ket{w_x}\bigotimes_{x\in S\setminus T}\ket{u_x} .
\end{equation}
The one cell per qubit hypothesis makes the $q(x)$ for $x\in T$ distinct, so $Z_T$ has weight $|T|$.

Fix a syndrome $s$. The incoming branch has its range in a single syndrome sector $s_{\mathrm{in}}$, so
$Z_TJ_hP$ lies in the sector $s_{\mathrm{in}}+\operatorname{syn}(Z_T)$ and the L\"uders projector $\Pi_s$
keeps exactly the $T$ with $\operatorname{syn}(Z_T)=s-s_{\mathrm{in}}$. This is the step that needs the
incoming branch to be syndrome-definite: for a branch whose range met several sectors, $\Pi_s$ would not
select on $\operatorname{syn}(Z_T)$ alone. Every measured syndrome carrying a nonzero branch therefore
lies in the affine coset $s_{\mathrm{in}}+\operatorname{im}(\operatorname{syn}_Z)$, and those are the
syndromes \cref{def:absorption} quantifies over. Which of them actually occur depends on the
configuration and on the vectors of \cref{eq:zdiag}, and the argument needs only the containment.

\emph{Within one configuration.} Let $T$ and $T'$ both survive, and suppose every component has fewer
than $d_Z$ cells. Then $Z_{T\triangle T'}$ commutes with every check and so lies in the $Z$-type
normaliser. No check meets two components, because a check of range $r_0$ lies in a ball of that radius
and any two of its qubits are therefore within $2r_0$, which would have joined them in $\Gfrak_d$. The
syndrome of $Z_{T\triangle T'}$ consequently splits over the components and vanishes on each. Each
restriction $Z_{(T\triangle T')\cap C_j}$ is a $Z$-type normaliser element of weight at most $|C_j|<d_Z$,
and $d_Z$ is the least weight of a $Z$-type logical operator, so it is a stabilizer. Write
$g=Z_{T\triangle T'}$ for their product, a stabilizer of $+1$ sign. It commutes with $Z_{T'}$, and on the
sector $s_{\mathrm{in}}$ it is a scalar $\epsilon_g(s_{\mathrm{in}})\in\{\pm1\}$, since a stabilizer
is $+1$ on the code space and picks up the commutation sign of the sector's Pauli representative. Hence
$Z_TJ_hP=\epsilon_g(s_{\mathrm{in}})\,Z_{T'}J_hP$, and the signs are the only bookkeeping the step leaves.

Every surviving term therefore equals one fixed $Z_{T_0}$ up to a sign, and with
$L_E=R_{h,s}^{\dagger}Z_{T_0}$,
\begin{equation}\label{eq:standard-coset}
  \Pi_s\,E\,J_hP\;=\;\bigl(R_{h,s}\,L_E\,J_hP\bigr)\otimes\ket{f_E},\qquad
  \ket{f_E}=\sum_{T\ \mathrm{surviving}}\epsilon_T\ket{\alpha_T} ,
\end{equation}
with $\epsilon_T$ the sign relating $Z_T$ to $Z_{T_0}$ on that sector. This is \cref{eq:coset} with
$c_E=1$, the signs having been absorbed into the environment vector.

\emph{The ambiguity group.} Let $T_0$ and $T_0'$ be the representatives of two configurations of the same
syndrome. Then $Z_{T_0\triangle T_0'}$ again lies in the $Z$-type normaliser, and on a code with one
logical qubit that normaliser is the $Z$-type stabilizer group extended by $\Zbar$, so the two
representatives differ by $I$ or $\Zbar$ modulo stabilizers. No size hypothesis enters here. Because
$R_{h,s}=R(s-s_{\mathrm{in}})$ is a $Z$-type Pauli whose syndrome is exactly that of the surviving $Z_T$,
each $L_E=R_{h,s}^{\dagger}Z_{T_0}$ is itself a $Z$-type normaliser element of trivial syndrome, hence
lies in $\{I,\Zbar\}$ modulo stabilizers and not merely in a coset of it. That group is abelian, as
\cref{def:absorption} requires.

\emph{\Cref{cond:components} at $d_Z$.} Take a configuration all of whose components have fewer than
$d_Z$ cells, which is the regime \cref{cond:components} speaks about. The dressed supports are disjoint by
the same check-separation argument.
Because no check meets two components, the syndrome constraint $\operatorname{syn}(Z_T)=s-s_{\mathrm{in}}$
splits into one constraint per component, involving only $T\cap C_j$, and the residual block $s_0$ is
independent of the configuration because the skeleton is the identity at every marked slot. Writing $s_j$
for the block of $s-s_{\mathrm{in}}$ carried by the checks that meet $C_j$, so that
$s-s_{\mathrm{in}}=\sum_js_j$, linearity of the section gives $R(s-s_{\mathrm{in}})=\prod_jR(s_j)$, and the
single fixed recovery therefore factorises over the components of every configuration at once.
\Cref{eq:standard-expand} factorises over components by construction, which is \cref{eq:componentwise}
with $c_j=1$, $L_j=R(s_j)^{\dagger}Z_{T_0\cap C_j}$, and $\ket{f_j}$ the signed sum of the local
environment tensors over the local fibre. The signs distribute because the eigenvalue of a stabilizer on
a fixed syndrome sector is a character of the stabilizer group, so
$\epsilon_{\prod_jg_j}(s_{\mathrm{in}})=\prod_j\epsilon_{g_j}(s_{\mathrm{in}})$ and each component's sign
belongs to its own $\ket{f_j}$, with no residual global factor. Terminality of the deviation environments
is a hypothesis of the setting and is not derived here.

\emph{The accepted fibre.} On the fibre $s=s_{\mathrm{in}}$ the surviving $T$ have
$\operatorname{syn}(Z_T)=0$, so by the same check-separation argument each $Z_{T\cap C_j}$ has trivial
syndrome and weight below $d_Z$, hence is a stabilizer, hence $Z_T$ is one. Here $R_{h,s}=R(0)=I$ and
every $L_E$ is trivial, so the ambiguity group on this fibre is $\{I\}$ and \cref{eq:absorb} holds for any
following free instrument. This is the fibre \cref{def:splitreadout} ends with a native $\Xbar$
measurement, and $M^X_x=\tfrac12(I+x\Xbar)$ commutes with every stabilizer, so it carries each syndrome
sector into itself and the sector-preservation clause of \cref{def:absorption} holds on this fibre.

\emph{The rejected fibres.} On a fibre with $s\neq s_{\mathrm{in}}$ the ambiguity group is $\{I,\Zbar\}$ as
above, and \cref{def:splitreadout} applies the Pauli correction $C=R(s-s_{\mathrm{in}})$ and then measures
$\Zbar$ natively. For $L\in\{I,\Zbar\}$ we have $CR_{h,s}L=\pm LCR_{h,s}$ and
$M^Z_zL=z^{[L=\Zbar]}M^Z_z$, so $M^Z_zC\,R_{h,s}L\,J_hP=\pm z^{[L=\Zbar]}\,M^Z_zC\,R_{h,s}J_hP$ for any
incoming branch, which is \cref{eq:absorb} as it is stated. The Paulis $R_{h,s}$ and $C$ carry each
syndrome sector onto another syndrome sector, which is what a correction does, and
$M^Z_z=\tfrac12(I+z\Zbar)$ commutes with every stabilizer and so carries each sector into itself.

Both fibres therefore discharge \cref{def:absorption} at $D=d_Z$. With \cref{cond:components} at $d_Z$ in hand and
\cref{cond:skeleton} holding for this skeleton, \cref{prop:absorption} gives \cref{eq:h4} at the threshold
$D=d_Z$, so $d_Z$ is a certified threshold and $\Dins\ge d_Z$.
\end{proof}

\begin{proof}[Proof of \cref{prop:family-magic}]
On the first slab $J_h=I$ and $s_{\mathrm{in}}=0$. With unitary marked cells the layer is
$U=\prod_x(\cos\theta_xI+i\sin\theta_xZ_{q(x)})=\sum_{T}c_TZ_T$ with
$c_T=\prod_{x\in T}(i\sin\theta_x)\prod_{x\notin T}\cos\theta_x$, and the environment is one-dimensional so
no trace remains to be taken. Projecting onto the zero syndrome keeps the $T$ with
$\operatorname{syn}(Z_T)=0$, that is those for which $Z_T$ lies in the $Z$-type normaliser, and each such
$Z_T$ acts on the code space as $I$ or as $\Zbar$ according to its class. Writing $\alpha$ for the sum of
the $c_T$ of the first kind and $\beta$ for the sum of the second,
$\Pi_0UP=(\alpha I+\beta\Zbar)P$. Since $\Xbar$ commutes with every stabilizer, $M^X_+$ commutes with
$\Pi_0$, and the accepted effect is
$F_A=PU^{\dagger}\Pi_0M^X_+\Pi_0UP=(\alpha I+\beta\Zbar)^{\dagger}M^X_+(\alpha I+\beta\Zbar)$, which is
\cref{eq:family-effect}.

Expanding with $\Zbar\Xbar\Zbar=-\Xbar$ and $\Xbar\Zbar=-i\Ybar$,
\begin{equation}\label{eq:family-bloch}
  F_A=\tfrac12\Bigl[(a+b)I+(a-b)\Xbar+2\operatorname{Im}(\bar\alpha\beta)\,\Ybar
  +2\operatorname{Re}(\bar\alpha\beta)\,\Zbar\Bigr],\qquad a=|\alpha|^2,\ b=|\beta|^2 .
\end{equation}
The normalised Bloch vector has
$\|\bm v\|_1=\bigl[\,|a-b|+2|\operatorname{Re}(\bar\alpha\beta)|+2|\operatorname{Im}(\bar\alpha\beta)|\,\bigr]/(a+b)$,
and $|\operatorname{Re}w|+|\operatorname{Im}w|\ge|w|$ gives
$\|\bm v\|_1\ge(|a-b|+2\sqrt{ab})/(a+b)$. For $a>b>0$ the numerator exceeds the denominator by
$2\sqrt b(\sqrt a-\sqrt b)>0$, and for $b>a>0$ by $2\sqrt a(\sqrt b-\sqrt a)>0$, so $\Dstab(\Ehat)>0$
whenever $\alpha\beta\neq0$ and $|\alpha|\neq|\beta|$.

Finally $\beta$ is a sum over $T\subseteq\mathcal R_d$ with $Z_T$ logical. If no subset of the marked set
carries a $Z$-type logical operator the sum is empty and $\beta=0$, so $\beta\neq0$ requires the marked
set to contain the support of one, which has weight at least $d_Z$.

For the converse, let the marked set contain such a support and let the angles run over
$(-\pi/2,\pi/2)$, so that every $\cos\theta_x\neq0$. Put $u_x=\tan\theta_x$ and factor
$\prod_x\cos\theta_x$ out of both sums, which leaves
\begin{equation}\label{eq:generic-poly}
  \alpha=\Bigl(\prod_x\cos\theta_x\Bigr)\,\tilde\alpha(u),\qquad
  \beta=\Bigl(\prod_x\cos\theta_x\Bigr)\,\tilde\beta(u),\qquad
  \tilde\alpha(u)=\!\!\sum_{T:\,Z_T\in\mathcal G}\!\! i^{|T|}u^T,\quad
  \tilde\beta(u)=\!\!\sum_{T:\,Z_T\ \mathrm{logical}}\!\! i^{|T|}u^T,
\end{equation}
with $u^T=\prod_{x\in T}u_x$. Distinct $T$ give distinct monomials, so $\tilde\alpha$ and $\tilde\beta$
are polynomials in $u$ whose coefficients are the $i^{|T|}$ and are never zero. Hence $\tilde\alpha$ is not
the zero polynomial, since $T=\varnothing$ contributes the constant $1$, and $\tilde\beta$ is not the zero
polynomial exactly when some logical $T$ is available, which is the hypothesis. The real polynomial
$|\tilde\alpha|^2-|\tilde\beta|^2$ in the real variables $u$ is not identically zero either, since it
takes the value $1$ at $u=0$. Each of $\alpha=0$, $\beta=0$ and $|\alpha|=|\beta|$ therefore confines $u$
to the zero set of a polynomial that does not vanish identically, which is closed and of measure zero, and
$\theta\mapsto u$ is a diffeomorphism of $(-\pi/2,\pi/2)^{|\mathcal R_d|}$ onto $\mathbb R^{|\mathcal
R_d|}$. The angles at which the accepted effect fails to carry magic are therefore contained in a closed
set of measure zero, closed relative to that cube.
\end{proof}

\Cref{eq:family-effect,eq:family-bloch} were checked against a direct state-vector simulation of the whole
protocol on the distance-3 rotated surface code, at four marked sets and random angles, agreeing to
$10^{-15}$. The marked set consisting of two qubits, which contains no logical support, gave
$\beta=0$ and $\Dstab=0$; the minimum-weight logical column gave $\Dstab=3.3\times10^{-2}$; and marking
all nine qubits gave $\Dstab=0.38$, so the family is not one whose accepted effect is free. The
logical column at $\theta_x=\pi/4$, where $|\alpha|=|\beta|$, gave $\Dstab=0$ exactly, which is the
edge the positivity criterion excludes. Over $3000$ random angle vectors on that same marked set, drawn
uniformly from $(-\pi/2,\pi/2)$, none gave zero magic, the smallest value seen being $1.0\times10^{-5}$,
which is what the measure-zero claim predicts.

The two size-dependent steps were checked directly on the rotated surface code at $d=3$ and $d=5$, which
the accompanying scripts first validate to be $[[d^2,1,d]]$ codes with $d_X=d_Z=d$ and commuting checks.
Within one configuration, on a random sample of $300$ separated configurations at each distance, every
syndrome fibre with more than one member had all its members equal modulo $Z$-stabilizers, over $438$ and
$11\,082$ such fibres. The tightness of \cref{rem:standard-tight} was confirmed by exhibiting a
weight-$d$ column as a $Z$-type logical. For the ambiguity group, containment in $\{I,\Zbar\}$ is a rank
statement and not a sampling question, since $\dim\ker H_X-\operatorname{rank}H_Z=1$ leaves exactly two
classes, and the scripts print those ranks. What the sample shows is that the group is not trivial: over
$21$ and $120$ sampled configurations, syndrome fibres containing terms from two different configurations
supplied $323$ and $19\,518$ pairs differing by a stabilizer and $34$ and $422$ pairs differing by
$\Zbar$.

\subsection{The weak-string protocol}\label{app:localclass-construction}

This subsection proves \cref{thm:construction}. Throughout, $d$ is odd, $\gamma=\{q_1,\dots,q_d\}$ is a
bare non-self-intersecting minimum-weight logical $\Zbar$ string, $\theta_d=\kappa/d$ with
$0<\kappa<1$, and $c=\cos\theta_d$, $s=\sin\theta_d$. The protocol is \cref{def:weakstring}. The free
skeleton is the identity at every marked cell, so
\begin{equation}\label{eq:weak-delta}
  \Delta_j\;=\;U_j-I\;=\;(c-1)I+is\,Z_j ,
  \qquad \eta_j=\|U_j-I\|=2\sin(\theta_d/2)=\frac{\kappa}{d}+O(d^{-3}),
\end{equation}
and $\varepsilon_j\le\|\mathcal U_{\theta_d}-\id\|_\diamond=2\sin\theta_d=\Theta(1/d)$. Both cells are
unitary with one-dimensional environments, so there is no dilation gauge freedom and
\cref{rem:amplitude} applies with $\eta_j=\Theta(\varepsilon_j)$.

\paragraph{The kernel property.} If a nonempty $R\subsetneq\gamma$ had trivial syndrome then $Z_R$ would
lie in the normaliser. If $Z_R$ were a nontrivial logical operator its weight would be below $d$, and if
$Z_R$ were a stabilizer then $Z_\gamma Z_R$ would represent the same logical operator on
$\gamma\setminus R$, again of weight below $d$. Both contradict the code distance, so
\begin{equation}\label{eq:kernel}
  \operatorname{syn}(Z_T)=\operatorname{syn}(Z_{T'})\ \text{ for }T,T'\subseteq\gamma
  \quad\Longleftrightarrow\quad T\bigtriangleup T'\in\{\varnothing,\gamma\} .
\end{equation}
A syndrome fibre inside $\gamma$ is therefore a single support or a complementary pair.

\paragraph{The setting and the class hypotheses.} \Cref{cond:one-block}, which fixes the setting rather
than entering any bound, holds because there is one block and the check
ancillas are fresh local stabilizer ancillas. \Cref{cond:skeleton} holds because the marked layer has
depth one and no marked cell propagates beyond its own data qubit before the syndrome round, whose
outcomes are stored in orthogonal registers at the raw-record cut. For \cref{cond:degree}, take
$\Gfrak_d$ to carry the path adjacency $q_j\sim q_{j+1}$ together with the bounded-range edges from
overlapping check gadgets. Each vertex acquires $O(1)$ edges, so the degree is bounded, while
$\Gfrak_d[\gamma]$ is connected, which is what makes $\gamma$ a single component of size $d$. Decoding,
the test of $s=0$ and the terminal measurement all happen after the cut and contribute no edges, which
is the point of stopping the light cones there.

\paragraph{Branch factorisation and the threshold.} We verify \cref{def:absorption} and
\cref{cond:components}, and appeal to \cref{prop:absorption}. \Cref{cond:components} is immediate for this
protocol, because the marked cells are single-qubit rotations on distinct data qubits in one layer, so the dressed support of
a component is the set of check gadgets touching its qubits, and components that are non-adjacent in $\Gfrak_d$ touch
disjoint gadget sets. The syndrome of $Z_T$ is the pair of stabilizers at the two ends of each run of $T$, so it
splits over components, the insertion monomial $\prod_{j\in S}\Delta_j$ is a product over components by construction,
and the classical processing past the cut is the test of $s=0$ together with the syndrome-dependent
correction $R(s)$ on the rejected branch. The test reads one bit of the whole syndrome, and $R$ is the
fixed linear section, so neither builds a coherent function of two components. For an insertion set $S\subseteq\gamma$,
$\prod_{j\in S}\Delta_j=\sum_{T\subseteq S}(c-1)^{|S|-|T|}(is)^{|T|}Z_T$, and projecting onto syndrome
$s_0$ retains only the $T$ with $\operatorname{syn}(Z_T)=s_0$. By \cref{eq:kernel} the ambiguity group is
contained in $\{I,\Zbar\}$. For $s_0\ne0$ the two representatives differ by $\Zbar$, and the correction $R(s_0)$ followed by the
terminal $\Zbar$ measurement absorbs it through $M_z^Z\Zbar=zM_z^Z$, which is \cref{eq:absorb} with
$\chi(\Zbar)=\pm z$. For $s_0=0$ and $S\subsetneq\gamma$ the only syndrome-free $T\subseteq S$ is
$T=\varnothing$, so the branch is $(c-1)^{|S|}M_x^XP$ and the ambiguity group is trivial. Hence every
insertion set whose components are smaller than $d$ lies on one free ray, so $d$ is a certified
threshold. At $S=\gamma$ the factorisation fails, since
\begin{equation}\label{eq:full-gamma}
  M_+^X\Bigl[(c-1)^dI+(is)^d\Zbar\Bigr]P
\end{equation}
has two summands that are not proportional, $M_+^X$ and $M_+^X\Zbar$ being independent. Note that
\cref{eq:full-gamma} is built from the insertion monomial $\prod_j\Delta_j$ and not from the full layer
$U_\gamma$, whose coefficients would be $c^d$ and $(is)^d$. Both have non-proportional parts, but the
threshold is a statement about insertion sets, so it is the former that is relevant. Since $\gamma$ is a
single component of size $d$ this is the first failure, giving $\Dins=d$ and \cref{cond:dins}.

\paragraph{The accepted effect.} By \cref{eq:kernel} the zero-syndrome branch of the full layer is
$PU_\gamma P=c^dI+(is)^d\Zbar$. Writing $\alpha=c^d$ and $\beta=\pm s^d$ for odd $d$, the zero-syndrome
logical operation is $A=\alpha I+i\beta\Zbar$, and accepting the $+1$ outcome of $\Xbar$ gives
$F_A=A^\dagger\tfrac12(I+\Xbar)A$. Using $\Zbar\Xbar\Zbar=-\Xbar$ and $\Xbar\Zbar=-i\Ybar$,
\begin{equation*}
  A^\dagger\Xbar A=(\alpha^2-\beta^2)\Xbar+2\alpha\beta\,\Ybar,
  \qquad A^\dagger A=(\alpha^2+\beta^2)I ,
\end{equation*}
which is \cref{eq:constr-effect}. Then $\pacc=\tfrac12\Tr F_A=(\alpha^2+\beta^2)/2$, and with
$t=|\beta|/\alpha=\tan^d\theta_d$ the normalised Bloch vector is
$\bigl((1-t^2)/(1+t^2),\ \pm2t/(1+t^2),\ 0\bigr)$, whose $\ell_1$ norm exceeds one by
$2t(1-t)/(1+t^2)$ for $0<t<1$. Multiplying by $\pacc$ and simplifying gives
$\magicc(F_A)=\alpha^2t(1-t)=c^ds^d-s^{2d}$, which is \cref{eq:constr-magic}.

\paragraph{Asymptotics.} With $\theta_d=\kappa/d$ we have $t=\tan^d(\kappa/d)=(\kappa/d)^d[1+o(1)]$ and
$c^{2d}=1-O(1/d)$, while $s^{2d}$ is superexponentially small, so $\pacc=\tfrac12-O(1/d)$ and
$\magicc(F_A)=(\kappa/d)^d[1+o(1)]$, which is \cref{eq:constr-asymp}. On the other side,
\cref{eq:lambda-count} with $m=d$, $\Dins=d$ and $\eta_{\max}=\kappa/d+O(d^{-3})$ gives
$\Lambda\le d(C_\zeta\kappa/d)^d(1+o(1))$, so \cref{thm:suppression} yields
$-\log\magicc(F_A)\ge d\log d-O(d)$. The two agree in the coefficient of $d\log d$, so the bound is
attained at leading order on this family. The agreement is in the amplitude form of the theorem. Passing
through the generic $\eta=O(\sqrt\varepsilon)$ estimate would give the weaker exponent
$\tfrac12d\log d$, and the family does not show that weaker exponent to be attained.